\newcommand{\bmat}[1]{\begin{bmatrix}#1\end{bmatrix}}
\newcommand{\smat}[1]{\left[\begin{smallmatrix}#1\end{smallmatrix}\right]}
\newcommand{\real}{\mathbb{R}}
\newcommand{\cl}{\rm{cl}}
\DeclareMathOperator{\sat}{sat}
\DeclareMathOperator{\sign}{sign}
\DeclareMathOperator{\blkdiag}{blkdiag}
\newtheorem{assumption}{Assumption}
\newtheorem{problem}{Problem}
\newtheorem{theorem}{Theorem}
\newtheorem{lemma}{Lemma}
\newtheorem{definition}{Definition}
\newcounter{remarkcounter}
\newenvironment{remark}{\noindent\refstepcounter{remarkcounter}{\bfseries Remark~\arabic{remarkcounter}.}}{\hspace*{0pt}\hfill$\blacksquare$}
\newcounter{examplecounter}
\newenvironment{example}{\noindent\refstepcounter{examplecounter}{\bfseries Example~\arabic{examplecounter}.}}{\hspace*{0pt}\hfill$\triangleleft$}
\begin{document}



\title{\LARGE{Stabilization of Nonlinear Systems with State-Dependent Representation: From Model-Based to Direct Data-Driven Control}}

\author{Lidong Li, Rui Huang, and Lin Zhao
\thanks{Lidong Li, Rui Huang, and Lin Zhao are with the Department of Electrical and Computer Engineering, National University of Singapore, Singapore 117583 (e-mail: \texttt{\{li.ld, ruihuang, elezhli\}@nus.edu.sg}).}
}

\maketitle


\begin{abstract}
This paper presents a novel framework for stabilizing nonlinear systems represented in state-dependent form. We first reformulate the nonlinear dynamics as a state-dependent parameter-varying model and synthesize a stabilizing controller offline via tractable linear matrix inequalities (LMIs). The resulting controller guarantees local exponential stability, maintains robustness against disturbances, and provides an estimate of the region of attraction under input saturation. We then extend the formulation to the direct data-driven setting, where a known library of basis functions represents the dynamics with unknown coefficients consistent with noisy experimental data. By leveraging Petersen’s lemma, we derive data-dependent LMIs that ensure stability and robustness for all systems compatible with the data. Numerical and physical experimental results validate that our approach achieves rigorous end-to-end guarantees on stability, robustness, and safety directly from finite data without explicit model identification.
\end{abstract}
\begin{IEEEkeywords}
    nonlinear systems, state-dependent representation, data-driven control, region of attraction, robustness against disturbances, input saturation
\end{IEEEkeywords}

\section{Introduction}

The analysis and control of nonlinear systems are difficult, especially when the  mathematical models describing the system dynamics are unavailable. 
In such cases, data-based control frameworks provide a way to reduce reliance on system models.
One framework, known as indirect data-based control, first identifies the model from data and then applies model-based control.
Another framework is direct data-driven control, which synthesizes controllers directly from data without explicitly identifying the system model. These approaches include adaptive control~\cite{Astolfi2019Adaptive}, virtual reference feedback tuning~\cite{Campi2006VRFT}, iterative learning control~\cite{moore1993iterative}, and reinforcement learning~\cite{Lewis2018RL}. For a comprehensive overview, we refer interested readers to the survey articles~\cite{HOU2013Survey, PILLONETTO2014survey, recht2019tour}.
Among direct data-driven approaches, those built on \emph{Willems et al.'s fundamental lemma} (hereafter referred to as fundamental lemma)~\cite{WILLEMS2005325} have attracted growing interest and been extensively studied for linear time-invariant (LTI) systems~\cite{Persis2020, Henk2020Informativity, Lidong2024IO, Julian2020RobustDD, Florian2021Reviews}. These approaches can provide stability, robustness, optimality, safety, and many other desirable guarantees.
However, extending linear results to nonlinear systems remains challenging, and theoretical guarantees may fail to generalize due to the intricacies of nonlinear dynamics.



\medskip
\noindent\emph{Review of Fundamental-Lemma–Based Methods}

By utilizing the fundamental lemma, \cite{Persis2020} represents LTI systems via their finite input-state data and solves data-dependent linear matrix inequalities (LMIs) to obtain a controller. 
It has also been extended to nonlinear systems by approximating the dynamics as a first-order linear term, while treating the remaining nonlinear part as process noise. The resulting controller guarantees local stability~\cite{Persis2020}.
However, this approach requires restrictive assumptions on the remainder term, and does not provide an estimate of the region of attraction.
The approaches in~\cite{Guo2022SOS, AndreaPetersen2022, huang2025SOS, Sznaier2021SOS} addressed polynomial systems and developed globally asymptotically stabilizing controllers using sum-of-squares (SOS) programming; however, the SOS tool often suffers from scalability issues, high computational cost, and numerical sensitivity in practice.
Other works addressing more general nonlinear systems typically employ various function approximations, such as polynomials~\cite{Guo2023Taylor, Martin2024SOS, Rapisarda202poly}, Gaussian processes~\cite{Christian2021statistical}, or kernel regression~\cite{Henk2023Kernel, Zhongjie2023Kernel}. These approximations are typically local~\cite{Martin2023survey}, and the resulting controllers can be conservative due to approximation errors and data noise. 
The Koopman operator paradigm can provide a global linearization of the nonlinear system but rely critically on the choice of observable functions and the accuracy of finite-dimensional truncations~\cite{BEVANDA2021Reviews}, which may limit stability and robustness guarantees for strongly nonlinear or high-dimensional systems~\cite{Robin2025Koopman3, Robin2025Koopman1}.
Moreover, recent works also assume a library of known basis functions for unknown nonlinear vector fields to avoid analyzing approximation errors. It is employed in polynomial systems~\cite{Sznaier2021SOS}, data-driven feedback linearization~\cite{Lucas2021Linearizable, Mohammad2023Linearizable, Claudio2024Linearizable}, and nonlinearity cancellation via data~\cite{Claudio2023Cancel, Persis2023nonlinear, Abolfazl2025Nonlinear, Nima2025Versatile}.
The aforementioned methods construct data-driven controllers from a batch of data in an offline manner. In contrast, \cite{Dai2021SDRE} introduces an online iterative control scheme, in which a state-dependent Riccati equation is solved through data-based LMIs at each control step. Yet, the feasibility of online LMIs is not guaranteed during runtime. This issue has been partially addressed in~\cite{Xiaoyan2025Online} in the case of noise-free data.

Despite extensive research over the years~\cite{Florian2021Reviews, Martin2023survey, Persis2023nonlinear}, direct data-driven nonlinear control remains an open problem.
The key challenges include designing controllers with guaranteed stability from finite data samples and estimating the region of attraction through computationally tractable algorithms.
Furthermore, data noise is inevitable and induces uncertainties in data-driven control. 
The control law should ensure robustness to such uncertainties.
Other closed-loop properties from data, such as input-to-state stability~\cite[\S 7]{Nonlinear2023Kellett}\cite{JIANG2001ISS} and region of attraction subject to input saturation~\cite[\S 8.2]{Nonlinear2023Kellett}\cite{Sophie2011Saturation}, are important yet remain underexplored for nonlinear systems without model knowledge.


\medskip
\noindent\emph{Contributions}

In this paper, we first propose a new model-based controller design method using the state-dependent representation of the nonlinear system.
For this type of representation, model-based method typically treats the nonlinear system as a linear system at each control step, and the controller is then obtained by solving state-dependent Riccati equation online~\cite{Cimen2008SDRE}. However, a heuristic extension of this method to the data-driven case often suffers from recursive infeasibility, especially when the data is affected by noise.
In contrast, our approach views this representation as a form of state-dependent parameter-varying model and incorporates robust control techniques to synthesize the controller offline. The resulting controller guarantees local stability, with an accompanying estimate of the region of attraction.
The robustness of the closed-loop system against disturbances is rigorously analyzed.
The region of attraction, taking into account input saturation, is also investigated.

We further extend the model-based method to direct data-driven controller design.
We assume a library of known basis functions, and the state-dependent representation can be expressed as a linear combination of them, where the coefficients are unknown.
Due to noise in the experimental data, there exists a set of coefficients consistent with the data. Since the true coefficients are unlikely to be distinguished, the controller should stabilize all systems in the data-consistent set.
To this end, we make use of the Petersen's lemma~\cite{AndreaPetersen2022} and directly synthesize the controller from data-dependent LMIs conveniently. 
The closed-loop properties, as discussed in the model-based case, can also be analyzed solely from data.

We summarize the main contributions as follows.
\begin{itemize}[nosep,noitemsep,left=0pt]
    \item Our controller is directly synthesized from noisy data. We not only ensure stability guarantees, but also provide data-driven analyses of robustness of the closed-loop system against disturbances and the estimate of the region of attraction under input saturation. Such analyses are rarely found in the literature on data-driven nonlinear system control.
    \item Our framework is applicable to more general nonlinear systems than feedback linearization and nonlinearity cancellation methods, which is demonstrated via numerical examples.
    \item Compared to solving the state-dependent Riccati equation online, our framework designs the controller offline, thereby avoiding the issue of recursive infeasibility that may arise during online operation.
\end{itemize}



\medskip
\noindent\emph{Outline}

Section~\ref{sec:ModelBased} presents the model-based controller design.
Specifically, Section~\ref{sec:ModelBasedPri} introduces relevant preliminaries.  
Model-based stabilizing controller and the accompanying estimate of region of attraction is presented in Section~\ref{sec:MBcontroller}.
Section~\ref{sec:robustness} analyzes the robustness of the closed-loop system against disturbances.
Section~\ref{sec:InputSatur} designs the controller for systems with input saturation.
Section~\ref{sec:DataBased} adapts the model-based controller design to the direct data-driven framework.
Specifically, Section~\ref{sec:DataBasedPri} includes relevant preliminaries.
Data-driven stabilizing controller and the associated estimate of region of attraction is presented in Section~\ref{sec:DBcontroller}.
Section~\ref{sec:robustness_data} analyzes the robustness of the closed-loop system from data.
Section~\ref{sec:InputSatur_data} considers the system subject to input saturation.
The estimate of region of attraction is revisited and discussed in Section~\ref{sec:roa}.
To highlight our contributions, we conduct a thorough comparison with existing methods in Section~\ref{sec:compare}, covering both literature review and numerical evaluation.
Finally, our method is validated by a physical experiment in Section~\ref{sec:simulation}. 
All experimental codes, data and demonstrations are available at \cite{github:AxxBxu}.


\medskip
\noindent\emph{Notation and Definition}

Denote the set of real numbers as $\mathbb{R}$, and the set of natural numbers as $\mathbb{N}$.
The identity matrix of size $n$ and the zero matrix of size $m \times n$ are denoted by $I_n$ and $0_{m \times n}$: 
the indices are dropped when no confusion arises.
Let $(v_1,\dots, v_n)$ denote the stacked vector $[v_1^\top \ \dots \ v_n^\top]^\top$.
For a symmetric matrix $M$, 
$\lambda_{\max}(M)$ and $\lambda_{\min}(M)$ denote the largest and smallest eigenvalue of $M$.
The 2-norm of a vector $v$ is $|v|$,
and the induced 2-norm of a matrix $M$ is $\|M\|$. 
For matrices $M$, $N$ and $O$ of compatible dimensions, 
we abbreviate $MNO(MN)^{\top}$ to $MN \cdot O [\ast]^{\top}$ or $(MN)^{\top}OMN$ to $[\ast]^{\top} O \cdot MN$,
where the dot clarifies unambiguously that $MN$ are the terms to be transposed.
For a symmetric matrix $\left[\begin{smallmatrix} M & N \\ N^{\top} & O \end{smallmatrix}\right]$, 
we may use the shorthand writing 
$\left[\begin{smallmatrix} M & N \\ \ast & O \end{smallmatrix}\right]$ or
$\left[\begin{smallmatrix} M & \ast \\ N^{\top} & O \end{smallmatrix}\right]$.
For any matrix $M \in \real^{m \times n}$, notation $M_{(i)}$ for $i = 1,\dots,m$ stands for $i$-th row of $M$.
A matrix $G$ of size $m \times n$ can also be represented as $[g_{ij}]_{m \times n}$, 
where $g_{ij}$ denotes the $(i,j)$-th entry of $G$.
For the signal $\{ h(k) \}_{k=0}^{\infty}$, its $l_{\infty}$ norm is defined as $| h |_{\infty} := \sup_{k \geq 0} | h(k) | $.
The class $\mathcal{K}$ and $\mathcal{KL}$ functions are defined in the same manner as in \cite[Definition 1.6, 1.9]{Nonlinear2023Kellett}.
The function $\blkdiag(A_1,\dots,A_N)$ constructs a block diagonal matrix by placing $A_1,\dots,A_N$ along the diagonal.
We denote the ball by $\mathcal{B}_{r} := \{ x \in \real^{n} : |x| \leq r \}$
and the ellipsoid by $\mathcal{E}_{M,\tau} := \{ x \in \real^{n} : x^{\top} M x \leq \tau \}$.

\begin{definition}
Let $x = 0$ be an equilibrium point of the system $x^{+} = f(x)$.  
A set $\mathcal{R} \subset \mathbb{R}^n$ is called a region of attraction (ROA) of the system relative to the equilibrium point $x = 0$ if, for every initial state $x(0) \in \mathcal{R}$, the solution satisfies $\lim_{t \to \infty} x(t) = 0$. \cite[Definition 2.28]{Nonlinear2023Kellett}
\end{definition}

\section{Model-Based Controller}\label{sec:ModelBased}

\subsection{Problem Formulation}\label{sec:ModelBasedPri}
We consider the discrete-time nonlinear system with state-dependent representation
\begin{equation}\label{eq:nSYS}
    x^{+} = A(x)x + B(x) u,
\end{equation}
where $x \in \real^{n}$ is the state; 
$x^{+}$ is the forward shifting, i.e., $x^{+}(k) := x(k+1)$ for $k \in \mathbb{N}$;
$u \in \real^{m}$ is the control input.
For the state dependent matrices 
\begin{equation}\label{eq:entries}
    A(x) = [a_{ij}(x)]_{n \times n}
    \ \text{ and } \
    B(x) = [b_{ij}(x)]_{n \times m},
\end{equation}
we assume that the entries $a_{ij} : \mathbb{X} \mapsto \real$ and $b_{ij} : \mathbb{X} \mapsto \real$ 
are continuous functions on the domain $\mathbb{X} \subseteq \real^{n}$, and $\mathbb{X}$ should contain the origin $\{0\}$.

The problem of this section is as follows.

\begin{problem}\label{problem:modelBased}
Design a state feedback controller $u = Kx$ such that 
the origin $x=0$ of the closed-loop system $x^{+} = \big( A(x) + B(x)K \big) x$ is locally exponentially stable.  
\end{problem}

\subsection{Model-Based Controller Design}\label{sec:MBcontroller}

The following lemma lays the groundwork for solving Problem~\ref{problem:modelBased}.

\begin{lemma}\label{lema:modelBasedSolution}
Given a positive real number $r$, if there exist $P \succ 0$, $\epsilon > 0$ and $K$ such that
\begin{subequations}\label{eq:LyaCondition}
\begin{align}
    & \big[ A(x) + B(x)K \big]^{\top} P \big[ A(x) + B(x)K \big] - P \preceq - \epsilon I_n \label{eq:LyaCondition1} \\
    & \hspace{11mm} \forall x \in \mathcal{B}_{r} := \{ x \in \real^{n} : |x| \leq r \} \subseteq \mathbb{X},
\end{align}
\end{subequations}
then there exists a region of attraction 
\begin{subequations}
\begin{align}
    \mathcal{B}_{r_0} := \{ x \in \real^{n} : |x| \leq r_0 \}
\end{align}
where
\begin{align}
    r_0 = \min\left\{ r, \ \sqrt{ \frac{\lambda_{\min}(P)} {\lambda_{\max}(P)-\epsilon} } \cdot r \right\}, 
    \label{eq:modelBased_r0}
\end{align}
\end{subequations}
and the closed-loop system $x^{+} = \big( A(x) + B(x)K \big) x$
starting from any $x(0) \in \mathcal{B}_{r_0}$ 
will exponentially converge to the origin.
\end{lemma}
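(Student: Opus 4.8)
The plan is to use the quadratic function $V(x) = x^\top P x$ as a Lyapunov function for the closed-loop map $x^+ = (A(x)+B(x)K)x$ and to exploit the decrease condition \eqref{eq:LyaCondition1}. Since \eqref{eq:LyaCondition1} is only assumed on $\mathcal{B}_r$, the first order of business is to certify that every trajectory started in $\mathcal{B}_{r_0}$ remains in $\mathcal{B}_r$; only then does the decrease estimate apply along the whole trajectory. As a preliminary remark I would note that \eqref{eq:LyaCondition1} forces $P \succeq \epsilon I_n$ (because $[A(x)+B(x)K]^\top P [A(x)+B(x)K] \succeq 0$), hence $\lambda_{\max}(P) \ge \lambda_{\min}(P) \ge \epsilon > 0$, so that $\lambda_{\max}(P)-\epsilon \ge 0$ and the quantity under the square root in \eqref{eq:modelBased_r0} is well defined, with the degenerate case $\lambda_{\max}(P)=\epsilon$ absorbed by the $\min$ with $r$.

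The invariance step is the crux. I would argue by induction on $k$ that if $x(0),\dots,x(k-1)\in\mathcal{B}_r$ then $x(k)\in\mathcal{B}_r$. Telescoping \eqref{eq:LyaCondition1} evaluated along $x(0),\dots,x(k-1)$ gives $V(x(k)) \le V(x(0)) - \epsilon\sum_{j=0}^{k-1}|x(j)|^2$. Combining this with $\lambda_{\min}(P)|x(k)|^2 \le V(x(k))$ and $V(x(0)) \le \lambda_{\max}(P)|x(0)|^2$, and discarding the nonnegative terms $\epsilon|x(j)|^2$ for $j\ge 1$, one obtains $\lambda_{\min}(P)|x(k)|^2 \le (\lambda_{\max}(P)-\epsilon)|x(0)|^2 \le (\lambda_{\max}(P)-\epsilon) r_0^2 \le \lambda_{\min}(P) r^2$, where the last inequality is precisely the definition of $r_0$ in \eqref{eq:modelBased_r0}; hence $|x(k)| \le r$. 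The base case is immediate since $x(0)\in\mathcal{B}_{r_0}\subseteq\mathcal{B}_r$ (as $r_0\le r$). I expect this telescoping-and-discarding argument to be the main subtlety: the naive bound $V(x(k))\le V(x(0))$ alone only yields the weaker radius $\sqrt{\lambda_{\min}(P)/\lambda_{\max}(P)}\,r$, and it is retaining the extra $-\epsilon|x(0)|^2$ from the first step that produces the sharper $\lambda_{\max}(P)-\epsilon$ in the denominator of \eqref{eq:modelBased_r0}.

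Once invariance of $\mathcal{B}_r$ along every such trajectory is established, \eqref{eq:LyaCondition1} holds at each $x(k)$, so $V(x(k+1)) \le V(x(k)) - \epsilon|x(k)|^2 \le (1 - \epsilon/\lambda_{\max}(P))\, V(x(k))$. Setting $\rho := 1 - \epsilon/\lambda_{\max}(P) \in [0,1)$ and iterating gives $V(x(k)) \le \rho^{k} V(x(0))$, and the standard sandwich $\lambda_{\min}(P)|x(k)|^2 \le V(x(k)) \le \rho^{k}\lambda_{\max}(P)|x(0)|^2$ yields $|x(k)| \le \sqrt{\lambda_{\max}(P)/\lambda_{\min}(P)}\,\rho^{k/2}|x(0)|$ for every $x(0)\in\mathcal{B}_{r_0}$. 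This is exponential convergence to the origin from $\mathcal{B}_{r_0}$, so $\mathcal{B}_{r_0}$ is a region of attraction and the claim follows; the only remaining bookkeeping is the elementary eigenvalue bounds $\lambda_{\min}(P)|x|^2 \le x^\top P x \le \lambda_{\max}(P)|x|^2$ valid for $P\succ 0$.
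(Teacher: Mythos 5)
Your proposal is correct and follows essentially the same route as the paper: the Lyapunov function $V(x)=x^{\top}Px$, an inductive argument that trajectories from $\mathcal{B}_{r_0}$ stay in $\mathcal{B}_r$ (with the key inequality $(\lambda_{\max}(P)-\epsilon)\,r_0^2\le\lambda_{\min}(P)\,r^2$ doing the work, exactly as in the paper's $\sqrt{\mu\lambda_{\max}(P)/\lambda_{\min}(P)}\,r_0\le r$ step), and then the standard geometric decay of $V$ to conclude exponential convergence. Your telescoping-and-discarding formulation merely unifies the paper's two cases into one inequality; this is a cosmetic streamlining, not a different argument.
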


\begin{proof}
Let $V(x) := x^{\top} P x$ and $P \succ 0$. 
For all $x \in \mathcal{B}_{r}$, 
\begin{align*}
    & V(x^{+}) - V(x) 
    \\
    & = x^{\top} \Big( [A(x) + B(x)K]^{\top} P [A(x) + B(x)K] - P \Big) x 
    \\
    & \overset{\eqref{eq:LyaCondition}}{\leq} - \epsilon \, |x|^{2} 
    \leq - \frac{\epsilon}{\lambda_{\max}(P)} V(x) .
\end{align*} 
The above inequalities show that \eqref{eq:LyaCondition} implies
\begin{align}\label{eq:LyaDecrease}
    V(x(k+1)) \leq \mu V(x(k)) \ \text{ with } \ \mu := 1 - \frac{\epsilon}{\lambda_{\max}(P)}
\end{align} 
for all $x \in \mathcal{B}_{r}$.
Note that $0 < \epsilon \leq \lambda_{\max}(P)$, thus $0 \leq \mu < 1$.
Next we claim that, given \eqref{eq:LyaDecrease}, if $x(0) \in \mathcal{B}_{r_0}$, 
then $x(k) \in \mathcal{B}_{r}$ for all $k \geq 1$.
This claim is proven in two cases. \\
Case (i): $\sqrt{ {\lambda_{\min}(P)} / \left( {\lambda_{\max}(P)-\epsilon} \right) } \leq 1$. 
Then $r_0 \leq r$, and $x(0) \in \mathcal{B}_{r_0} \subseteq \mathcal{B}_r$.
According to \eqref{eq:LyaDecrease}, $V(x(1)) \leq \mu V(x(0))$, which implies
\begin{align*}
    |x(1)| & \leq \sqrt{ \frac{\mu \lambda_{\max}(P)}{\lambda_{\min}(P)} } |x(0)| 
    \leq \sqrt{ \frac{\mu \lambda_{\max}(P)}{\lambda_{\min}(P)} } r_0 
    \\
    & = \sqrt{ \frac{\mu \lambda_{\max}(P)}{\lambda_{\min}(P)} } 
    \sqrt{ \frac{\lambda_{\min}(P)} {\lambda_{\max}(P)-\epsilon} } 
    \cdot r 
    = r ,
\end{align*}
and means $x(1) \in \mathcal{B}_r$.
Let $x(i) \in \mathcal{B}_{r}$ for $i = 1,\dots,k$; 
this condition together with \eqref{eq:LyaDecrease} implies
\begin{align*}
    & V(x(k+1)) \leq \mu^{k+1} V(x(0)) 
    \\
    \implies & |x(k+1)| 
    \leq \mu^{\frac{k}{2}} \sqrt{ \frac{\mu \lambda_{\max}(P)}{\lambda_{\min}(P)} } |x(0)| 
    \leq \mu^{\frac{k}{2}} \cdot r < r 
    \\
    \implies & x(k+1) \in \mathcal{B}_r .
\end{align*}
By induction, we conclude that $x(k) \in \mathcal{B}_{r}$ for all $k \geq 1$.  \\
Case (ii): $\sqrt{ {\lambda_{\min}(P)} / \left( {\lambda_{\max}(P)-\epsilon} \right) } > 1$. 
Then $r_0 = r$, and $x(0) \in \mathcal{B}_{r_0} = \mathcal{B}_r$.
According to \eqref{eq:LyaDecrease}, $V(x(1)) \leq \mu V(x(0))$, which implies
\begin{align*}
    |x(1)| \leq \sqrt{ \frac{\mu \lambda_{\max}(P)}{\lambda_{\min}(P)} } |x(0)| 
    \leq \sqrt{ \frac{\mu \lambda_{\max}(P)}{\lambda_{\min}(P)} } \cdot r < r ,
\end{align*}
and means $x(1) \in \mathcal{B}_r$.
Let $x(i) \in \mathcal{B}_{r}$ for $i = 1,\dots,k$; 
this condition together with \eqref{eq:LyaDecrease} implies $x(k+1) \in \mathcal{B}_r$ by analogous arguments as in Case (i).
By induction, $x(k) \in \mathcal{B}_{r}$ for all $k \geq 1$. \\
So far we conclude that, under condition \eqref{eq:LyaCondition}, 
$x(k) \in \mathcal{B}_{r}$ for all $k \geq 0$ if $x(0) \in \mathcal{B}_{r_0}$.
This fact implies, for all $k \in \mathbb{N}$,
\begin{align*}
    & V(x(k+1)) \leq \mu V(x(k)) \implies V(x(k)) \leq \mu^{k} V(x(0))  
    \\
    & \implies |x(k)| \leq \sqrt{ \lambda_{\max}(P) / \lambda_{\min}(P) } \ \mu^{{k}/{2}}  |x(0)|,
\end{align*} 
which completes the proof.
\end{proof}

It it difficult to directly solve \eqref{eq:LyaCondition},
since there are infinite elements in $\mathcal{B}_r$. 
To overcome this difficulty, 
we use the convex polytope of $G(x) := [A(x) \ B(x)]$ for $x \in \mathcal{B}_r$.
Specifically, define
\begin{equation}\label{eq:GxOO}
    \mathcal{G} := \big\{ G \in \real^{n \times (n+m)} : G = [A(x) \ B(x)], \ x \in \mathcal{B}_r \big\}.
\end{equation}
Bearing in mind \eqref{eq:entries}, let
\begin{subequations}\label{eq:min_max_gji}
\begin{align}
    & \bar{a}_{ij} := \max_{x \in \mathcal{B}_r} a_{ij}(x), & 
    & \underline{a}_{ij} := \min_{x \in \mathcal{B}_r} a_{ij}(x),
    \\
    & \bar{b}_{ij} := \max_{x \in \mathcal{B}_r} b_{ij}(x), & 
    & \underline{b}_{ij} := \min_{x \in \mathcal{B}_r} b_{ij}(x).
\end{align}
\end{subequations}
Note that the maximum and minimum values exist, 
since $a_{ij}(x)$ and $b_{ij}(x)$ are assumed to be continuous on the domain $\mathbb{X} \subseteq \real^{n}$,
and $\mathcal{B}_r \subseteq \mathbb{X}$ is defined as a compact set.
Moreover, define
\begin{align}\label{eq:bar_set_G}
    & \bar{\mathcal{G}} := 
    \bigg\{ \Big[ [\alpha_{ij}]_{n \times n} \ \ [\beta_{ij}]_{n \times m} \Big] \in \real^{n \times (n+m)} : 
    \\   
    & \hspace{35mm} \alpha_{ij} \in \{ \bar{a}_{ij} , \ \underline{a}_{ij} \} ,  \
    \beta_{ij} \in \{ \bar{b}_{ij} , \ \underline{b}_{ij} \} \bigg\} . \notag
\end{align}
The number of elements in $\bar{\mathcal{G}}$ is finite and up to a maximum of $2^{n \times (n+m)}$.
By the definition and property of convex hull ${\rm{conv}}(\cdot)$ in \cite[Definition 1.5, Proposition 1.6]{LMI2000Scherer}, we have
\begin{align*}
    & {\rm{conv}}(\bar{\mathcal{G}}) := 
    \bigg\{ \Big[ [\alpha_{ij}]_{n \times n} \ \ [\beta_{ij}]_{n \times m} \Big] \in \real^{n \times (n+m)} : 
    \\   
    & \hspace{35mm} \alpha_{ij} \in [ \bar{a}_{ij} , \ \underline{a}_{ij} ] ,  \
    \beta_{ij} \in [ \bar{b}_{ij} , \ \underline{b}_{ij} ] \bigg\} . \notag
\end{align*}
It is obvious that 
\begin{equation}\label{eq:G_subset_convG}
    \mathcal{G} \subseteq {\rm{conv}}(\bar{\mathcal{G}}).
\end{equation}
Now with the result of \eqref{eq:G_subset_convG}, 
we are able to give a solvable sufficient condition to \eqref{eq:LyaCondition}.

\begin{theorem}\label{thm:LMIsolution}
Given a positive real number $r$ and the set $\mathcal{B}_r$, 
if there exist $\Gamma \succ 0$, $\epsilon_{\Gamma} > 0$ and $Y$ such that
\begin{equation}\label{eq:LMI_SufficientLyaCondition}
    \bmat{ 
        -\Gamma & G \bmat{ \Gamma \\ Y} \\
        [\Gamma \ Y^{\top} ] G^{\top} & - \Gamma + \epsilon_{\Gamma} I_n
    } \preceq 0
    \quad \forall G \in \bar{\mathcal{G}},
\end{equation}
and let $u = Kx$ with $K = Y \Gamma^{-1}$, 
then there exists a region of attraction $\mathcal{B}_{r_0}$ with
\begin{equation}\label{eq:LMI_r0}
    r_0 = \min \left\{ r, \ \sqrt{ \frac{ \lambda_{\max}(\Gamma) \lambda_{\min}(\Gamma) }{ \lambda^2_{\max}(\Gamma) - \epsilon_{\Gamma} \lambda_{\min}(\Gamma) } } \cdot r \right\},
\end{equation}
and the closed-loop system $x^{+} = \big( A(x) + B(x)K \big) x$  
starting from any $x(0) \in \mathcal{B}_{r_0}$ 
will exponentially converge to the origin.
\end{theorem}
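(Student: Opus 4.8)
The plan is to convert the vertex LMI \eqref{eq:LMI_SufficientLyaCondition} into the Lyapunov inequality \eqref{eq:LyaCondition} of Lemma~\ref{lema:modelBasedSolution} and then invoke that lemma. First I would remove the restriction to the finite vertex set $\bar{\mathcal{G}}$: the left-hand side of \eqref{eq:LMI_SufficientLyaCondition} is affine in $G$ (linear in the entries $a_{ij},b_{ij}$), and the negative semidefinite cone is convex, so feasibility at every $G \in \bar{\mathcal{G}}$ with common $\Gamma,\epsilon_\Gamma,Y$ propagates to every $G \in {\rm{conv}}(\bar{\mathcal{G}})$; by \eqref{eq:G_subset_convG} it then holds for $G = [A(x)\ B(x)]$ for all $x \in \mathcal{B}_r$.

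Next, fix $x \in \mathcal{B}_r$ and set $P := \Gamma^{-1} \succ 0$ and $K := Y\Gamma^{-1}$, so that $G\smat{\Gamma \\ Y} = A(x)\Gamma + B(x)Y = \big(A(x)+B(x)K\big)\Gamma$. Since the $(1,1)$ block $-\Gamma$ is negative definite, a Schur-complement reduction of \eqref{eq:LMI_SufficientLyaCondition} followed by a congruence transformation with $\Gamma^{-1}$ yields
\[
\big[A(x)+B(x)K\big]^{\top} P \big[A(x)+B(x)K\big] - P + \epsilon_\Gamma P^{2} \preceq 0 .
\]
Using $\epsilon_\Gamma P^{2} \succeq \epsilon_\Gamma \lambda_{\min}^{2}(P) I_n = \big(\epsilon_\Gamma/\lambda_{\max}^{2}(\Gamma)\big) I_n$, this gives exactly \eqref{eq:LyaCondition1} with $\epsilon := \epsilon_\Gamma/\lambda_{\max}^{2}(\Gamma) > 0$, uniformly over $x \in \mathcal{B}_r$. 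Lemma~\ref{lema:modelBasedSolution} then applies and produces the ROA $\mathcal{B}_{r_0}$ with $r_0 = \min\{ r,\, \sqrt{\lambda_{\min}(P)/(\lambda_{\max}(P)-\epsilon)}\cdot r\}$; substituting $\lambda_{\min}(P) = 1/\lambda_{\max}(\Gamma)$, $\lambda_{\max}(P) = 1/\lambda_{\min}(\Gamma)$ and $\epsilon = \epsilon_\Gamma/\lambda_{\max}^{2}(\Gamma)$ and clearing denominators reproduces \eqref{eq:LMI_r0}.

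I expect the only delicate points to be the Schur-complement/congruence manipulation --- inverting the negative-definite pivot $-\Gamma$ correctly and checking that the resulting $\epsilon$ is admissible for Lemma~\ref{lema:modelBasedSolution} (it is, because \eqref{eq:LyaCondition1} itself forces $P \succeq \epsilon I_n$, hence $\epsilon \le \lambda_{\min}(P) \le \lambda_{\max}(P)$ and $\mu = 1 - \epsilon/\lambda_{\max}(P) \in [0,1)$) --- together with the subsequent algebraic simplification to the stated $r_0$. The conceptual ingredients, namely the linearizing change of variables $\Gamma = P^{-1},\, Y = K\Gamma$ and the vertex-to-polytope relaxation via $\mathcal{G} \subseteq {\rm{conv}}(\bar{\mathcal{G}})$, are standard and immediate.
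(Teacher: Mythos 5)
Your proposal is correct and follows essentially the same route as the paper: vertex-to-polytope propagation via affinity of the LMI in $G$ together with $\mathcal{G} \subseteq {\rm{conv}}(\bar{\mathcal{G}})$, then a Schur complement and congruence with $\Gamma^{-1}$ to recover \eqref{eq:LyaCondition1} with $\epsilon = \epsilon_{\Gamma}/\lambda_{\max}^{2}(\Gamma)$, and finally Lemma~\ref{lema:modelBasedSolution}. The extra check that $\mu \in [0,1)$ is a harmless addition already handled inside the lemma's proof.
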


\begin{proof}
We observe that the function 
\begin{equation*}
    c(G) := 
    \bmat{ 
        -\Gamma & G \bmat{ \Gamma \\ Y} \\
        [\Gamma \ Y^{\top} ] G^{\top} & - \Gamma + \epsilon_{\Gamma} I_n
    }
\end{equation*}
is affine and convex with respect to $G$. 
Then by \cite[Proposition 1.14]{LMI2000Scherer}, \eqref{eq:LMI_SufficientLyaCondition} is equivalent to
\begin{align*}
    c(G) \preceq 0 \ \ \forall G \in \bar{\mathcal{G}} 
    \iff 
    & c(G) \preceq 0 \ \ \forall G \in {\rm{conv}}(\bar{\mathcal{G}}) 
    \\
    \overset{\eqref{eq:G_subset_convG}}{\implies}
    & c(G) \preceq 0 \ \ \forall G \in \mathcal{G} .
\end{align*}
In other words, bearing in mind the definition of $\mathcal{G}$ in \eqref{eq:GxOO}, 
the linear matrix inequalities \eqref{eq:LMI_SufficientLyaCondition} implies
\begin{equation}\label{eq:thm_model_step1}
    \bmat{ 
        -\Gamma 
        & \hspace{-5mm} A(x) \Gamma + B(x) Y 
        \\
        \Gamma A(x)^{\top} + Y^{\top} B(x)^{\top} 
        & \hspace{-5mm} - \Gamma + \epsilon_{\Gamma} I_n
    } \preceq 0
    \ \ \forall x \in \mathcal{B}_r.
\end{equation}
Provided $\Gamma \succ 0$, $Y = K \Gamma$, by Schur complement, the above condition is equivalent to
\begin{equation*}
    \big[ A(x) \Gamma + B(x) K \Gamma \big]^{\top} \Gamma^{-1} \big[ A(x) \Gamma + B(x) K \Gamma \big] - \Gamma \preceq -\epsilon_{\Gamma} I_n
\end{equation*}
for all $x \in \mathcal{B}_r$.
Pre- and post-multiplying the above inequality by $P = \Gamma^{-1}$ leads to
\begin{align*}
    \big[ A(x) + B(x)K \big]^{\top} P \big[ A(x) + B(x)K \big] - P & \preceq -\epsilon_{\Gamma} P^2 
    \\
    \preceq - \epsilon_{\Gamma} \lambda^{2}_{\min}(P) I_n 
    & \preceq - \frac{ \epsilon_{\Gamma} } { \lambda^{2}_{\max}(\Gamma) } I_n ,
\end{align*}
which implies
\begin{equation*}
    \big[ A(x) + B(x)K \big]^{\top} P \big[ A(x) + B(x)K \big] - P \preceq - \epsilon I_n 
    \ \ \forall x \in \mathcal{B}_r
\end{equation*}
with $\epsilon = { \epsilon_{\Gamma} } / { \lambda^{2}_{\max}(\Gamma) } > 0$, 
that is, \eqref{eq:LyaCondition} in Lemma~\ref{lema:modelBasedSolution} holds.
As for the region of attraction $\mathcal{B}_{r_0}$, 
\eqref{eq:modelBased_r0} rewrites into \eqref{eq:LMI_r0} through $\lambda_{\max}(P) = \lambda^{-1}_{\min}(\Gamma)$ and $\lambda_{\min}(P) = \lambda^{-1}_{\max}(\Gamma)$.
This completes the proof.
\end{proof}

\begin{example}\label{example:simplest}
Consider the nonlinear system
\begin{equation}\label{eq:example_sys}
\begin{cases}
    x_{1}^{+} = x_1 + 0.1 \sin{x_1} + 0.2 x_2 + (0.1 + 0.1 |x_2|) u \\
    x_{2}^{+} = 0.2 x_1 + 0.9 x_2 + 0.1 x_1^2 x_2 + 0.1 e^{x_1} u
\end{cases}
\end{equation}
with the corresponding state-dependent representation
\begin{equation*}
    A(x) \! = \!\! \bmat{ 
        1 + 0.1 \frac{\sin x_1}{x_1} \!\! & \!\! 0.2 
        \\
        0.2 \!\! & \!\! 0.9 + 0.1 x_1^2
    } \! ,
    B(x) \! = \!\! \bmat{
        0.1 + 0.1 |x_2|
        \\
        0.1 e^{x_1}
    }.
\end{equation*}
Take $r = 1.1$ for $\mathcal{B}_r$.
In \eqref{eq:min_max_gji}, 
$\bar{a}_{11} := \max_{x \in \mathcal{B}_r} a_{11}(x) = 1.1$, 
$\underline{a}_{11} := \min_{x \in \mathcal{B}_r} a_{11}(x) = 1.081$, 
$\bar{a}_{21} = \underline{a}_{21} = 0.2$, 
$\bar{a}_{12} = \underline{a}_{12} = 0.2$,
$\bar{a}_{22} = 1.021$, 
$\underline{a}_{22} = 0.9$, 
$\bar{b}_{11} = 0.21$, 
$\underline{b}_{11} = 0.1$, 
$\bar{b}_{21} = 0.3004$, 
$\underline{b}_{21} = 0.0332$, 
which lead to the set $\bar{\mathcal{G}}$ as in \eqref{eq:bar_set_G}.
By Theorem~\ref{thm:LMIsolution}, 
solving \eqref{eq:LMI_SufficientLyaCondition} via CVX \cite{cvx} yields 
$\epsilon_{\Gamma} = 0.1402$, $Y = [-65.0340 \  -78.9831]$, 
$\Gamma = \bmat{34.3888 & 0.0000 \\ 0.0000 & 34.3888}$, and 
$K =  [-1.8911 \ -2.2968]$. 
According to \eqref{eq:LMI_r0}, 
the estimate of ROA, $\mathcal{B}_{r_0}$, has a radius of $r_0 = 1.1$.
Figure~\ref{fig:model_phase_portrait} is the phase portrait of the closed-loop system $x^{+} = A(x)x + B(x)Kx$.
\end{example}

\begin{figure}[ht]
    \centering
    \includegraphics[width=0.76\linewidth]{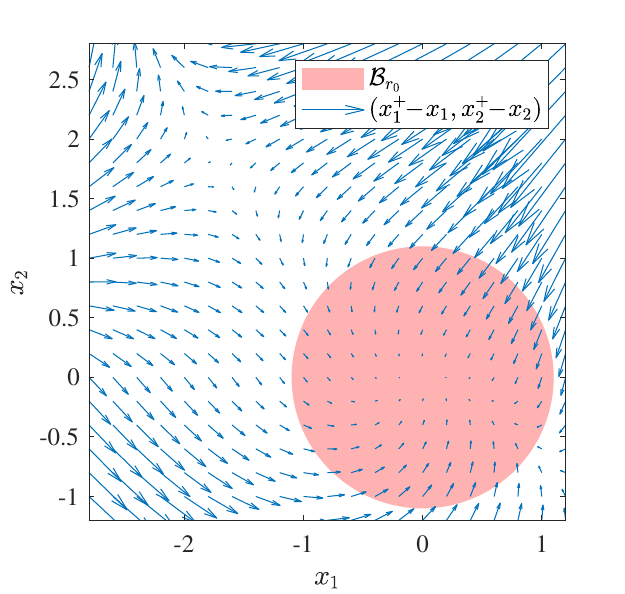}  
    \vspace{-3mm}
    \caption{Phase portrait of the closed-loop system with model-based controller.}
    \label{fig:model_phase_portrait}
\end{figure}

\subsection{Robustness Against Disturbances}\label{sec:robustness}

In this subsection, we explore the robustness of the controller against disturbances. 
To be specific, during the controller implementation, we consider the following closed-loop model
\begin{equation}\label{eq:disturbed_ClosedLoop}
    x^{+} = \big( A(x) + B(x)K \big) x + w
\end{equation}
where $K$ is solved from LMIs \eqref{eq:LMI_SufficientLyaCondition} 
and $w \in \real^{n}$ is the disturbance.
It is proved in Theorem~\ref{thm:LMIsolution} that, 
without disturbance ($w \equiv 0$), 
the closed-loop system is exponentially stable if initial state $x(0) \in \mathcal{B}_{r_0}$.
While under the influence of the disturbance $w$, 
the system exhibits a property similar to the input-to-state stability \cite{JIANG2001ISS},
which is established in the next theorem.

\begin{theorem}\label{thm:Robust2Dist}
For the disturbed closed-loop system \eqref{eq:disturbed_ClosedLoop}, given a positive real number $r$ and the set $\mathcal{B}_r$, 
suppose that there exist $\Gamma \succ 0$, $\epsilon_{\Gamma} > 0$ and $Y$ such that \eqref{eq:LMI_SufficientLyaCondition} holds, 
and let $K = Y \Gamma^{-1}$ for \eqref{eq:disturbed_ClosedLoop}.
If the initial state $x(0)$ and disturbance $w(k)$ satisfy
\begin{subequations}\label{eq:R2D_condition}
\begin{align}
    |x(0)|^2 \leq r^2 \,
    \text{ and } \ \delta_{x_0}  |x(0)|^2 + \delta_{w}  |w|^2_{\infty} \leq r^2 ,
\end{align}
where
\begin{align}
    \delta_{x_0} & := \frac{ 2 \lambda^2_{\max}(\Gamma) - \epsilon_{\Gamma} \lambda_{\min}(\Gamma) } { 2 \lambda_{\min}(\Gamma) \lambda_{\max}(\Gamma) } , \\
    \delta_{w} & := \frac{ 4 \gamma^2_{A_{\cl}} \lambda^5_{\max}(\Gamma) + 2 \epsilon_{\Gamma} \lambda_{\min}(\Gamma) \lambda^3_{\max}(\Gamma) } { \epsilon^2_{\Gamma} \lambda^3_{\min}(\Gamma) } , \\
    \gamma_{A_{\cl}} &:= \max_{x \in \mathcal{B}_r} \| A(x) + B(x)K \| ,
\end{align}
\end{subequations}
then there exists $\beta \in \mathcal{KL}$ and $\gamma \in \mathcal{K}$ such that
\begin{equation}\label{eq:R2D_result}
    |x(k)| \leq \beta\big( x(0) , \, k \big) + \gamma(|w|_{\infty}) \quad \forall k \in \mathbb{N}.
\end{equation}
\end{theorem}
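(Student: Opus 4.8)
The plan is to run a standard ISS-via-Lyapunov argument using $V(x) = x^\top P x$ with $P = \Gamma^{-1}$, exactly as in Lemma~\ref{lema:modelBasedSolution}, but now tracking the disturbance term $w$. First I would recall from the proof of Theorem~\ref{thm:LMIsolution} that \eqref{eq:LMI_SufficientLyaCondition} yields, for all $x \in \mathcal{B}_r$, the contraction estimate $[A(x)+B(x)K]^\top P [A(x)+B(x)K] - P \preceq -\epsilon I_n$ with $\epsilon = \epsilon_\Gamma / \lambda_{\max}^2(\Gamma)$. Writing $A_{\cl}(x) := A(x)+B(x)K$ and expanding $V(x^+) = (A_{\cl}(x)x + w)^\top P (A_{\cl}(x)x + w)$, I get three groups of terms: the pure-state part bounded by $V(x) - \epsilon|x|^2$, a cross term $2 x^\top A_{\cl}(x)^\top P w$, and the pure-disturbance part $w^\top P w \le \lambda_{\max}(P)|w|^2$. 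The cross term I would bound with Young's inequality, $2 x^\top A_{\cl}(x)^\top P w \le \tfrac{\epsilon}{2}|x|^2 + \tfrac{2}{\epsilon}\|P\|^2 \gamma_{A_{\cl}}^2 |w|^2$, absorbing half of the $-\epsilon|x|^2$ margin. This gives a one-step inequality of the form $V(x^+) \le \mu' V(x) + c\,|w|_\infty^2$ for some $\mu' \in (0,1)$ (roughly $\mu' = 1 - \tfrac{\epsilon}{2\lambda_{\max}(P)}$) and an explicit constant $c$; matching constants against $\lambda_{\min}(P) = \lambda_{\max}^{-1}(\Gamma)$, $\lambda_{\max}(P) = \lambda_{\min}^{-1}(\Gamma)$, $\epsilon = \epsilon_\Gamma/\lambda_{\max}^2(\Gamma)$ should reproduce the coefficients $\delta_{x_0}$ and $\delta_w$ in \eqref{eq:R2D_condition}.

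The second step — and this is the genuine obstacle, paralleling the ``claim'' in Lemma~\ref{lema:modelBasedSolution} — is that the contraction inequality is only valid on $\mathcal{B}_r$, so I must show the trajectory never leaves $\mathcal{B}_r$ under the hypotheses \eqref{eq:R2D_condition}. I would argue by induction: assuming $x(j) \in \mathcal{B}_r$ for $j = 0,\dots,k$, iterate $V(x(j+1)) \le \mu' V(x(j)) + c|w|_\infty^2$ to obtain $V(x(k+1)) \le (\mu')^{k+1} V(x(0)) + \tfrac{c}{1-\mu'}|w|_\infty^2$, hence $|x(k+1)|^2 \le \tfrac{\lambda_{\max}(P)}{\lambda_{\min}(P)}\big((\mu')^{k+1}|x(0)|^2 + \tfrac{c}{\lambda_{\max}(P)(1-\mu')}|w|_\infty^2\big)$; the worst case is $(\mu')^{k+1} \le 1$, and the condition $\delta_{x_0}|x(0)|^2 + \delta_w |w|_\infty^2 \le r^2$ is exactly what is needed to force this bound to be $\le r^2$, closing the induction. (The separate requirement $|x(0)|^2 \le r^2$ handles the base case $x(0)\in\mathcal{B}_r$.) The delicate point is bookkeeping: the constants $\delta_{x_0}$, $\delta_w$ have a specific algebraic form with $\lambda_{\max}^5(\Gamma)$, $\lambda_{\min}^3(\Gamma)$ etc., so I would choose the Young's-inequality split and the invariant sublevel set carefully — likely using the sublevel set $\mathcal{E}_{P,\tau}$ with $\tau = \lambda_{\min}(P) r^2$ rather than $\mathcal{B}_r$ directly as the invariant object — to make the final radius condition come out in precisely that form.

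Once forward invariance of $\mathcal{B}_r$ is secured, the ISS estimate \eqref{eq:R2D_result} follows routinely: from $V(x(k)) \le (\mu')^k V(x(0)) + \tfrac{c}{1-\mu'}|w|_\infty^2$ and $\lambda_{\min}(P)|x|^2 \le V(x) \le \lambda_{\max}(P)|x|^2$, I extract $|x(k)| \le \sqrt{\tfrac{\lambda_{\max}(P)}{\lambda_{\min}(P)}}(\mu')^{k/2}|x(0)| + \sqrt{\tfrac{c}{\lambda_{\min}(P)(1-\mu')}}|w|_\infty$, using $\sqrt{a+b}\le\sqrt a+\sqrt b$. Setting $\beta(s,k) := \sqrt{\lambda_{\max}(P)/\lambda_{\min}(P)}\,(\mu')^{k/2}\,s$ (a $\mathcal{KL}$ function since $\mu'\in(0,1)$) and $\gamma(s) := \sqrt{c/(\lambda_{\min}(P)(1-\mu'))}\,s$ (linear, hence class $\mathcal{K}$) gives \eqref{eq:R2D_result}. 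I would remark that, by a slight abuse, $\beta(x(0),k)$ in the statement means $\beta(|x(0)|,k)$, matching the paper's notation. Note also that $\gamma$ being linear means the disturbance gain is finite and explicitly computable from $\Gamma$, $\epsilon_\Gamma$, and $\gamma_{A_{\cl}}$, which is the quantitative robustness content of the theorem.
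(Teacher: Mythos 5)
Your proposal matches the paper's proof essentially step for step: the same Lyapunov function $V(x)=x^{\top}\Gamma^{-1}x$, the same three-term expansion of $V(x^{+})-V(x)$, the same Young's-inequality split with $\alpha = 2\gamma_{A_{\cl}}\lambda_{\max}(P)/\epsilon$ absorbing half of the $-\epsilon|x|^2$ margin, the same forward-invariance induction keeping the trajectory in $\mathcal{B}_r$, and the same linear $\mathcal{KL}$/$\mathcal{K}$ comparison functions at the end. The only bookkeeping nuance is that to recover $\delta_{x_0}$ exactly you must bound the iterated term by $(\mu')^{k+1}\le \mu'$ (valid for all $k\ge 0$) rather than by $1$; that extra factor $\mu'$ is precisely how the paper's $\mu_w$ enters $\delta_{x_0}$, and no detour through a sublevel set $\mathcal{E}_{P,\tau}$ is needed.
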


\begin{proof}
In Theorem~\ref{thm:LMIsolution} we have proven that 
if there exist $\Gamma \succ 0$, $\epsilon_{\Gamma} > 0$ and $Y$ 
such that \eqref{eq:LMI_SufficientLyaCondition} holds,
then
\begin{equation}\label{eq:R2D_step0}
    \big[ \ast \big]^{\top} P \cdot \big[ A(x) + B(x)K \big] - P \preceq - \epsilon I_n 
    \quad \forall x \in \mathcal{B}_r
\end{equation}
with $K = Y \Gamma^{-1}$, $P = \Gamma^{-1}$ 
and $\epsilon = { \epsilon_{\Gamma} } / { \lambda^{2}_{\max}(\Gamma) } > 0$.
Let $V(x) := x^{\top} P x$ and $A_{\cl} := A(x) + B(x)K$.
Along with the trajectory of \eqref{eq:disturbed_ClosedLoop}, 
\begin{align}
    & V(x(k+1)) - V(x(k)) = V( A_{\cl} x(k) ) - V(x(k)) \notag \\
    & \hspace*{2cm} + V( A_{\cl} x(k) + w(k) ) - V( A_{\cl} x(k) ) \notag \\
    & = x(k)^{\top} \Big( A_{\cl}^{\top} P A_{\cl} - P \Big) x(k) \notag \\
    & \hspace*{2cm} + 2 x(k)^{\top} A_{\cl}^{\top} P w(k) + w(k)^{\top} P w(k) \notag \\
    & \overset{\eqref{eq:R2D_step0}}{\leq} - \epsilon \, |x(k)|^{2} 
    + 2 x(k)^{\top} A_{\cl}^{\top} P w(k) + w(k)^{\top} P w(k) \label{eq:R2D_step1}
\end{align}
for all $x(k) \in \mathcal{B}_r$.
Let $\gamma_{A_{\cl}} := \max_{x \in \mathcal{B}_r} \| A(x) + B(x)K \|$ and 
$| w |_{\infty} := \sup_{k \geq 0} |w(k)|$, then
\begin{align}
    \eqref{eq:R2D_step1} \leq & - \epsilon \, |x(k)|^{2} \notag \\
    & + 2 \gamma_{A_{\cl}} \lambda_{\max}(P) \, |x(k)| \, |w|_{\infty} + \lambda_{\max}(P) \, |w|_{\infty}^2 . \label{eq:R2D_step2}
\end{align}
For any $\alpha > 0$, the Young's inequality yields
\begin{align*}
    2 \, |x(k)| \, |w|_{\infty} \leq \frac{1}{\alpha} |x(k)|^2 + \alpha |w|_{\infty}^2 .
\end{align*}
By taking $\alpha := 2 \gamma_{A_{\cl}} \lambda_{\max}(P) / \epsilon$, we have
\begin{align}
    \eqref{eq:R2D_step2} \leq & - \frac{\epsilon}{2} \, |x(k)|^{2} \notag \\
    & + \underbrace{ \Big( 2 \gamma_{A_{\cl}}^2 \lambda^2_{\max}(P) / \epsilon + \lambda_{\max}(P) \Big) }_{ =: c_{w}} |w|_{\infty}^2 . \label{eq:R2D_step3}
\end{align}
For all $x(k) \in \mathcal{B}_r$, 
combining \eqref{eq:R2D_step1}, \eqref{eq:R2D_step2} and \eqref{eq:R2D_step3} gives
\begin{subequations}\label{eq:R2D_step4}
\begin{align}
    V(x(k+1)) - V(x(k)) & \leq - \frac{\epsilon}{2\lambda_{\max}(P)} \, V(x(k)) + c_{w} |w|_{\infty}^2 \notag \\
    \iff V(x(k+1)) & \leq \mu_{w} V(x(k)) + c_{w} |w|_{\infty}^2  \label{eq:R2D_step4-1}
\end{align} 
where
\begin{align}
    \mu_{w} := 1 - \frac{\epsilon}{2\lambda_{\max}(P)}.
\end{align}
\end{subequations}
Note that $0 < \epsilon \leq \lambda_{\max}(P)$, thus $0 < \mu_{w} < 1$.
Moreover, note that \eqref{eq:R2D_step4} only holds for $x(k) \in \mathcal{B}_r$.
Next we need to prove that, if \eqref{eq:R2D_condition} holds, 
then \eqref{eq:R2D_step4} holds for all $k \in \mathbb{N}$.
Bearing in mind $\lambda_{\max}(P) = \lambda^{-1}_{\min}(\Gamma)$, $\lambda_{\min}(P) = \lambda^{-1}_{\max}(\Gamma)$ and $\epsilon = { \epsilon_{\Gamma} } / { \lambda^{2}_{\max}(\Gamma) }$,
\eqref{eq:R2D_condition} equivalently rewrites as $|x(0)|^2 \leq r^2$ and
\begin{align}\label{eq:R2D_step5}
    \frac{\lambda_{\max}(P)}{\lambda_{\min}(P)} \mu_{w} \, |x(0)|^2 + \frac{1}{1-\mu_{w}} \frac{c_{w}}{\lambda_{\min}(P)} \, |w|^2_{\infty} \leq r^2 .
\end{align}
For $x(0) \in \mathcal{B}_r$, \eqref{eq:R2D_step4} holds at $k=0$, which implies
\begin{align*}
    |x(1)|^2 \leq & \frac{\lambda_{\max}(P)}{\lambda_{\min}(P)} \mu_{w} \, |x(0)|^2 
    + \frac{c_{w}}{\lambda_{\min}(P)} \, |w|^2_{\infty} 
    \overset{\eqref{eq:R2D_step5}}{\leq} r^2 ,
\end{align*}
and further implies $x(1) \in \mathcal{B}_r$.
Let $x(i) \in \mathcal{B}_{r}$ for $i = 1,\dots,k$; 
this condition together with \eqref{eq:R2D_step4} implies
\begin{align*}
    V(x(k+1)) \ \leq \ \ & \mu_{w}^{k+1} \, V(x(0)) \\
    & + \left(1+\mu_{w}+\dots+\mu_{w}^{k}\right) c_{w} \, |w|^2_{\infty} \\ 
    \implies |x(k+1)|^2 \ \leq \ \ & \frac{\lambda_{\max}(P)}{\lambda_{\min}(P)} \mu_{w}^{k+1} \, |x(0)|^2 \\
    & + \frac{1-\mu_{w}^{k+1}}{1-\mu_{w}} \frac{c_{w}}{\lambda_{\min}(P)} \, |w|^2_{\infty} \\
    \overset{\eqref{eq:R2D_step5}}{\implies} |x(k+1)|^2 \ \leq \ \ & r^2   \implies x(k+1) \in \mathcal{B}_r
\end{align*}
By induction, $x(k) \in \mathcal{B}_{r}$ for all $k \geq 1$.
So far we conclude that suppose \eqref{eq:LMI_SufficientLyaCondition} is feasible, 
if \eqref{eq:R2D_condition} holds, 
then \eqref{eq:R2D_step4} holds for all $k \in \mathbb{N}$, and
\begin{align*}
    |x(k)|^2  \leq  \frac{\lambda_{\max}(P)}{\lambda_{\min}(P)} \mu_{w}^{k} \, |x(0)|^2 
    + \frac{1}{1-\mu_{w}} \frac{c_{w}}{\lambda_{\min}(P)} \, |w|^2_{\infty}
\end{align*}
for all $k \in \mathbb{N}$, that is, \eqref{eq:R2D_result} holds. This completes the proof.
\end{proof}



\begin{example}
Following Example~\ref{example:simplest}, the controller $K$ remains unchanged.
For the closed-loop system $x^{+} = \big( A(x) + B(x)K \big) x + w$,
we set $x(0) = (-0.5, -0.5)$ and the disturbance $(w_1, w_2)$ as random variables uniformly distributed in the interval $[-0.1,0.1]$.
The simulation result is depicted in Figure~\ref{fig:model_state_response}, 
which is consistent with \eqref{eq:R2D_result} in Theorem~\ref{thm:Robust2Dist}.
\end{example}

\begin{figure}[ht]
    \centering
    \includegraphics[width=1\linewidth]{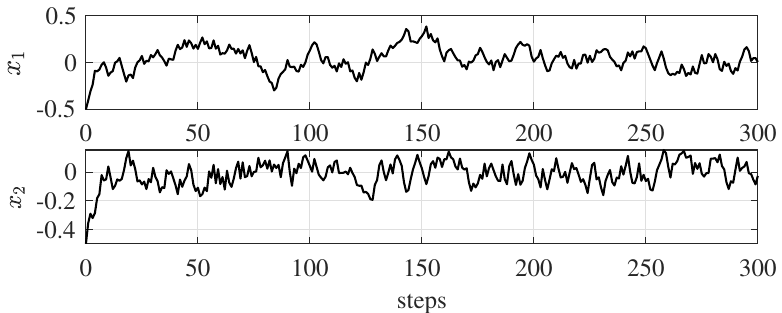}  
    \vspace{-7mm}
    \caption{State response of the closed-loop system with model-based controller under disturbance.}
    \label{fig:model_state_response}
\end{figure}

\subsection{System Subject to Input Saturation}\label{sec:InputSatur}

In this subsection, we consider the system subject to input saturation:
\begin{equation}\label{eq:nSYS_satur}
    x^{+} = A(x)x + B(x) \sat(u).
\end{equation}
The saturation function $\sat(u)$ maps from $\real^{m}$ to $\real^{m}$ 
with its components defined as
\begin{equation}
    \sat(u_i) := \sign(u_i) \min( |u_i|, \bar{u}_i )
    \ \text{ for } \ i=1,\cdots,m,
\end{equation}
where $u_i$ is the $i$-th control input, 
and $\bar{u}_i > 0$ is the $i$-th entry of $\bar u$
that is the level vector of the saturation.

Given the control law $u = Kx$,
the closed-loop system subject to input saturation is 
\begin{equation}\label{eq:nSYS_satur_closed}
    x^{+} = \big( A(x)x + B(x)K \big) x + B(x) \phi(Kx),
\end{equation}
where $\phi(u)$ is the dead-zone function defined as
\begin{equation}\label{eq:dead_zone}
    \phi(u) := \sat(u) - u.
\end{equation}
Referring to \cite[Lemma 1.6]{Sophie2011Saturation} and \cite[Lemma 1]{Sophie2024Saturation}, 
the generalized sector condition lemma, as stated below, 
is used to handle the dead-zone \eqref{eq:dead_zone}.

\begin{lemma}\label{lema:sectorCondi}
Given a matrix $L \in \real^{m \times n}$, 
for every $x \in \mathcal{S}(L)$ where
\begin{equation}\label{eq:setSL}
    \mathcal{S}(L) := \left\{ x \in \real^{n} : 
    | L_{(i)} x | \leq \bar{u}_i, \
    \forall i = 1,\dots,m \right\},
\end{equation}
the following relation holds:
\begin{equation}\label{eq:sectorCondi}
    \phi(u)^{\top} N \big( \sat(u) + Lx \big) \leq 0
\end{equation}
with any diagonal positive definite matrix $N \in \real^{m \times m}$.
\end{lemma}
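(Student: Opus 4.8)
The plan is to prove \eqref{eq:sectorCondi} by decoupling it into a sum of $m$ scalar inequalities, one per input channel, exploiting that $N$ is diagonal and positive definite. Writing $\phi(u)^{\top} N (\sat(u) + Lx) = \sum_{i=1}^{m} N_{ii}\, \phi(u_i)\big(\sat(u_i) + L_{(i)}x\big)$ with $N_{ii} > 0$, it suffices to show that each summand is nonpositive, i.e., that $\phi(u_i)\big(\sat(u_i) + L_{(i)}x\big) \leq 0$ for every index $i$ and every $x \in \mathcal{S}(L)$. Note that the argument will never use a specific form of $u$ (such as $u = Kx$), so the lemma holds for arbitrary $u$, which is what lets it be applied later to the closed-loop system \eqref{eq:nSYS_satur_closed}.

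First I would fix $i$ and split into three cases according to the value of $u_i$ relative to the saturation level $\bar{u}_i > 0$, recalling $\phi(u_i) = \sat(u_i) - u_i$ from \eqref{eq:dead_zone}. If $|u_i| \leq \bar{u}_i$, then $\sat(u_i) = u_i$, so $\phi(u_i) = 0$ and the $i$-th term vanishes irrespective of $L_{(i)}x$. If $u_i > \bar{u}_i$, then $\sat(u_i) = \bar{u}_i$ and $\phi(u_i) = \bar{u}_i - u_i < 0$; since $x \in \mathcal{S}(L)$ gives $|L_{(i)}x| \leq \bar{u}_i$ and in particular $L_{(i)}x \geq -\bar{u}_i$, the factor $\sat(u_i) + L_{(i)}x = \bar{u}_i + L_{(i)}x \geq 0$, so the product is nonpositive. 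Symmetrically, if $u_i < -\bar{u}_i$, then $\sat(u_i) = -\bar{u}_i$ and $\phi(u_i) = -\bar{u}_i - u_i > 0$, while $|L_{(i)}x| \leq \bar{u}_i$ yields $L_{(i)}x \leq \bar{u}_i$ and hence $\sat(u_i) + L_{(i)}x = -\bar{u}_i + L_{(i)}x \leq 0$; again the product is nonpositive.

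Combining the three cases gives $\phi(u_i)\big(\sat(u_i) + L_{(i)}x\big) \leq 0$ for all $i = 1,\dots,m$, and summing against the positive weights $N_{ii}$ recovers \eqref{eq:sectorCondi}. There is no genuine obstacle: the only care needed is tracking the sign of $\phi(u_i)$ in the two saturated cases and observing that the membership $x \in \mathcal{S}(L)$ defined in \eqref{eq:setSL} is invoked solely through the one-sided bound ($L_{(i)}x \geq -\bar{u}_i$ when $\phi(u_i) < 0$, and $L_{(i)}x \leq \bar{u}_i$ when $\phi(u_i) > 0$) that matches the sign of $\phi(u_i)$ — so $\mathcal{S}(L)$ is precisely the set on which the two factors are guaranteed to have opposite signs.
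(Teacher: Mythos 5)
Your proof is correct. Note that the paper itself gives no proof of this lemma: it is stated as a known result (the generalized sector condition) with citations to \cite{Sophie2011Saturation} and \cite{Sophie2024Saturation}. Your argument --- reducing \eqref{eq:sectorCondi} to the $m$ scalar inequalities $\phi(u_i)\big(\sat(u_i)+L_{(i)}x\big)\leq 0$ via the diagonality and positivity of $N$, then checking the three cases $|u_i|\leq\bar u_i$, $u_i>\bar u_i$, $u_i<-\bar u_i$ using only the one-sided bound on $L_{(i)}x$ that matches the sign of $\phi(u_i)$ --- is exactly the standard proof found in those references, and your remark that no structure on $u$ is used (so the lemma applies to $u=Kx$ in \eqref{eq:nSYS_satur_closed}) is also the right observation.
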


Before going to the main result, we need the following intermediate lemma.

\begin{lemma}\label{lema:intermeLyaCondi}
Given a positive real number $r$ and the set $\mathcal{B}_r$, 
if there exist $P \succ 0$, $\tau > 0$, $\epsilon > 0$, and $V(x) := x^{\top} P x$,
such that
\begin{equation}\label{eq:LyaCondition_satur}
    V(x^+) - V(x) \leq - \epsilon | x |^2 \quad
    \forall x \in \mathcal{B}_r \cap \mathcal{E}_{P,\tau} ,
\end{equation}
then there exists a region of attraction 
$\mathcal{B}_{r_0} \cap \mathcal{E}_{P, \tau}$
with
\begin{equation}\label{eq:modelBased_r0_satur}
    r_0 = \min\left\{ r, \ \sqrt{ \frac{\lambda_{\min}(P)} {\lambda_{\max}(P)-\epsilon} } \cdot r \right\},
\end{equation}
and the closed-loop system $x^{+} = A(x)x + B(x)\sat(Kx)$ 
starting from any $x(0) \in \mathcal{B}_{r_0} \cap \mathcal{E}_{P, \tau}$ 
will exponentially converge to the origin.
\end{lemma}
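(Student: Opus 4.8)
The plan is to re-run the argument of Lemma~\ref{lema:modelBasedSolution} almost verbatim; the single new ingredient is that the Lyapunov decrease \eqref{eq:LyaCondition_satur} is now only guaranteed on the intersection $\mathcal{B}_r \cap \mathcal{E}_{P,\tau}$, so I must additionally verify that a trajectory launched in $\mathcal{B}_{r_0} \cap \mathcal{E}_{P,\tau}$ never leaves $\mathcal{B}_r \cap \mathcal{E}_{P,\tau}$, which is what legitimizes invoking the decrease at every step.

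First I would turn \eqref{eq:LyaCondition_satur} into a contraction on $V$: since $P \succ 0$ gives $|x|^2 \geq V(x)/\lambda_{\max}(P)$, \eqref{eq:LyaCondition_satur} implies $V(x^+) \leq \mu\,V(x)$ with $\mu := 1 - \epsilon/\lambda_{\max}(P)$, for every $x \in \mathcal{B}_r \cap \mathcal{E}_{P,\tau}$; and $0 < \epsilon \leq \lambda_{\max}(P)$ forces $0 \leq \mu < 1$, exactly as in Lemma~\ref{lema:modelBasedSolution}.

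The key step is the forward-invariance claim: if $x(0) \in \mathcal{B}_{r_0} \cap \mathcal{E}_{P,\tau}$ then $x(k) \in \mathcal{B}_r \cap \mathcal{E}_{P,\tau}$ for all $k \geq 0$, which I would prove by induction. Since $r_0 \leq r$ by \eqref{eq:modelBased_r0_satur}, the case $k = 0$ holds. Assuming $x(i) \in \mathcal{B}_r \cap \mathcal{E}_{P,\tau}$ for $i = 0,\dots,k$, the contraction applies at each such step, so $V(x(k+1)) \leq \mu^{k+1} V(x(0)) \leq V(x(0)) \leq \tau$ (using $\mu < 1$ and $x(0) \in \mathcal{E}_{P,\tau}$), giving $x(k+1) \in \mathcal{E}_{P,\tau}$ immediately — this is the easy half, since $\mathcal{E}_{P,\tau}$ is a sublevel set of the decreasing $V$. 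For the $\mathcal{B}_r$ half I would bound $|x(k+1)|^2 \leq V(x(k+1))/\lambda_{\min}(P) \leq \mu^{k+1}\,(\lambda_{\max}(P)/\lambda_{\min}(P))\,|x(0)|^2 \leq \mu^{k+1}\,(\lambda_{\max}(P)/\lambda_{\min}(P))\,r_0^2$ and then split into the same two cases as in Lemma~\ref{lema:modelBasedSolution}: if $\sqrt{\lambda_{\min}(P)/(\lambda_{\max}(P)-\epsilon)} \leq 1$ then $r_0 = \sqrt{\lambda_{\min}(P)/(\lambda_{\max}(P)-\epsilon)}\cdot r$ and the identity $\mu\,\lambda_{\max}(P)/(\lambda_{\max}(P)-\epsilon) = 1$ collapses the bound to $\mu^{k} r^2 \leq r^2$; if instead that quantity exceeds $1$ then $r_0 = r$ and $\mu\,\lambda_{\max}(P)/\lambda_{\min}(P) = (\lambda_{\max}(P)-\epsilon)/\lambda_{\min}(P) < 1$ yields the bound $\leq r^2$ directly. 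Either way $x(k+1) \in \mathcal{B}_r$, closing the induction.

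Once invariance is in hand, the contraction holds along the entire trajectory, so $V(x(k)) \leq \mu^k V(x(0))$ and hence $|x(k)| \leq \sqrt{\lambda_{\max}(P)/\lambda_{\min}(P)}\,\mu^{k/2}|x(0)|$ for all $k$, which is the asserted exponential convergence to the origin from every $x(0) \in \mathcal{B}_{r_0} \cap \mathcal{E}_{P,\tau}$. The only point needing genuine care — the main obstacle, such as it is — is that because the decrease is available \emph{only} on the intersection, the induction cannot treat $\mathcal{E}_{P,\tau}$-invariance and $\mathcal{B}_r$-invariance separately, but must propagate both memberships jointly at each step; once that bookkeeping is set up correctly, the remainder is a routine repetition of the model-based lemma.
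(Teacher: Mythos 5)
Your proposal is correct and follows essentially the same route as the paper: reduce \eqref{eq:LyaCondition_satur} to the contraction $V(x^+)\leq \mu V(x)$ with $\mu = 1-\epsilon/\lambda_{\max}(P)$, establish forward invariance of $\mathcal{B}_r\cap\mathcal{E}_{P,\tau}$ (the ellipsoid half being immediate since $\mathcal{E}_{P,\tau}$ is a sublevel set of the decreasing $V$, the ball half via the same two-case analysis as Lemma~\ref{lema:modelBasedSolution}), and then conclude exponential convergence. Your explicit insistence on propagating both memberships jointly in a single induction is exactly the right bookkeeping; the paper's proof handles the two invariances in consecutive steps and cites Lemma~\ref{lema:modelBasedSolution} for the ball part, but the underlying argument is identical.
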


\begin{proof}
The condition \eqref{eq:LyaCondition_satur} implies that, 
for $x(k) \in \mathcal{B}_r \cap \mathcal{E}_{P, \tau}$,
\begin{align}\label{eq:LyaDecrease_satur}
    V(x(k+1)) \leq \mu V(x(k)) \ \text{ with } \ \mu := 1 - \frac{\epsilon}{\lambda_{\max}(P)},
\end{align} 
which corresponds to \eqref{eq:LyaDecrease}.
Using analogous arguments as in the proof of Lemma~\ref{lema:modelBasedSolution},
we obtain that $x(k) \in \mathcal{B}_{r}$ for all $k \geq 1$ 
if $x(0) \in \mathcal{B}_{r_0}$ 
with $r_0$ specified in \eqref{eq:modelBased_r0_satur},
which is consistent with \eqref{eq:modelBased_r0}.
Moreover, for any $x(k) \in \mathcal{E}_{P, \tau}$,
\begin{align*}
    x(k) \in \mathcal{E}_{P, \tau} 
    & \overset{0 \leq \mu}{\iff} 
    \mu V(x(k)) \leq \mu \tau
    \overset{\eqref{eq:LyaDecrease_satur}}{\implies}
    V(x(k+1)) \leq \mu \tau
    \\
    & \overset{0 \leq \mu < 1}{\implies}
    V(x(k+1)) \leq \tau
    \iff x(k+1) \in \mathcal{E}_{P, \tau}.
\end{align*}
In other words,
$x(k) \in \mathcal{E}_{P, \tau}$ for all $k \geq 1$  if $x(0) \in \mathcal{E}_{P, \tau}$.
So far we conclude that, under condition \eqref{eq:LyaDecrease_satur}, 
if $x(0) \in \mathcal{B}_{r_0} \cap \mathcal{E}_{P, \tau}$, 
then $x(k) \in \mathcal{B}_{r} \cap \mathcal{E}_{P, \tau}$ for all $k \geq 1$.
This fact implies that 
\begin{equation*}
    V(x(k+1)) \leq \mu V(x(k)) \quad \forall k \geq 0,
\end{equation*}
with $0 \leq \mu < 1$,
which completes the proof.
\end{proof}

The main result of this subsection is given below.

\begin{theorem}\label{thm:LMIsolution_satur}
Given a positive real number $r$ and the set $\mathcal{B}_r$,
if there exist $\Gamma \succ 0$, $\epsilon_{\Gamma} > 0$, $Y$, $W$ and diagonal positive definite matrix $S$ such that
\begin{subequations}\label{eq:LMI_SufficientLyaCondition_satur}
\begin{align}
    & \bmat{\Gamma & W_{(i)}^{\top} \\[3pt] W_{(i)} & \bar{u}_i^2} \succeq 0
    \quad i = 1,\dots,m,
    \label{eq:subsetCondi_satur}
    \\
    & \bmat{
        -\Gamma \! + \epsilon_{\Gamma} I_n 
        & \! -Y^{\top} \!\! - \! W^{\top} 
        & \! \bmat{\Gamma \\ Y}^{ \!\! \top} \!\! G^{\top}
        \\[8pt]
        -Y- W & -2 S & \bmat{0 \\ S}^{ \!\! \top} \!\! G^{\top} 
        \\[5pt]
        G \bmat{\Gamma \\ Y} & G \bmat{0 \\ S} & -\Gamma 
    } \! \preceq 0 
    \ \ \forall G  \in  \bar{\mathcal{G}},
    \label{eq:stableCondi_satur}  
\end{align}
\end{subequations}
and let $u = Kx$ with $K = Y \Gamma^{-1}$, 
then there exists a region of attraction 
$\mathcal{B}_{r_0} \cap \mathcal{E}_{P,1}$ with $P = \Gamma^{-1}$ and
\begin{equation}\label{eq:modelBased_r0_satur_thm}
    r_0 = \min \left\{ r, \ \sqrt{ 
        \frac{ \lambda_{\max}(\Gamma) \lambda_{\min}(\Gamma) }
        { \lambda^2_{\max}(\Gamma) - \epsilon_{\Gamma} \lambda_{\min}(\Gamma) }
    } \cdot r \right\} , 
\end{equation}
and the closed-loop system $x^{+} = A(x)x + B(x)\sat(Kx)$ 
starting from any $x(0) \in \mathcal{B}_{r_0} \cap \mathcal{E}_{P,1}$ 
will exponentially converge to the origin.
\end{theorem}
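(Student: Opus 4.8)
The plan is to invoke Lemma~\ref{lema:intermeLyaCondi} with $\tau = 1$, so that it remains only to exhibit $P\succ0$, $\epsilon>0$ and $V(x)=x^\top Px$ for which the one-step decrease \eqref{eq:LyaCondition_satur} holds on $\mathcal{B}_r\cap\mathcal{E}_{P,1}$, and to check that $r_0$ from \eqref{eq:modelBased_r0_satur} turns into \eqref{eq:modelBased_r0_satur_thm} after the substitutions $P=\Gamma^{-1}$, $\lambda_{\max}(P)=\lambda_{\min}^{-1}(\Gamma)$, $\lambda_{\min}(P)=\lambda_{\max}^{-1}(\Gamma)$ and $\epsilon=\epsilon_\Gamma/\lambda_{\max}^2(\Gamma)$, exactly as in the proof of Theorem~\ref{thm:LMIsolution}.

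First I would read off the geometry from \eqref{eq:subsetCondi_satur}. Set $L:=W\Gamma^{-1}$. A Schur complement on \eqref{eq:subsetCondi_satur} (using $\Gamma\succ0$) gives $\bar u_i^2\ge W_{(i)}\Gamma^{-1}W_{(i)}^\top=L_{(i)}\Gamma L_{(i)}^\top$ for each $i$; since $\max_{x^\top Px\le1}(L_{(i)}x)^2=L_{(i)}P^{-1}L_{(i)}^\top=L_{(i)}\Gamma L_{(i)}^\top$, this is precisely $\mathcal{E}_{P,1}\subseteq\mathcal{S}(L)$, so Lemma~\ref{lema:sectorCondi} is available on $\mathcal{E}_{P,1}$. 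Choosing the diagonal $N:=S^{-1}\succ0$, for $u=Kx$ with $K=Y\Gamma^{-1}$ and any $x\in\mathcal{E}_{P,1}$, and using $\sat(Kx)=\phi(Kx)+Kx$, Lemma~\ref{lema:sectorCondi} yields
\begin{equation*}
    \phi(Kx)^\top N\big(\phi(Kx)+(K+L)x\big)\le0 .
\end{equation*}

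Next I would process \eqref{eq:stableCondi_satur}. Its left-hand side is affine in $G$, so by the convex-hull argument already used for Theorem~\ref{thm:LMIsolution} together with \eqref{eq:G_subset_convG}, feasibility for every $G\in\bar{\mathcal{G}}$ implies \eqref{eq:stableCondi_satur} for $G=[A(x)\ B(x)]$ at all $x\in\mathcal{B}_r$. Taking the Schur complement with respect to the $(3,3)$ block $-\Gamma$, then performing the congruence transformation by $\blkdiag(\Gamma^{-1},S^{-1})$, and using $\Gamma^{-1}[\Gamma\ Y^\top]=\smat{I\\K}^\top$, $S^{-1}[0\ S]=\smat{0\\I}^\top$, $G\smat{I\\K}=A(x)+B(x)K=:A_{\cl}$ and $G\smat{0\\I}=B(x)$, collapses \eqref{eq:stableCondi_satur} into
\begin{equation*}
    \bmat{ A_{\cl}^\top PA_{\cl}-P+\epsilon_\Gamma P^2 & A_{\cl}^\top PB(x)-(K+L)^\top N \\[3pt] \ast & B(x)^\top PB(x)-2N } \preceq 0, \quad \forall x\in\mathcal{B}_r .
\end{equation*}
Writing $x^+=A_{\cl}x+B(x)\phi(Kx)$ and $\xi:=\big(x,\phi(Kx)\big)$, a direct expansion gives $V(x^+)-V(x)=\xi^\top\smat{A_{\cl}^\top PA_{\cl}-P & A_{\cl}^\top PB(x)\\ \ast & B(x)^\top PB(x)}\xi$. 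Adding $-2\phi(Kx)^\top N\big(\phi(Kx)+(K+L)x\big)\ge0$ (the sector inequality above) and then invoking the displayed matrix inequality, the $\xi$-cross terms assemble so that
\begin{equation*}
    V(x^+)-V(x)\le -\epsilon_\Gamma\, x^\top P^2 x \le -\frac{\epsilon_\Gamma}{\lambda^2_{\max}(\Gamma)}\,|x|^2
\end{equation*}
for every $x\in\mathcal{B}_r\cap\mathcal{E}_{P,1}$. With $\epsilon:=\epsilon_\Gamma/\lambda^2_{\max}(\Gamma)>0$ this is \eqref{eq:LyaCondition_satur}, and Lemma~\ref{lema:intermeLyaCondi} delivers the ROA $\mathcal{B}_{r_0}\cap\mathcal{E}_{P,1}$ with $r_0$ as claimed.

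The main obstacle is the bookkeeping in the third step: verifying that the Schur complement plus congruence transformation land exactly on the displayed $2\times2$ form — in particular that the $(1,1)$ block keeps the extra $\epsilon_\Gamma P^2$ that produces the strict decrease, and that the $(1,2)$ block assembles into $-(K+L)^\top N$ so it cancels the cross term generated by the sector condition. One must also keep track that the sector inequality is valid only on $\mathcal{S}(L)$; this is exactly why \eqref{eq:subsetCondi_satur} (hence $\mathcal{E}_{P,1}\subseteq\mathcal{S}(L)$) is needed and why the ROA is necessarily intersected with $\mathcal{E}_{P,1}$ rather than being a plain ball.
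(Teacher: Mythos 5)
Your proposal is correct and follows essentially the same route as the paper's proof: establish $\mathcal{E}_{P,1}\subseteq\mathcal{S}(L)$ from \eqref{eq:subsetCondi_satur} to activate the sector condition with $N=S^{-1}$, reduce \eqref{eq:stableCondi_satur} to all $x\in\mathcal{B}_r$ via the convex-hull argument, convert it by Schur complement and congruence into the $2\times2$ quadratic-form condition that absorbs the sector term, and conclude via Lemma~\ref{lema:intermeLyaCondi} with $\tau=1$. The only (immaterial) differences are the order of the Schur-complement and congruence steps and that you keep $\epsilon_\Gamma P^2$ in the matrix and pass to $\epsilon=\epsilon_\Gamma/\lambda^2_{\max}(\Gamma)$ at the scalar level, whereas the paper relaxes $\epsilon_\Gamma I_n$ to $\epsilon\Gamma^2$ at the matrix level first.
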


\begin{proof}
Pre- and post-multiplying the inequalities in \eqref{eq:subsetCondi_satur} by $\blkdiag(\Gamma^{-1}, 1)$ yields, with $L = W \Gamma^{-1}$,
\[ 
    \bmat{\Gamma^{-1} & L_{(i)}^{\top} \\[3pt] L_{(i)} & \bar{u}_i^2} \succeq 0 
    \Leftrightarrow L_{(i)} \Gamma L_{(i)}^{\top} \leq \bar{u}_i^2
    \Leftrightarrow | L_{(i)} \Gamma^{1/2} | \leq \bar{u}_i .
\] 
For every $x \in \mathcal{E}_{P,1}$, we have $| P^{1/2} x | \leq 1$ 
and 
\[
    | L_{(i)} x | = | L_{(i)} \Gamma^{1/2} P^{1/2} x | 
    \leq | L_{(i)} \Gamma^{1/2} | \cdot | P^{1/2} x |
    = \bar{u}_i,
\]
which implies $x \in \mathcal{S}(L)$, namely, $\mathcal{E}_{P,1} \subseteq \mathcal{S}(L)$.
By Lemma~\ref{lema:sectorCondi}, we conclude that \eqref{eq:subsetCondi_satur} results in
\begin{equation}\label{eq:sectorCondi_thm}
    \phi(u)^{\top} N \big( \sat(u) + Lx \big) \leq 0 \quad \forall x \in \mathcal{E}_{P,1}
\end{equation}
with any diagonal positive definite matrix $N$. \\
Define the function
\[
    c(G) :=
    \bmat{
        -\Gamma \! + \epsilon_{\Gamma} I_n 
        & \ast 
        & \ast
        \\[3pt]
        -Y-W & -2 S & \ast 
        \\[3pt]
        G \bmat{\Gamma \\ Y} & G \bmat{0 \\ S} & -\Gamma 
    }.
\]
Condition \eqref{eq:stableCondi_satur} is equivalent to
\begin{align*}
    c(G) \preceq 0 \ \ \forall G \in \bar{\mathcal{G}} 
    \iff & c(G) \preceq 0 \ \ \forall G \in {\rm{conv}}(\bar{\mathcal{G}}) 
    \\
    \overset{\eqref{eq:G_subset_convG}}{\implies} & c(G) \preceq 0 \ \ \forall G \in \mathcal{G}
    \\
    \overset{\eqref{eq:GxOO}}{\iff} & c\big( [A(x) \ B(x)] \big) \preceq 0 \ \ \forall x \in \mathcal{B}_r .
\end{align*}
Taking $W = L \Gamma$, $N = S^{-1}$, $Y = K \Gamma$, 
pre- and post-multiplying $c\big( [A(x) \ B(x)] \big)$ by $\blkdiag(I_n, N, I_n)$ 
results in
\begin{equation*}
    \bmat{
        -\Gamma + \epsilon_{\Gamma} I_n & \ast & \ast
        \\
        -N (K \Gamma + L \Gamma) & -2N & \ast
        \\
        A(x) \Gamma + B(x) K \Gamma & B(x) & -\Gamma
    } \preceq 0 
    \quad \forall x \in \mathcal{B}_r .
\end{equation*}
Let $\epsilon = { \epsilon_{\Gamma} } / { \lambda^{2}_{\max}(\Gamma) }$,
the above inequality implies
\begin{equation*}
    \bmat{
        -\Gamma + \epsilon \Gamma^{2} & \ast & \ast
        \\
        -N (K \Gamma + L \Gamma) & -2N & \ast
        \\
        A(x) \Gamma + B(x) K \Gamma & B(x) & -\Gamma
    } \preceq 0 
    \quad \forall x \in \mathcal{B}_r .
\end{equation*}
Pre- and post-multiplying both sides of the above inequality by $\blkdiag(P, I_m, I_n)$ with $P = \Gamma^{-1}$,
and by Schur complement, the above inequality is equivalent to 
$H(x) \preceq 0$ for all $x \in \mathcal{B}_r$, where
\begin{align}
    & H(x) := \label{eq:H(x)}
    \\
    & \left[ \begin{matrix}
        \big( A(x)+B(x)K \big)^{\top} P \big( A(x)+B(x)K \big) - P + \epsilon I_n 
        \\
        B(x)^{\top} P \big( A(x)+B(x)K \big) - N (K+L)
    \end{matrix} \right. \notag
    \\
    & \hspace{20mm} 
    \left.\begin{matrix}
        \big( A(x)+B(x)K \big)^{\top} P B(x) - (K+L)^{\top} N
        \\
        B(x)^{\top} P B(x) - 2 N
    \end{matrix} \right]. \notag
\end{align}
This fact implies, for $V(x) := x^{\top} P x$,
\begin{align*}
    & \hspace{11mm} \bmat{x \\ \phi(Kx)}^{\top} H(x) \bmat{x \\ \phi(Kx)} \leq 0
    \quad \forall x \in \mathcal{B}_r
    \\
    & \overset{\eqref{eq:nSYS_satur_closed},\eqref{eq:sectorCondi}}{\iff} \
    V(x^+) - V(x) 
    \\
    & \hspace{12mm} - 2 \phi(Kx)^{\top} N \big( \sat(Kx) + Lx \big) \leq -\epsilon |x|^2
    \ \ \forall x \in \mathcal{B}_r .
\end{align*}
Because of \eqref{eq:subsetCondi_satur} and \eqref{eq:sectorCondi_thm}, we obtain
\begin{align*}
    V(x^+) - V(x) \leq - \epsilon |x|^2 \quad
    \forall x \in \mathcal{B}_r \cap \mathcal{E}_{P,1} ,
\end{align*}
which is consistent with \eqref{eq:LyaCondition_satur}. 
Then by Lemma~\ref{lema:intermeLyaCondi}, 
and rewriting \eqref{eq:modelBased_r0_satur} to \eqref{eq:modelBased_r0_satur_thm}
via $\epsilon = { \epsilon_{\Gamma} } / { \lambda^{2}_{\max}(\Gamma) }$,
$\lambda_{\max}(P) = \lambda^{-1}_{\min}(\Gamma)$ and $\lambda_{\min}(P) = \lambda^{-1}_{\max}(\Gamma)$,
we complete the proof.
\end{proof}

\begin{example}
Consider the system \eqref{eq:example_sys} as in Example~\ref{example:simplest}.
Take $r = 1.1$ for $\mathcal{B}_r$. 
By \eqref{eq:min_max_gji}, we have the set $\bar{\mathcal{G}}$ as in \eqref{eq:bar_set_G}.
For different level of the saturation $\bar{u} \in \{4.0, \, 2.0, \, 1.0, \, 0.5\}$,
solving \eqref{eq:LMI_SufficientLyaCondition_satur} in Theorem~\ref{thm:LMIsolution_satur} via CVX yields 
$K = [-1.8155 \ -2.3147]$, $[-1.8155 \ -2.3146]$, $[-1.8155 \ -2.3147]$ and $[-1.9241 \ -2.3214]$, respectively.
Note that the program \eqref{eq:LMI_SufficientLyaCondition_satur} 
in Theorem~\ref{thm:LMIsolution_satur} is only a feasibility problem, 
hence we set minimizing 
$[\lambda_{\max}(\Gamma) - \lambda_{\min}(\Gamma) - \epsilon_{\Gamma} - 0.1\text{trace}(\Gamma)]$
as the objective when solving the LMIs \eqref{eq:LMI_SufficientLyaCondition_satur}.
The estimate of ROA, $\mathcal{B}_{r_0} \cap \mathcal{E}_{P,1}$, 
for different values of $\bar u$ is shown in Figure~\ref{fig:model_satur_u}.
\end{example}

\begin{figure}[ht]
    \centering
    \subfigure{\includegraphics[width=0.48\columnwidth]{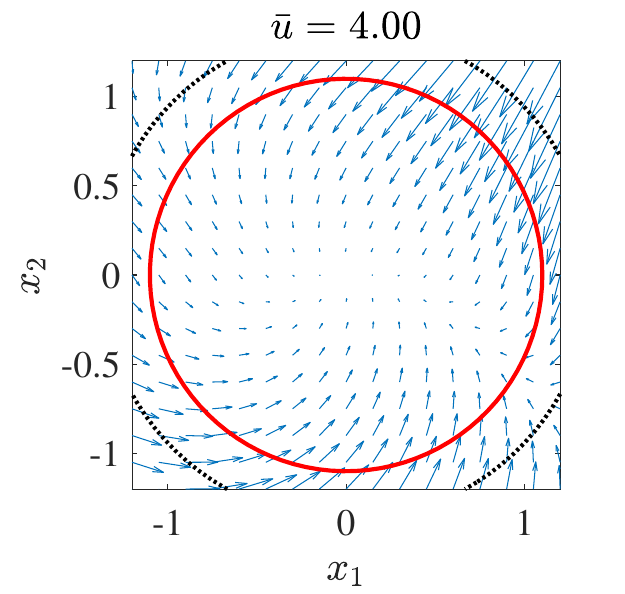}} 
    \subfigure{\includegraphics[width=0.48\columnwidth]{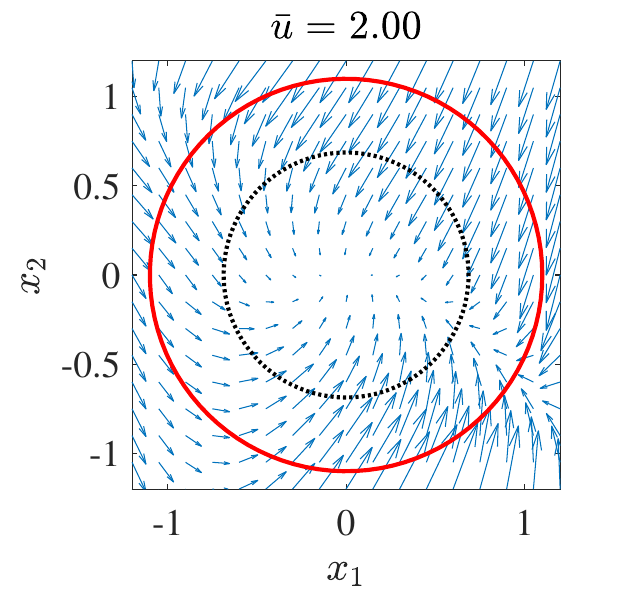}} \\[-6pt]
    \subfigure{\includegraphics[width=0.48\columnwidth]{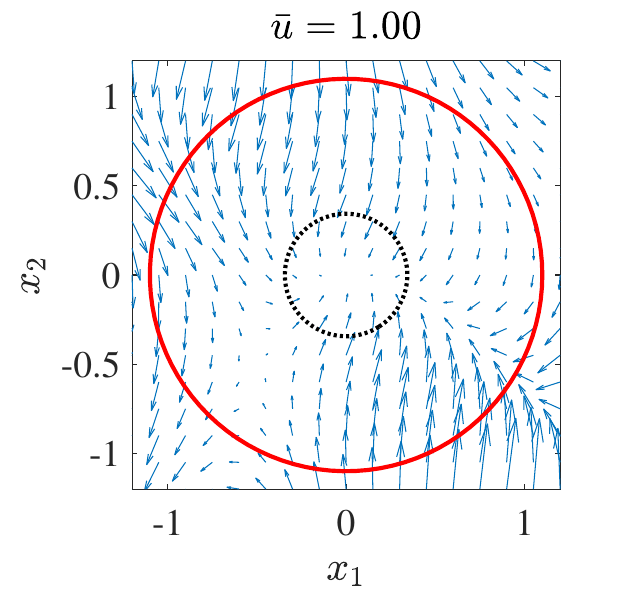}} 
    \subfigure{\includegraphics[width=0.48\columnwidth]{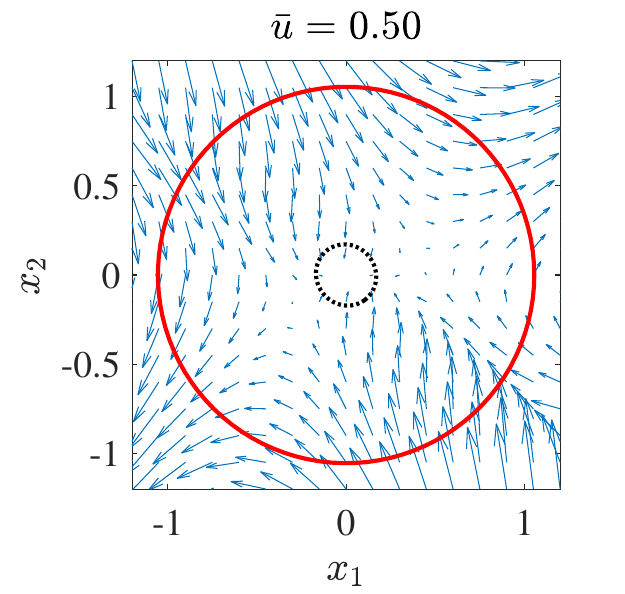}} 
    \vspace{-3mm}
    \caption{Estimate of ROA, $\mathcal{B}_{r_0} \cap \mathcal{E}_{P,1}$, for different saturation levels.
    The red solid ball is $\mathcal{B}_{r_0}$; 
    the black dashed ellipsoid is $\mathcal{E}_{P,1}$;
    the background is the phase portrait regarding the closed-loop system $x^{+} = A(x)x + B(x) \sat(Kx)$.}
    \label{fig:model_satur_u}
\end{figure}

\section{Data-Driven Controller}\label{sec:DataBased}

\subsection{Problem Formulation}\label{sec:DataBasedPri}
In this section, for the system \eqref{eq:nSYS}, 
while the state matrix $A(x)$ and input matrix $B(x)$ are assumed to be unknown,
we assume the existence of a known function library that can construct $A(x)$ and $B(x)$, as stated in the following assumption.

\begin{assumption}\label{ass:function_qi}
Condition (i) and (ii) below holds. 
\\
(i) For each $j=1,\dots,n$, we know a vector-valued function 
$\xi_{A_j} : \real^{n} \mapsto \real^{n_{A_j}}$ 
which is continuous on the domain $\mathbb{X} \subseteq \real^{n}$,
such that
\begin{equation}\label{eq:function_Aj}
    \big( a_{1j}(x), \dots, a_{ij}(x), \dots, a_{nj}(x) \big) = E_{A_j} \xi_{A_j}(x)
\end{equation}
for some constant matrix $E_{A_j} \in \real^{n \times n_{A_j}}$.
\\
(ii) For each $j=1,\dots,m$, we know a vector-valued function
$\xi_{B_j} : \real^{n} \mapsto \real^{n_{B_j}}$ 
which is continuous on the domain $\mathbb{X} \subseteq \real^{n}$,
such that
\begin{equation}\label{eq:function_Bj}
    \big( b_{1j}(x), \dots, b_{ij}(x), \dots, b_{nj}(x) \big) = E_{B_j} \xi_{B_j}(x)
\end{equation}
for some constant matrix $E_{B_j} \in \real^{n \times n_{B_j}}$.
\end{assumption}

Assumption~\ref{ass:function_qi} allows us to rewrite the system \eqref{eq:nSYS} as
\begin{subequations}\label{eq:nSYSrewrite}
\begin{align}
    x^{+} & = A(x)x + B(x) u \notag
    \\
    & = [ E_{A_1} \xi_{A_1}(x) \dots E_{A_n} \xi_{A_n}(x) ] x \notag
    \\
    & \hspace{5mm}+ [ E_{B_1} \xi_{B_1}(x) \dots E_{B_m} \xi_{B_m}(x) ] u \notag
    \\
    & = E_A \Xi_A(x) x + E_B \Xi_B(x) u 
\end{align}
where
\begin{align}
    & E_A := [ E_{A_1} \dots E_{A_n} ] \in \real^{n \times n_A},
    \quad n_A := \sum_{j=1}^{n} n_{A_j} ,
    \\
    & \Xi_A(x) := \blkdiag \big( \xi_{A_1}(x), \dots, \xi_{A_n}(x) \big),
    \\
    & E_B := [ E_{B_1} \dots E_{B_m} ] \in \real^{n \times n_B},
    \quad n_B := \sum_{j=1}^{m} n_{B_j} ,
    \\
    & \Xi_B(x) := \blkdiag \big( \xi_{B_1}(x), \dots, \xi_{B_m}(x) \big).
\end{align}
\end{subequations}
Note that $\big( \Xi_A(x), \Xi_B(x) \big)$ are known by Assumption~\ref{ass:function_qi} 
and $( E_A, E_B )$ are unknown.
The deficiency of model knowledge is compensated by the data collected from experiments.

During the experiment, 
we consider that the system \eqref{eq:nSYSrewrite} 
is affected by process noise $d \in \real^{n}$, 
i.e.,
\begin{equation}\label{eq:nSYSrewrite_distur}
    x_{d}^{+} = E_A \Xi_A(x_d) x_d + E_B \Xi_B(x_d) u_d + d.
\end{equation}
The experimental input data is $\{u_{d}(k)\}_{k=0}^{T-1}$ 
and the corresponding state data is $\{x_{d}(k)\}_{k=0}^{T}$, 
where $T>0$ is the number of samples.

The problem of this section is summarized as follows.

\begin{problem}\label{problem:dataBased}
Without the model knowledge $( E_A, E_B )$,
using the experimental data $\{u_{d}(k)\}_{k=0}^{T-1}$ and $\{x_{d}(k)\}_{k=0}^{T}$, 
design a state feedback controller $u = Kx$ such that 
the origin $x=0$ of the closed-loop system $x^{+} = \big( A(x) + B(x)K \big) x$ is locally exponentially stable.  
\end{problem}

\subsection{Data-Driven Controller Design}\label{sec:DBcontroller}
We first arrange the data as the following matrices
\begin{subequations}\label{eq:dataMatrix}
\begin{align}
    X_0 & \! := \big[ \Xi_A(x_d(0)) x_d(0) \dots \Xi_A(x_d(T \! - \! 1)) x_d(T \! - \! 1) \big] , \\
    X_1 & \! := \big[ x_d(1) \dots x_d(T) \big] , \\
    U_0 & \! := \big[ \Xi_B(x_d(0)) u_d(0) \dots \Xi_B(x_d(T \! - \! 1)) u_d(T \! - \! 1) \big] ,
\end{align}
and arrange the unknown noise as
\begin{align}
    D_0 :=  \big[ d(0) \dots d(T-1) \big] . \label{eq:dataMatrix_D0}
\end{align}
\end{subequations}
According to \eqref{eq:nSYSrewrite_distur}, 
the above data matrices satisfy
\begin{equation}\label{eq:DataRelation}
    X_1 = E_A X_0 + E_B U_0 + D_0.
\end{equation}
For further analysis, we assume that the noise sequence $D_0$ has bounded energy,
which is described in the following assumption.
\begin{assumption}\label{ass:DisturbanceEnergy}
The energy bound $\Theta \succeq 0$ is known, and
\begin{equation}\label{eq:DisturbanceEnergy}
    D_0 \in \mathcal{D} := \left\{ D \in \real^{n \times T} : D D^{\top} \preceq \Theta \right\}.
\end{equation}
\end{assumption}

Although the model knowledge of $E_A$ and $E_B$ is absent,
Assumption~\ref{ass:DisturbanceEnergy} enables us to characterize the set of all possible pairs $(\hat{E}_A, \hat{E}_B)$
that can generate data $\{u_{d}(k)\}_{k=0}^{T-1}$ and $\{x_{d}(k)\}_{k=0}^{T}$ 
while keeping the noise sequence in the set $\mathcal{D}$, namely,
\begin{subequations}
\begin{align}
    & \mathcal{C} 
    := \left\{ [ \hat{E}_A \ \hat{E}_B ] : X_1 = \hat{E}_A X_0 + \hat{E}_B U_0 + D, \ D \in \mathcal{D} \right\} \notag 
    \\
    & = \left\{ [ \hat{E}_A \ \hat{E}_B ] : X_1 = \hat{E}_A X_0 + \hat{E}_B U_0 + D, \ D D^{\top} \preceq \Theta \right\} \notag 
    \\
    & = \left\{ [ \hat{E}_A \ \hat{E}_B ] : \Big[ X_1 - \hat{E}_A X_0 - \hat{E}_B U_0 \Big] \cdot \Big[\ast\Big]^{\top}  \preceq \Theta \right\} \notag 
    \\
    & = \left\{ Z \! = \! [\hat{E}_A \ \hat{E}_B] : Z \mathbf{A} Z^{\top} \!\! + Z \mathbf{B}^{\top} \!\! + \mathbf{B} Z^{\top} \!\! + \mathbf{C} \preceq 0 \right\} 
\end{align}
with
\begin{align}
    & \mathbf{A} := \bmat{X_0 \\ U_0} \bmat{X_0 \\ U_0}^{\top}, \ 
    \mathbf{B} := - X_1 \bmat{X_0 \\ U_0}^{\top} ,
    \\
    & \hspace*{18mm} \mathbf{C} := X_1 X_1^{\top} - \Theta.  
\end{align}
\end{subequations}

We make the next assumption on data.
\begin{assumption}\label{ass:DataFullRow}
Data matrix $\bmat{X_0 \\ U_0}$ has full row rank.
\end{assumption}

By Assumption~\ref{ass:DataFullRow} and \cite[Proposition 1]{AndreaPetersen2022}, 
the set $\mathcal{C}$ equivalently rewrites as
\begin{subequations}\label{eq:DataConsistentSetC_re}
\begin{align}
    & \mathcal{C}  = \big\{ \mathbf{Z}_c + \mathbf{Q}^{1/2} \Upsilon \mathbf{A}^{-1/2} \colon \Upsilon \Upsilon^\top \preceq  I_n \big\} , \\
    & \, \mathbf{Z}_c := - \mathbf{B} \mathbf{A}^{-1}, \ \mathbf{Q} := \mathbf{B} \mathbf{A}^{-1} \mathbf{B}^\top - \mathbf{C}. \label{eq:setC:ZQ_Gxu}
\end{align}
\end{subequations}

Lemma~\ref{lema:modelBasedSolution} in the model-based case presents a groundwork to Problem~\ref{problem:modelBased}, which in turn informs our approach to solving Problem~\ref{problem:dataBased},
that is, 
finding $P \succ 0$, $\epsilon > 0$ and $K$ such that
\begin{subequations}\label{eq:LyaCondition_data}
\begin{align}
    & \ \big[ \ast \big]^{\top} P \cdot \big[ \hat{E}_A \Xi_A(x) + \hat{E}_B \Xi_B(x) K \big] - P 
    \preceq - \epsilon I_n \label{eq:LyaCondition1_data} 
    \\
    & \forall x \in \mathcal{B}_{r} := \{ x \in \real^{n} : |x| \leq r \} \text{ and } 
    \forall (\hat{E}_A, \hat{E}_B) \in \mathcal{C} . \label{eq:LyaCondition2_data}
\end{align}
\end{subequations}
However, in \eqref{eq:LyaCondition2_data}, 
there are infinite elements in $\mathcal{B}_{r}$ and $\mathcal{C}$. 
To overcome these challenges, 
we first use the convex polytope of 
$Q(x) := \blkdiag\big( \Xi_A(x), \Xi_B(x) \big)$ for $x \in \mathcal{B}_r$, 
and then apply Petersen's lemma \cite{AndreaPetersen2022} to handle $\mathcal{C}$.

Similar to \eqref{eq:GxOO}, we introduce
\begin{align}\label{eq:QxOO}
    & \mathcal{Q} := \big\{ Q \in \real^{(n_A + n_B) \times (n+m)} :
    \\
    & \hspace{20mm}  Q(x) = \blkdiag\big( \Xi_A(x), \Xi_B(x) \big), \ x \in \mathcal{B}_r \big\}.
    \notag
\end{align}
Bearing in mind Assumption~\ref{ass:function_qi}, 
the entries of $\xi_{A_j}$ for $j=1,\dots,n$ are given by
\begin{equation*}
    \big( \xi^{A}_{1j}, \dots, \xi^{A}_{ij}, \dots, \xi^{A}_{n_{A_j} j} \big)
    := \xi_{A_j} ;
\end{equation*}
the entries of $\xi_{B_j}$ for $j=1,\dots,m$ are given by
\begin{equation*}
    \big( \xi^{B}_{1j}, \dots, \xi^{B}_{ij}, \dots, \xi^{B}_{n_{B_j} j} \big)
    := \xi_{B_j}.
\end{equation*}
Let
\begin{subequations}\label{eq:min_max_qji}
\begin{align}
    & \bar{\xi}^{A}_{ij} := \max_{x \in \mathcal{B}_r} \xi^{A}_{ij}(x), &
    & \underline{\xi}^{A}_{ij} := \min_{x \in \mathcal{B}_r} \xi^{A}_{ij}(x) ,
    \\
    & \bar{\xi}^{B}_{ij} := \max_{x \in \mathcal{B}_r} \xi^{B}_{ij}(x), &
    & \underline{\xi}^{B}_{ij} := \min_{x \in \mathcal{B}_r} \xi^{B}_{ij}(x) .
\end{align}    
\end{subequations}
Note that the maximum and minimum values exist, 
since $\xi_{A_j}(x)$ and $\xi_{B_j}(x)$ are continuous by Assumption~\ref{ass:function_qi}.
Moreover, define
\begin{align}\label{eq:bar_set_Q}
    \bar{\mathcal{Q}} := \bigg\{ 
    & \blkdiag \big( \zeta_{A_1},\dots, \zeta_{A_n}, \zeta_{B_1},\dots, \zeta_{B_m} \big) : 
    \\
    & \zeta_{A_j} := \left( \zeta^{A}_{1j}, \dots, \zeta^{A}_{ij}, \dots, \zeta^{A}_{n_{A_j} j} \right) \! ,
    \ \zeta^{A}_{ij} \in \! \left\{ \bar{\xi}^{A}_{ij}, \ \underline{\xi}^{A}_{ij} \right\} \notag
    \\
    & \zeta_{B_j} := \left( \zeta^{B}_{1j}, \dots, \zeta^{B}_{ij}, \dots, \zeta^{B}_{n_{B_j} j} \right) \! ,
    \ \zeta^{B}_{ij} \in \! \left\{ \bar{\xi}^{B}_{ij}, \ \underline{\xi}^{B}_{ij} \right\} 
    \! \bigg\} . \notag
\end{align}
The number of elements in $\bar{\mathcal{Q}}$ is finite and up to a maximum of $2^{n_A + n_B}$.
By the definition and property of convex hull, we have
\begin{align*}
    & {\rm{conv}}(\bar{\mathcal{Q}}) := \bigg\{ 
    \blkdiag \big( \zeta_{A_1},\dots, \zeta_{A_n}, \zeta_{B_1},\dots, \zeta_{B_m} \big) : 
    \\
    & \hspace{11mm} 
    \zeta_{A_j} := \left( \zeta^{A}_{1j}, \dots, \zeta^{A}_{ij}, \dots, \zeta^{A}_{n_{A_j} j} \right),
    \ \zeta^{A}_{ij} \in \left[ \bar{\xi}^{A}_{ij}, \ \underline{\xi}^{A}_{ij} \right] 
    \\
    & \hspace{11mm} 
    \zeta_{B_j} := \left( \zeta^{B}_{1j}, \dots, \zeta^{B}_{ij}, \dots, \zeta^{B}_{n_{B_j} j} \right),
    \ \zeta^{B}_{ij} \in \left[ \bar{\xi}^{B}_{ij}, \ \underline{\xi}^{B}_{ij} \right]
    \! \bigg\} . 
\end{align*}
It is obvious that
\begin{equation}\label{eq:Q_subset_convQ}
    \mathcal{Q} \subseteq {\rm{conv}}(\bar{\mathcal{Q}}).
\end{equation}

Analogous to Theorem~\ref{thm:LMIsolution}, we have the next result.

\begin{lemma}\label{lema:polytope_data}
Under Assumption~\ref{ass:function_qi} and~\ref{ass:DisturbanceEnergy}, 
given a positive real number $r$ and the set $\mathcal{B}_r$, 
if there exist $\Gamma \succ 0$, $\epsilon_{\Gamma} > 0$ and $Y$ such that
\begin{subequations}\label{eq:LMI_SufficientLyaCondition_data}
\begin{align}
    & \bmat{ 
        -\Gamma & 
        [\hat{E}_A \ \hat{E}_B] Q \bmat{\Gamma \\ Y}
        \\
        \bmat{\Gamma \\ Y}^{\top} \!\! Q^{\top} [\hat{E}_A \ \hat{E}_B]^{\top}
        & - \Gamma + \epsilon_{\Gamma} I_n
    } \preceq 0
    \\
    & \hspace{2mm}
    \forall (\hat{E}_A,\hat{E}_B) \in \mathcal{C} \ \text{ and } \
    \forall Q \in \bar{\mathcal{Q}},
    \label{eq:LMI_SufficientLyaCondition_data-2}
\end{align}    
\end{subequations}
and let $u = Kx$ with $K = Y \Gamma^{-1}$, 
then there exists a region of attraction $\mathcal{B}_{r_0}$ with
\begin{equation}\label{eq:LMI_r0_data}
    r_0 = \min \left\{ r, \ \sqrt{ \frac{ \lambda_{\max}(\Gamma) \lambda_{\min}(\Gamma) }{ \lambda^2_{\max}(\Gamma) - \epsilon_{\Gamma} \lambda_{\min}(\Gamma) } } \cdot r \right\},
\end{equation}
and the closed-loop system $x^{+} = \big( A(x) + B(x)K \big) x$  
starting from any $x(0) \in \mathcal{B}_{r_0}$ 
will exponentially converge to the origin.
\end{lemma}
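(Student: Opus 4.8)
The plan is to reduce this statement to the model-based Theorem~\ref{thm:LMIsolution} by two observations: (i) a polytopic-cover argument over $\mathcal{Q}$, identical in spirit to the one used in the proof of Theorem~\ref{thm:LMIsolution}; and (ii) the fact that, although \eqref{eq:LMI_SufficientLyaCondition_data} is imposed for \emph{all} $(\hat{E}_A,\hat{E}_B)\in\mathcal{C}$, certifying closed-loop stability only requires it to hold at the \emph{true} parameter pair $(E_A,E_B)$, which belongs to $\mathcal{C}$ by construction. Indeed, $X_1 = E_A X_0 + E_B U_0 + D_0$ by \eqref{eq:DataRelation}, and $D_0\in\mathcal{D}$ by Assumption~\ref{ass:DisturbanceEnergy}, so $(E_A,E_B)\in\mathcal{C}$.

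First I would fix an arbitrary $(\hat{E}_A,\hat{E}_B)\in\mathcal{C}$ and note that the left-hand side of \eqref{eq:LMI_SufficientLyaCondition_data} is affine — hence convex — in $Q$, since $Q\mapsto [\hat{E}_A\ \hat{E}_B]\,Q\,\bmat{\Gamma\\Y}$ is linear. By \cite[Proposition 1.14]{LMI2000Scherer}, feasibility of \eqref{eq:LMI_SufficientLyaCondition_data} over the finite vertex set $\bar{\mathcal{Q}}$ is equivalent to feasibility over ${\rm{conv}}(\bar{\mathcal{Q}})$, and by \eqref{eq:Q_subset_convQ} this implies feasibility over $\mathcal{Q}$. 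Recalling the definition of $\mathcal{Q}$ in \eqref{eq:QxOO} and specializing to $(\hat{E}_A,\hat{E}_B)=(E_A,E_B)$, and using $\blkdiag(\Xi_A(x),\Xi_B(x))$ together with \eqref{eq:nSYSrewrite} so that $[E_A\ E_B]\,\blkdiag(\Xi_A(x),\Xi_B(x)) = [E_A\Xi_A(x)\ E_B\Xi_B(x)] = [A(x)\ B(x)]$, I obtain
\begin{equation*}
    \bmat{ -\Gamma & A(x)\Gamma + B(x)Y \\ \ast & -\Gamma + \epsilon_\Gamma I_n } \preceq 0 \qquad \forall x\in\mathcal{B}_r,
\end{equation*}
which is exactly \eqref{eq:thm_model_step1}, i.e., the intermediate inequality reached partway through the proof of Theorem~\ref{thm:LMIsolution}.

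From this point the remainder of the argument is verbatim that of Theorem~\ref{thm:LMIsolution}: applying the Schur complement with $\Gamma\succ 0$, substituting $Y = K\Gamma$, and pre-/post-multiplying by $P := \Gamma^{-1}$ yields condition \eqref{eq:LyaCondition} of Lemma~\ref{lema:modelBasedSolution} with $\epsilon = \epsilon_\Gamma/\lambda^2_{\max}(\Gamma) > 0$. Lemma~\ref{lema:modelBasedSolution} then furnishes the region of attraction $\mathcal{B}_{r_0}$ and exponential convergence of $x^{+} = (A(x)+B(x)K)x$ from any $x(0)\in\mathcal{B}_{r_0}$; the radius $r_0$ in \eqref{eq:modelBased_r0} rewrites into \eqref{eq:LMI_r0_data} via $\lambda_{\max}(P) = \lambda^{-1}_{\min}(\Gamma)$ and $\lambda_{\min}(P) = \lambda^{-1}_{\max}(\Gamma)$, coinciding with \eqref{eq:LMI_r0}.

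I do not expect a genuine obstacle here: the only new step relative to Theorem~\ref{thm:LMIsolution} is the implication ``robust over $\mathcal{C}$ $\Rightarrow$ holds at the true plant,'' so the only things to verify carefully are that $(E_A,E_B)\in\mathcal{C}$ (immediate from Assumptions~\ref{ass:function_qi} and~\ref{ass:DisturbanceEnergy}) and the block identity $[E_A\ E_B]\,\blkdiag(\Xi_A(x),\Xi_B(x)) = [A(x)\ B(x)]$. Note that Petersen's lemma, the explicit description of $\mathcal{C}$ in \eqref{eq:DataConsistentSetC_re}, and Assumption~\ref{ass:DataFullRow} are \emph{not} needed for this lemma — they enter only later, to convert the semi-infinite condition ``$\forall (\hat{E}_A,\hat{E}_B)\in\mathcal{C}$'' into a finite, computationally tractable LMI.
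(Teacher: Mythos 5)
Your proof is correct and follows essentially the same route as the paper's: both specialize the condition over $\mathcal{C}$ to the true pair $(E_A,E_B)$ (which lies in $\mathcal{C}$ by Assumption~\ref{ass:DisturbanceEnergy}), use affinity in $Q$ together with \eqref{eq:Q_subset_convQ} to pass from $\bar{\mathcal{Q}}$ to $\mathcal{Q}$, recover \eqref{eq:thm_model_step1}, and then invoke the remainder of Theorem~\ref{thm:LMIsolution}'s argument. Your closing observation that Petersen's lemma and Assumption~\ref{ass:DataFullRow} are not needed here is also consistent with the paper, where they enter only in Lemma~\ref{lema:petersen_data}.
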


\begin{proof}
Assumption~\ref{ass:DisturbanceEnergy} ensures that the actual system is in the set $\mathcal{C}$, i.e.,
$[ E_A \ E_B ] \in \mathcal{C}$, 
thus \eqref{eq:LMI_SufficientLyaCondition_data} implies 
\begin{align}
    & \bmat{ 
        -\Gamma & [E_A \ E_B] Q \bmat{\Gamma \\ Y}
        \\
        \ast & - \Gamma + \epsilon_{\Gamma} I_n
    } \preceq 0 
    \ \ \forall Q \in \bar{\mathcal{Q}} \notag
    \\ 
    & \Updownarrow \notag
    \\
    & \bmat{ 
        -\Gamma & \! \smat{E_A & E_B} \smat{Q_A \Gamma \\ Q_B Y}
        \\
        \ast & - \Gamma + \epsilon_{\Gamma} I_n
    } \preceq 0 
    \ \forall \blkdiag(Q_A, Q_B) \! \in \! {\rm{conv}}(\bar{\mathcal{Q}}) \notag
    \\
    & \Downarrow \text{\scriptsize \eqref{eq:Q_subset_convQ}} \notag
    \\
    & \bmat{ 
        -\Gamma & E_A \Xi_A(x) \Gamma + E_B \Xi_B(x) Y 
        \\
        \ast & - \Gamma + \epsilon_{\Gamma} I_n
    } \preceq 0 
    \ \ \forall x \in \mathcal{B}_r \notag
    \\
    & \Updownarrow \text{\scriptsize \eqref{eq:nSYSrewrite}} \notag
    \\
    & \bmat{ 
        -\Gamma & A(x) \Gamma + B(x) Y 
        \\
        \ast & - \Gamma + \epsilon_{\Gamma} I_n
    } \preceq 0 
    \ \ \forall x \in \mathcal{B}_r , 
    \label{eq:Lema_data_step1}
\end{align}
which corresponds to \eqref{eq:thm_model_step1} in the proof of Theorem~\ref{thm:LMIsolution}.
Then following analogous arguments as in the proof of Theorem~\ref{thm:LMIsolution}, 
we can complete the proof of this lemma.
\end{proof}

The set $\mathcal{C}$ in \eqref{eq:LMI_SufficientLyaCondition_data-2} is handle by Petersen's lemma \cite{AndreaPetersen2022}, 
as detailed below.

\begin{lemma}\label{lema:petersen_data}
Under Assumption~\ref{ass:DataFullRow},
if there exist $\bar{\Gamma} \succ 0$, $\epsilon_{\bar{\Gamma}} > 0$ and $\bar{Y}$ such that
\begin{align}
    & \bmat{ 
        -\bar{\Gamma} - \mathbf{C} \!\! & \!\! 0 & \mathbf{B} 
        \\
        0 \!\!
        & \!\! - \bar{\Gamma} + \epsilon_{\bar{\Gamma}} I_n 
        & -\bmat{\bar{\Gamma} \\ \bar{Y}}^{\top} \!\!\! Q^{\top} 
        \\[5pt]
        \mathbf{B}^{\top} \!\! & \!\! -Q \bmat{\bar{\Gamma} \\ \bar{Y}} & - \mathbf{A} 
    } 
    \preceq 0 
    \ \ \forall Q \in \bar{\mathcal{Q}},
    \label{eq:petersenLMIGxu}
\end{align}
then \eqref{eq:LMI_SufficientLyaCondition_data} holds for some $\Gamma \succ 0$, $\epsilon_{\Gamma} > 0$ and $Y$.
Conversely, suppose additionally $\mathbf{B} \mathbf{A}^{-1} \mathbf{B}^\top \!\! - \mathbf{C} \! \neq \! 0$,
then \eqref{eq:LMI_SufficientLyaCondition_data} implies \eqref{eq:petersenLMIGxu}.
\end{lemma}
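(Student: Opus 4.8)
The plan is to reduce both implications, one vertex $Q \in \bar{\mathcal{Q}}$ at a time, to a single application of Petersen's lemma \cite{AndreaPetersen2022}. First I would use the parametrization \eqref{eq:DataConsistentSetC_re} of the data-consistent set $\mathcal{C}$: writing $[\hat{E}_A\ \hat{E}_B] = \mathbf{Z}_c + \mathbf{Q}^{1/2}\Upsilon\mathbf{A}^{-1/2}$ with $\Upsilon\Upsilon^\top \preceq I_n$, the block matrix in \eqref{eq:LMI_SufficientLyaCondition_data} at a fixed $Q$ becomes $\mathcal{M}(Q) + \mathcal{P}\Upsilon\mathcal{N}(Q) + \mathcal{N}(Q)^\top\Upsilon^\top\mathcal{P}^\top$, where $\mathcal{P} := \smat{\mathbf{Q}^{1/2} \\ 0}$, $\mathcal{N}(Q) := \big[\,0 \ \ \mathbf{A}^{-1/2} Q \smat{\Gamma\\ Y}\,\big]$, and $\mathcal{M}(Q)$ is the same block matrix with $[\hat{E}_A\ \hat{E}_B]$ replaced by the center $\mathbf{Z}_c$. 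Substituting $\mathbf{Z}_c = -\mathbf{B}\mathbf{A}^{-1}$ and $\mathbf{Q} = \mathbf{B}\mathbf{A}^{-1}\mathbf{B}^\top - \mathbf{C}$ and taking a Schur complement with respect to the block $-\mathbf{A}$ — invertible since Assumption~\ref{ass:DataFullRow} forces $\mathbf{A} = \smat{X_0 \\ U_0}\smat{X_0 \\ U_0}^\top \succ 0$ — identifies the $3\times3$-block inequality \eqref{eq:petersenLMIGxu} with the $\varepsilon = 1$ instance of the Petersen inequality $\mathcal{M}(Q) + \varepsilon\,\mathcal{P}\mathcal{P}^\top + \varepsilon^{-1}\mathcal{N}(Q)^\top\mathcal{N}(Q) \preceq 0$, after the (proportional, as forced by the off-diagonal block $\mathbf{Z}_c Q\smat{\bar{\Gamma}\\ \bar{Y}}$) change of variables $\bar{\Gamma} = \Gamma/\varepsilon$, $\bar{Y} = Y/\varepsilon$, $\epsilon_{\bar{\Gamma}} = \epsilon_\Gamma/\varepsilon$.

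With this dictionary the forward implication is immediate: \eqref{eq:petersenLMIGxu} is exactly the $\varepsilon = 1$ Petersen inequality, holding with the common triple $(\bar{\Gamma},\bar{Y},\epsilon_{\bar{\Gamma}})$ for every $Q \in \bar{\mathcal{Q}}$, so undoing the Schur complement and applying the sufficiency direction of Petersen's lemma (the completion-of-squares estimate $\mathcal{P}\Upsilon\mathcal{N} + \mathcal{N}^\top\Upsilon^\top\mathcal{P}^\top \preceq \mathcal{P}\mathcal{P}^\top + \mathcal{N}^\top\mathcal{N}$ valid for $\Upsilon\Upsilon^\top \preceq I_n$) yields \eqref{eq:LMI_SufficientLyaCondition_data} with $(\Gamma,Y,\epsilon_\Gamma) = (\bar{\Gamma},\bar{Y},\epsilon_{\bar{\Gamma}})$. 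For the converse I would run the argument backwards: from \eqref{eq:LMI_SufficientLyaCondition_data} with some $(\Gamma,Y,\epsilon_\Gamma)$, at each $Q$ the robust inequality over $\mathcal{C}$ holds, so the necessity direction of Petersen's lemma — which is where the extra hypothesis $\mathbf{B}\mathbf{A}^{-1}\mathbf{B}^\top - \mathbf{C} \neq 0$, i.e. $\mathcal{P} \neq 0$, enters — produces a scalar $\varepsilon_Q > 0$ satisfying the $\varepsilon_Q$-inequality, and rescaling by $\varepsilon_Q$ turns it into \eqref{eq:petersenLMIGxu} at that vertex with the triple $(\Gamma/\varepsilon_Q,\,Y/\varepsilon_Q,\,\epsilon_\Gamma/\varepsilon_Q)$.

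The main obstacle is exactly this last step: Petersen's lemma supplies a scalar per vertex, whereas \eqref{eq:petersenLMIGxu} requires a single triple $(\bar{\Gamma},\bar{Y},\epsilon_{\bar{\Gamma}})$ — equivalently a single $\varepsilon$ — that works for all of the finitely many $Q \in \bar{\mathcal{Q}}$ at once. One way to handle this is to note that for each vertex the set $I_Q$ of admissible $\varepsilon$ is an interval whose upper endpoint is controlled by the vertex-independent constraint $\varepsilon\,\mathbf{Q} \preceq \Gamma$ (finite and positive since $\Gamma \succ 0$, $\mathbf{Q} \succeq 0$, $\mathbf{Q} \neq 0$), and to use the slack in the original decision variables $(\Gamma,Y,\epsilon_\Gamma)$ to bring the vertex-dependent lower endpoints below a common level, so that some $\varepsilon^\star \in \bigcap_{Q \in \bar{\mathcal{Q}}} I_Q$ can be selected; the remaining items — the Schur-complement bookkeeping, the equivalence $\Upsilon\Upsilon^\top \preceq I_n \Leftrightarrow \Upsilon^\top\Upsilon \preceq I$, and the sign conditions $\bar{\Gamma} \succ 0$, $\epsilon_{\bar{\Gamma}} > 0 \Leftrightarrow \Gamma \succ 0$, $\epsilon_\Gamma > 0$, $\varepsilon > 0$ — are routine.
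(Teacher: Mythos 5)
Your proposal is correct and takes essentially the same route as the paper: a Schur complement with respect to the $-\mathbf{A}$ block (positive definite under Assumption~\ref{ass:DataFullRow}), a rewriting via $\mathbf{Z}_c$ and $\mathbf{Q}$ to expose the Petersen structure, absorption of the Petersen multiplier into a proportional rescaling of $(\bar{\Gamma},\bar{Y},\epsilon_{\bar{\Gamma}})$ (the paper's $\tau$ in \eqref{eq:petersen_step_tau} is your $1/\varepsilon$), and an application of both directions of Petersen's lemma, with $\varepsilon=1$ settling the forward implication. The uniformity-over-$\bar{\mathcal{Q}}$ issue you flag in the converse is a genuine subtlety, but the paper's own proof does not resolve it either---it invokes the per-$Q$ necessity of Petersen's lemma and tacitly uses a single $\tau$ for all vertices---so on the forward implication (the only one used downstream, in Theorem~\ref{thm:dataDrivenSolution}) your argument matches the paper exactly, and on the converse your interval-intersection sketch, while not fully rigorous, is at least as careful as the source.
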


\begin{proof}
By Assumption~\ref{ass:DataFullRow} and Schur complement, 
\eqref{eq:petersenLMIGxu} is equivalent to
\begin{align*}
    0 \succeq 
    \bmat{ 
        -\bar{\Gamma} - \mathbf{C} & \! 0 \!
        \\ 
        \ast \! & \! - \bar{\Gamma} + \epsilon_{\bar{\Gamma}} I_n 
    }  
    + \bmat{
        \mathbf{B} 
        \\ 
        -\smat{\bar{\Gamma} \\ \bar{Y}}^{\top} \!\! Q^{\top}
    } \cdot \mathbf{A}^{-1} \Big[\ast\Big]^{\top}
\end{align*}
for all $Q \in \bar{\mathcal{Q}}$.
Bearing in mind the definitions of $\mathbf{Z}_c$ and $\mathbf{Q}$ as in \eqref{eq:setC:ZQ_Gxu}, 
the above inequality equivalently rewrites as
\begin{align*}
    0 \! \succeq \! \!
    \bmat{ 
        -\bar{\Gamma} 
        & 
        \hspace*{-1mm} \mathbf{Z}_c
        Q \smat{\bar{\Gamma} \\ \bar{Y}} 
        \\[10pt] 
        \ast 
        & 
        \hspace*{-2mm} - \bar{\Gamma} + \epsilon_{\bar{\Gamma}} I_n
    }  
    \! + \! \bmat{
        \mathbf{Q} & \hspace*{-1mm} 0 
        \\ 
        0 & \hspace*{-1mm} 0
    }   
    \! + \! \bmat{
        0 & \hspace*{-1mm} 0 
        \\ 
        0 & \hspace*{-1mm} 
        \smat{\bar{\Gamma} \\ \bar{Y}}^{\top} \!\! Q^{\top} 
        \hspace*{-1.0mm}  \mathbf{A}^{\!-1} \hspace*{-0.2mm} 
        Q \smat{\bar{\Gamma} \\ \bar{Y}}
    }. 
\end{align*}
Multiplying both sides by $\tau^{-1} > 0$ and letting 
\begin{equation}\label{eq:petersen_step_tau}
    \Gamma = \tau^{-1} \bar{\Gamma}, \quad 
    \epsilon_{\Gamma} = \tau^{-1} \epsilon_{\bar{\Gamma}}, \quad
    Y = \tau^{-1} \bar{Y},  
\end{equation} 
the above inequality is equivalent to
\begin{align}
    0 \succeq 
    \bmat{ 
        -\Gamma & \mathbf{Z}_c Q \smat{\Gamma \\ Y} 
        \\[10pt] 
        \ast & - \Gamma + \epsilon_{\Gamma} I_n
    }  
    & + \frac{1}{\tau} \bmat{\mathbf{Q} & 0 \\ 0 & 0} \label{eq:petersen_step_2}  
    \\
    & + \tau \bmat{
        0 & 0 
        \\ 
        0 & 
        \smat{\Gamma \\ Y}^{\top} \!\! Q^{\top} 
        \hspace*{-1.0mm}  \mathbf{A}^{\!-1} \hspace*{-0.2mm} 
        Q \smat{\Gamma \\ Y}
    }. \notag
\end{align}
By Petersen's Lemma reported in~\cite[Fact~2]{AndreaPetersen2022}, 
the previous condition implies
\begin{align}
    0 & \succeq \! 
    \bmat{ 
        -\Gamma & \hspace*{-1mm} \mathbf{Z}_c Q \smat{\Gamma \\ Y}
        \\[10pt] 
        \ast & \hspace*{-1mm} - \Gamma + \epsilon_{\Gamma} I_n
    }  
    \! + \! \bmat{\mathbf{Q}^{1/2} \\ 0} 
    \! \Upsilon \! 
    \bmat{0 & \hspace*{-1mm} \mathbf{A}^{-1/2} Q \smat{\Gamma \\ Y}} 
    \! \label{eq:petersen_step_1} 
    \\
    & + \bmat{0 \\ \smat{\Gamma \\ Y}^{\!\top} \! Q^{\top} \mathbf{A}^{-1/2}} 
    \Upsilon^{\top} 
    \bmat{\mathbf{Q}^{1/2} & \hspace*{-1mm} 0} 
    \ \, \forall \Upsilon \text{ with } \Upsilon \Upsilon^\top \preceq  I_n . \notag
\end{align}
Note that \eqref{eq:petersen_step_1} is a necessary and sufficient condition for \eqref{eq:petersen_step_2} 
under the additional condition $\mathbf{Q} \neq 0$, 
i.e., $\mathbf{B} \mathbf{A}^{-1} \mathbf{B}^\top - \mathbf{C} \neq 0$ \cite[Fact~2]{AndreaPetersen2022}.
According to \eqref{eq:DataConsistentSetC_re}, \eqref{eq:petersen_step_1} is equivalent to
\begin{equation*}
    \bmat{ 
        -\Gamma & Z Q \bmat{\Gamma \\ Y} 
        \\[10pt] 
        \ast & - \Gamma + \epsilon_{\Gamma} I_n
    } \preceq 0 
    \quad \forall Z \in \mathcal{C},
\end{equation*}\
which corresponds to \eqref{eq:LMI_SufficientLyaCondition_data}.
This completes the proof.
\end{proof}

The next result, obtained from Lemma~\ref{lema:polytope_data} and Lemma~\ref{lema:petersen_data}, 
is the solution to Problem~\ref{problem:dataBased}.

\begin{theorem}\label{thm:dataDrivenSolution}
Under Assumptions \ref{ass:function_qi}, \ref{ass:DisturbanceEnergy} and \ref{ass:DataFullRow},
given a positive real number $r$ and the set $\mathcal{B}_r$, 
if there exist $\bar{\Gamma} \succ 0$, $\epsilon_{\bar{\Gamma}} > 0$ and $\bar{Y}$ such that \eqref{eq:petersenLMIGxu} holds,
and let $u = Kx$ with $K = \bar{Y} \bar{\Gamma}^{-1}$, 
then there exists a region of attraction $\mathcal{B}_{\bar{r}_0}$ with
\begin{equation}\label{eq:LMI_r0_data_bar}
    \bar{r}_0 = \min \left\{ r, \ \sqrt{ \frac{ \lambda_{\max}(\bar{\Gamma}) \lambda_{\min}(\bar{\Gamma}) }{ \lambda^2_{\max}(\bar{\Gamma}) - \epsilon_{\bar{\Gamma}} \lambda_{\min}(\bar{\Gamma}) } } \cdot r \right\},
\end{equation}
and the closed-loop system $x^{+} = \big( A(x) + B(x)K \big) x$  
starting from any $x(0) \in \mathcal{B}_{\bar{r}_0}$ 
will exponentially converge to the origin.
\end{theorem}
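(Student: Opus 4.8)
The plan is to obtain Theorem~\ref{thm:dataDrivenSolution} by chaining Lemma~\ref{lema:petersen_data} and Lemma~\ref{lema:polytope_data}, and then reconciling the two parametrizations of the region of attraction. First I would invoke Lemma~\ref{lema:petersen_data}: feasibility of \eqref{eq:petersenLMIGxu} with $\bar{\Gamma} \succ 0$, $\epsilon_{\bar{\Gamma}} > 0$, $\bar{Y}$ yields, through the scaling $\Gamma = \tau^{-1}\bar{\Gamma}$, $\epsilon_{\Gamma} = \tau^{-1}\epsilon_{\bar{\Gamma}}$, $Y = \tau^{-1}\bar{Y}$ of \eqref{eq:petersen_step_tau} for some $\tau > 0$, a triple $(\Gamma,\epsilon_{\Gamma},Y)$ satisfying \eqref{eq:LMI_SufficientLyaCondition_data}. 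The key observation is that the induced gain is invariant under this scaling, since $K = Y\Gamma^{-1} = (\tau^{-1}\bar{Y})(\tau^{-1}\bar{\Gamma})^{-1} = \bar{Y}\bar{\Gamma}^{-1}$; hence the controller stated in the theorem is exactly the one produced by the chain of lemmas.

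Next I would apply Lemma~\ref{lema:polytope_data} to $(\Gamma,\epsilon_{\Gamma},Y)$: under Assumptions~\ref{ass:function_qi}, \ref{ass:DisturbanceEnergy} and \ref{ass:DataFullRow} it gives local exponential convergence of $x^{+} = (A(x)+B(x)K)x$ from every $x(0) \in \mathcal{B}_{r_0}$ with $r_0$ as in \eqref{eq:LMI_r0_data}. It then remains to verify $r_0 = \bar{r}_0$. Substituting $\lambda_{\max}(\Gamma) = \tau^{-1}\lambda_{\max}(\bar{\Gamma})$, $\lambda_{\min}(\Gamma) = \tau^{-1}\lambda_{\min}(\bar{\Gamma})$ and $\epsilon_{\Gamma} = \tau^{-1}\epsilon_{\bar{\Gamma}}$ into the ratio under the square root in \eqref{eq:LMI_r0_data}, each of numerator and denominator picks up a common factor $\tau^{-2}$, so $\tau$ cancels and the ratio equals $\lambda_{\max}(\bar{\Gamma})\lambda_{\min}(\bar{\Gamma})/(\lambda^2_{\max}(\bar{\Gamma}) - \epsilon_{\bar{\Gamma}}\lambda_{\min}(\bar{\Gamma}))$; therefore $r_0 = \bar{r}_0$ of \eqref{eq:LMI_r0_data_bar}, and the argument concludes.

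The main obstacle I anticipate is purely bookkeeping: Lemma~\ref{lema:petersen_data} is phrased as an implication (\eqref{eq:LMI_SufficientLyaCondition_data} holds for \emph{some} $\Gamma,\epsilon_{\Gamma},Y$), so one must reach into its proof to extract the explicit correspondence \eqref{eq:petersen_step_tau}; without it one cannot identify the gain as $\bar{Y}\bar{\Gamma}^{-1}$ nor rewrite $r_0$ in barred form. A minor point to record is that $\lambda^2_{\max}(\bar{\Gamma}) - \epsilon_{\bar{\Gamma}}\lambda_{\min}(\bar{\Gamma}) > 0$, so that the square root and the minimum in \eqref{eq:LMI_r0_data_bar} are well posed; this is inherited, again via the same $\tau$-scaling, from the bound $0 < \epsilon_{\Gamma} \leq \lambda^2_{\max}(\Gamma)/\lambda_{\min}(\Gamma)$ (equivalently $0 < \epsilon \leq \lambda_{\max}(P)$ with $P = \Gamma^{-1}$) already established inside the proofs of Lemma~\ref{lema:modelBasedSolution} and Theorem~\ref{thm:LMIsolution}.
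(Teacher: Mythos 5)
Your proposal is correct and follows essentially the same route as the paper: chain Lemma~\ref{lema:petersen_data} (via the $\tau$-scaling \eqref{eq:petersen_step_tau}) with Lemma~\ref{lema:polytope_data}, then observe that both the gain $K = Y\Gamma^{-1} = \bar{Y}\bar{\Gamma}^{-1}$ and the ratio inside the square root of \eqref{eq:LMI_r0_data} are invariant under that scaling, so \eqref{eq:LMI_r0_data} becomes \eqref{eq:LMI_r0_data_bar}. The paper's proof is a terser version of exactly this argument; your explicit verification of the scale invariance of $K$ and $\bar{r}_0$ is a useful bit of bookkeeping the paper leaves implicit.
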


\begin{proof}
Under Assumption~\ref{ass:DataFullRow} and by Lemma~\ref{lema:petersen_data},
\eqref{eq:petersenLMIGxu} implies \eqref{eq:LMI_SufficientLyaCondition_data},
which, under Assumptions \ref{ass:function_qi} and \ref{ass:DisturbanceEnergy}, 
leads to the result of this theorem by Lemma~\ref{lema:polytope_data}.
In addition, \eqref{eq:LMI_r0_data} converts to \eqref{eq:LMI_r0_data_bar} by \eqref{eq:petersen_step_tau}.
\end{proof}

\begin{example}\label{example:simplest_data}
Consider the system \eqref{eq:example_sys} as in Example~\ref{example:simplest}.
In this case, we know vector-valued continuous functions
\begin{align*}
    & \xi_{A_1}(x) = \left( 1, \ \tfrac{\sin x_1}{x_1} \right), \ \
    \xi_{A_2}(x) = \left( 1, \ x_1, \ x_1^2 \right),
    \\
    & \hspace{8mm} \xi_{B_1}(x) = \left( 1,\ |x_1|,\ |x_2|,\ e^{x_1},\ e^{x_2} \right).
\end{align*}
Assumption~\ref{ass:function_qi} allows us to write the state-dependent representation as 
$A(x) = E_A \blkdiag( \xi_{A_1}(x), \xi_{A_2}(x) )$ and 
$B(x) = E_B \xi_{B_1}(x)$ with
\begin{align*}
    E_A = \bmat{
        1 \! & \! 0.1 \! & \! 0.2 \! & \! 0 \! & \! 0 \\ 
        0.2 \! & \! 0 \! & \! 0.9 \! & \! 0 \! & \! 0.1
    },
    E_B = \bmat{
        0.1 \! & \! 0 \! & \! 0.1 \! & \! 0 \! & \! 0 \\
        0 \! & \! 0 \! & \! 0 \! & \! 0.1 \! & \! 0
    }.
\end{align*}
Note that the system knowledge $E_A$ and $E_B$ are unknown in the controller design.
As in \eqref{eq:nSYSrewrite_distur}, 
apply experimental input $u_d$, uniformly distributed within the interval $[-1.3,1.3]$, to collect data:
10 experiments are performed each consisting of 13 data samples. 
The experimental noise $d$ is randomly selected 
within the energy bound $\Theta = 0.0021 I_2$ as in Assumption~\ref{ass:DisturbanceEnergy}.
We store the data samples into $X_0 \in \real^{5 \times 130}$, $X_1 \in \real^{2 \times 130}$, 
$U_0 \in \real^{5 \times 130}$ as in \eqref{eq:dataMatrix}.
Take $r = 0.92$ for $\mathcal{B}_r$. 
By \eqref{eq:min_max_qji}, we have the set $\bar{\mathcal{Q}}$ as in \eqref{eq:bar_set_Q}.
Solving \eqref{eq:petersenLMIGxu} in Theorem~\ref{thm:dataDrivenSolution} via CVX yields  
$\epsilon_{\bar \Gamma} = 0.000045$, $\bar Y = [-0.0178 \ -0.0186]$, 
$\bar \Gamma = \bmat{0.0082 & 0.0000 \\ 0.0000 & 0.0082}$, and 
$K =  [-2.1797 \ -2.2750]$.
According to \eqref{eq:LMI_r0_data_bar}, 
the estimate of ROA, $\mathcal{B}_{\bar{r}_0}$, has a radius of $\bar{r}_0 = 0.92$.
Figure~\ref{fig:data_phase_portrait} is the phase portrait of the closed-loop system $x^{+} = A(x)x + B(x)Kx$.
\end{example}

\begin{figure}[ht]
    \centering
    \includegraphics[width=0.76\linewidth]{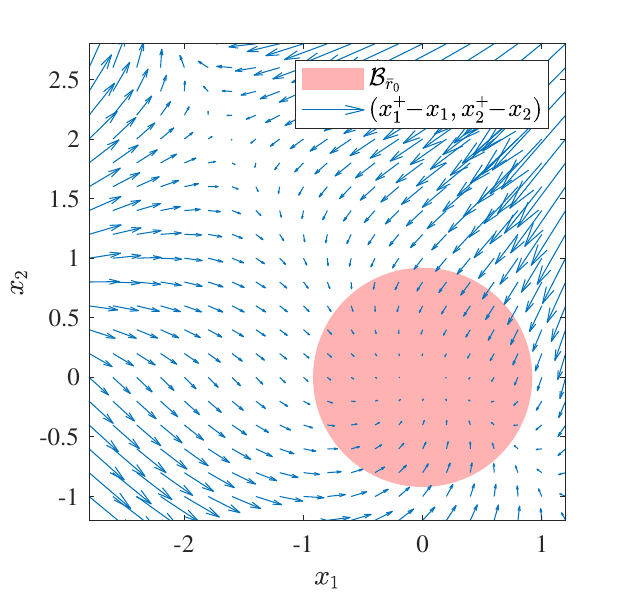}  
    \vspace{-3mm}
    \caption{Phase portrait of the closed-loop system with data-driven controller.}
    \label{fig:data_phase_portrait}
\end{figure}

\subsection{Robustness Against Disturbances}\label{sec:robustness_data}

As discussed in Section~\ref{sec:robustness} and Theorem~\ref{thm:Robust2Dist}, 
the model-based controller is robust against disturbances.
The same result applies to the data-driven controller, 
which is detailed in the next theorem.

\begin{theorem}\label{thm:dataDrivenRobust2Dist}
Under Assumptions \ref{ass:function_qi}, \ref{ass:DisturbanceEnergy} and \ref{ass:DataFullRow},
given a positive real number $r$ and the set $\mathcal{B}_r$, 
suppose that there exist $\bar{\Gamma} \succ 0$, $\epsilon_{\bar{\Gamma}} > 0$ and $\bar{Y}$ such that \eqref{eq:petersenLMIGxu} holds, 
and let $K = \bar{Y} \bar{\Gamma}^{-1}$ for the disturbed closed-loop system
\begin{equation*}
    x^{+} = \big( A(x) + B(x)K \big) x + w.
\end{equation*}
If the initial state $x(0)$ and disturbance $w(k)$ satisfy
\begin{subequations}\label{eq:R2D_condition_data}
\begin{align}
    |x(0)|^2 \leq r^2 \,
    \text{ and } \ \bar{\delta}_{x_0}  |x(0)|^2 + \bar{\delta}_{w}  |w|^2_{\infty} \leq r^2 ,
\end{align}
where
\begin{align}
    \bar{\delta}_{x_0} & := \frac{ 2 \lambda^2_{\max}(\bar{\Gamma}) - \epsilon_{\bar{\Gamma}} \lambda_{\min}(\bar{\Gamma}) } { 2 \lambda_{\min}(\bar{\Gamma}) \lambda_{\max}(\bar{\Gamma}) } ,
    \\
    \bar{\delta}_{w} & := \frac{ 4 \gamma^2_{A_{\cl}} \lambda^5_{\max}(\bar{\Gamma}) + 2 \epsilon_{\bar{\Gamma}} \lambda_{\min}(\bar{\Gamma}) \lambda^3_{\max}(\bar{\Gamma}) } { \epsilon^2_{\bar{\Gamma}} \lambda^3_{\min}(\bar{\Gamma}) } ,
    \\
    \gamma_{A_{\cl}} &:= \max_{x \in \mathcal{B}_r} \| A(x) + B(x)K \| , 
    \label{eq:R2D_condition_data_gamma}
\end{align}
\end{subequations}
then there exists $\bar{\beta} \in \mathcal{KL}$ and $\bar{\gamma} \in \mathcal{K}$ such that
\begin{equation}\label{eq:R2D_result_data}
    |x(k)| \leq \bar{\beta} \big( x(0) , \, k \big) + \bar{\gamma} (|w|_{\infty}) \quad \forall k \in \mathbb{N}.
\end{equation}
\end{theorem}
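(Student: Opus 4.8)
The plan is to obtain the statement as a corollary of Theorem~\ref{thm:Robust2Dist}, using the two lemmas that already link the data-dependent LMI \eqref{eq:petersenLMIGxu} to the model-based Lyapunov inequality. \emph{Step one: recover the quadratic Lyapunov inequality on $\mathcal{B}_r$.} From feasibility of \eqref{eq:petersenLMIGxu} for $(\bar{\Gamma},\epsilon_{\bar{\Gamma}},\bar{Y})$, Lemma~\ref{lema:petersen_data} (which invokes Assumption~\ref{ass:DataFullRow}) produces $(\Gamma,\epsilon_{\Gamma},Y)$ satisfying \eqref{eq:LMI_SufficientLyaCondition_data} via the scaling \eqref{eq:petersen_step_tau}, namely $\Gamma=\tau^{-1}\bar{\Gamma}$, $\epsilon_{\Gamma}=\tau^{-1}\epsilon_{\bar{\Gamma}}$, $Y=\tau^{-1}\bar{Y}$ for some $\tau>0$. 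Then, under Assumptions~\ref{ass:function_qi} and~\ref{ass:DisturbanceEnergy}, the chain in the proof of Lemma~\ref{lema:polytope_data} — the polytopic inclusion $\mathcal{Q}\subseteq\mathrm{conv}(\bar{\mathcal{Q}})$ together with $[E_A\ E_B]\in\mathcal{C}$ — yields \eqref{eq:Lema_data_step1}, i.e. the same intermediate LMI \eqref{eq:thm_model_step1} as in the model-based case. Following the Schur-complement step in the proof of Theorem~\ref{thm:LMIsolution} then gives
\[
    \big[A(x)+B(x)K\big]^{\top}P\big[A(x)+B(x)K\big]-P\preceq-\epsilon I_n\quad\forall x\in\mathcal{B}_r,
\]
with $P=\Gamma^{-1}$, $K=Y\Gamma^{-1}=\bar{Y}\bar{\Gamma}^{-1}$ and $\epsilon=\epsilon_{\Gamma}/\lambda_{\max}^{2}(\Gamma)>0$. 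This is exactly \eqref{eq:R2D_step0}, the only structural input used in the proof of Theorem~\ref{thm:Robust2Dist}.

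\emph{Step two: rerun the ISS-type estimate.} With \eqref{eq:R2D_step0} available, I would repeat verbatim the argument of Theorem~\ref{thm:Robust2Dist}: set $V(x)=x^{\top}Px$, expand $V(x(k+1))-V(x(k))$ along the disturbed trajectory $x^{+}=(A(x)+B(x)K)x+w$, bound the cross term using $\gamma_{A_{\cl}}:=\max_{x\in\mathcal{B}_r}\|A(x)+B(x)K\|$ and Young's inequality, and arrive at the recursion $V(x(k+1))\le\mu_w V(x(k))+c_w|w|_{\infty}^2$ valid whenever $x(k)\in\mathcal{B}_r$. The same induction as there, now using the hypotheses \eqref{eq:R2D_condition_data}, shows that the trajectory stays in $\mathcal{B}_r$ for all $k$; iterating the recursion and substituting $\lambda_{\max}(P)=\lambda_{\min}^{-1}(\Gamma)$, $\lambda_{\min}(P)=\lambda_{\max}^{-1}(\Gamma)$ produces the bound \eqref{eq:R2D_result_data} with appropriate $\bar{\beta}\in\mathcal{KL}$ and $\bar{\gamma}\in\mathcal{K}$.

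\emph{Step three: match the constants.} The point requiring care is that the invariance conditions \eqref{eq:R2D_condition}, stated there in terms of $(\Gamma,\epsilon_{\Gamma})$, must reduce to \eqref{eq:R2D_condition_data}, stated in terms of $(\bar{\Gamma},\epsilon_{\bar{\Gamma}})$. Substituting the scaling \eqref{eq:petersen_step_tau}, every eigenvalue of $\Gamma$ carries a factor $\tau^{-1}$ relative to $\bar{\Gamma}$; a direct count of powers in the formulas for $\delta_{x_0}$ and $\delta_w$ (and noting that $\gamma_{A_{\cl}}$ depends only on the scale-free $K$) shows both quantities are invariant under $\Gamma\mapsto\tau^{-1}\Gamma$, $\epsilon_{\Gamma}\mapsto\tau^{-1}\epsilon_{\Gamma}$, so $\delta_{x_0}=\bar{\delta}_{x_0}$ and $\delta_w=\bar{\delta}_w$, and \eqref{eq:R2D_condition_data} is literally \eqref{eq:R2D_condition}.

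\textbf{Main obstacle.} There is no genuine analytic difficulty here: once Step one is in place the estimate is identical to Theorem~\ref{thm:Robust2Dist}. The delicate parts are purely organizational — correctly propagating the $\tau$-scaling of Petersen's lemma through the eigenvalue expressions so that the stated $\bar{\delta}_{x_0},\bar{\delta}_w$ emerge, and explicitly invoking $[E_A\ E_B]\in\mathcal{C}$ (Assumption~\ref{ass:DisturbanceEnergy}) so that the LMI over the whole data-consistency set $\mathcal{C}$ specializes to the true dynamics before the Lyapunov/ISS argument is applied.
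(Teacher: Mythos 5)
Your proposal is correct and follows essentially the same route as the paper's proof: Lemma~\ref{lema:petersen_data} and Lemma~\ref{lema:polytope_data} to recover \eqref{eq:thm_model_step1} and hence \eqref{eq:R2D_step0}, then the argument of Theorem~\ref{thm:Robust2Dist} verbatim, with \eqref{eq:R2D_condition} converted to \eqref{eq:R2D_condition_data} via the scaling \eqref{eq:petersen_step_tau}. Your Step three, which explicitly checks that $\delta_{x_0}$ and $\delta_w$ are invariant under $\Gamma\mapsto\tau^{-1}\bar{\Gamma}$, $\epsilon_{\Gamma}\mapsto\tau^{-1}\epsilon_{\bar{\Gamma}}$, is a correct and slightly more detailed justification of the conversion step that the paper states in one line.
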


\begin{proof}
Under Assumption~\ref{ass:DataFullRow} and by Lemma~\ref{lema:petersen_data},
\eqref{eq:petersenLMIGxu} implies \eqref{eq:LMI_SufficientLyaCondition_data}.
Under Assumptions \ref{ass:function_qi}, \ref{ass:DisturbanceEnergy} and by Lemma~\ref{lema:polytope_data},
\eqref{eq:LMI_SufficientLyaCondition_data} implies \eqref{eq:Lema_data_step1},
which corresponds to \eqref{eq:thm_model_step1}.
Following the proof of Theorem~\ref{thm:LMIsolution},
\eqref{eq:Lema_data_step1} [or \eqref{eq:thm_model_step1}] leads to \eqref{eq:R2D_step0}.
Using analogous arguments as in the proof of Theorem~\ref{thm:Robust2Dist},
we can prove this Theorem~\ref{thm:dataDrivenRobust2Dist}.
In addition, \eqref{eq:R2D_condition} converts to \eqref{eq:R2D_condition_data} by \eqref{eq:petersen_step_tau}.
\end{proof}

\begin{remark}
The parameter values in \eqref{eq:R2D_condition}, $\delta_{x_0}$, $\delta_{w}$ and $\gamma_{A_{\cl}}$, are all accessible and computable in the model-based case. 
However, in the data-driven case, the value of $\gamma_{A_{\cl}}$ as defined in \eqref{eq:R2D_condition_data_gamma} is unavailable due to the lack of model knowledge.
Instead, a computable upper bound of $\gamma_{A_{\cl}}$ can be derived and used in its place, that is,
\begin{align*}
    \gamma_{A_{\cl}} &:= \max_{x \in \mathcal{B}_r} \| A(x) + B(x)K \|
    \\
    & \overset{\eqref{eq:nSYSrewrite}}{=} 
    \max_{x \in \mathcal{B}_r} 
    \left\lVert [ E_A \ E_B ] \bmat{ \Xi_A(x)x \\ \Xi_B(x) K x } \right\rVert
    \\
    & \leq \| [ E_A \ E_B ] \| 
    \max_{x \in \mathcal{B}_r}
    \left\lVert \bmat{ \Xi_A(x)x \\ \Xi_B(x) K x } \right\rVert
    \\
    & \overset{\eqref{eq:DataConsistentSetC_re}}{\leq}
    \Big( \| \mathbf{Z}_c \| + \| \mathbf{Q}^{\frac{1}{2}} \| \| \mathbf{A}^{-\frac{1}{2}} \| \Big)
    \max_{x \in \mathcal{B}_r}
    \left\lVert \bmat{ \Xi_A(x)x \\ \Xi_B(x) K x } \right\rVert
    \\
    & =: \bar{\gamma}_{A_{\cl}}.
\end{align*}
The upper bound $\bar{\gamma}_{A{\cl}}$, which is computable from data alone, can be used to replace $\gamma_{A_{\cl}}$ in \eqref{eq:R2D_condition_data_gamma}.
\end{remark}

\begin{example}
Following Example~\ref{example:simplest_data}, the controller $K$ remains unchanged.
For the closed-loop system $x^{+} = A(x)x + B(x)Kx + w$,
we set $x(0) = (-0.4, -0.4)$ and the disturbance $(w_1, w_2)$ as random variables uniformly distributed in the interval $[-0.1,0.1]$.
The simulation result is depicted in Figure~\ref{fig:data_state_response}, 
which is consistent with \eqref{eq:R2D_result_data} in Theorem~\ref{thm:dataDrivenRobust2Dist}.
\end{example}

\begin{figure}[ht]
    \centering
    \includegraphics[width=1\linewidth]{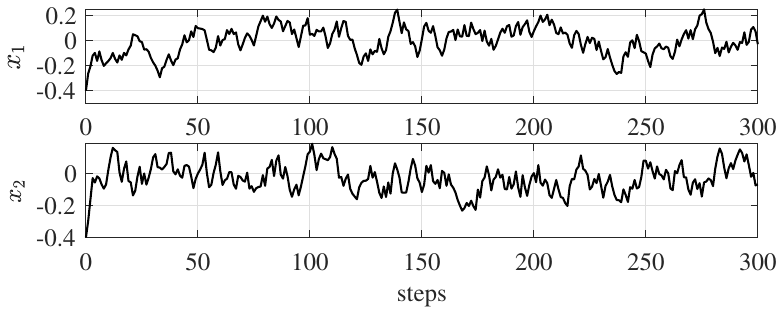}  
    \vspace{-7mm}
    \caption{State response of the closed-loop system with data-driven controller under disturbance.}
    \label{fig:data_state_response}
\end{figure}

\subsection{System Subject to Input Saturation}\label{sec:InputSatur_data}

This subsection presents a data-driven controller for the system with input saturation, along with an estimate of ROA.

\begin{theorem}\label{thm:LMIsolution_satur_data}
Under Assumptions \ref{ass:function_qi}, \ref{ass:DisturbanceEnergy} and \ref{ass:DataFullRow},
given a positive real number $r$ and the set $\mathcal{B}_r$,
if there exist $\bar{\Gamma} \succ 0$, $\epsilon_{\bar{\Gamma}} > 0$, $\bar{Y}$, $\bar{W}$ and diagonal positive definite matrix $\bar{S}$ such that
\begin{subequations}\label{eq:LMI_SufficientLyaCondition_satur_data}
\begin{align}
    & \bmat{ \bar{\Gamma} & \bar{W}_{(i)}^{\top} \\[3pt] \bar{W}_{(i)} & \bar{u}_i^2} \succeq 0
    \quad i = 1,\dots,m,
    \label{eq:subsetCondi_satur_data}
    \\
    & \bmat{
        -\bar{\Gamma} + \epsilon_{\bar{\Gamma}} I_n
        & - \bar{Y}^{\top} \!\! - \! \bar{W}^{\top}
        \!\!\! & \!\! 0
        \!\! & \!\! \bmat{\bar{\Gamma} \\ \bar{Y}}^{\!\! \top} \!\! Q^{\top}
        \\[10pt]
        - \bar{Y} - \bar{W} 
        & -2\bar{S}
        \!\!\! & \!\! 0
        \!\! & \!\! \bmat{ 0 \\ \bar{S}}^{\!\! \top} \!\! Q^{\top}
        \\[10pt]
        0 & 0 \!\!\! & \!\! -\bar{\Gamma} - \mathbf{C} \!\! & \!\! - \mathbf{B}
        \\[4pt]
        Q \bmat{ \bar{\Gamma} \\ \bar{Y}}
        & Q \bmat{ 0 \\ \bar{S}}
        \!\!\! & \!\! - \mathbf{B}^{\top}
        \!\! & \!\! - \mathbf{A}
    } \preceq 0 
    \ \ \forall Q \in \bar{\mathcal{Q}}
    \notag
    \\[-3ex]
    & ~ \label{eq:stableCondi_satur_data}
\end{align}
\end{subequations}
and let $u = Kx$ with $K = \bar{Y} \bar{\Gamma}^{-1}$, 
then there exists a region of attraction 
$\mathcal{B}_{\bar{r}_0} \cap \mathcal{E}_{\bar{P},1}$
with $\bar{P} = \bar{\Gamma}^{-1}$ and
\begin{equation}\label{eq:modelBased_r0_satur_thm_data}
    \bar{r}_0 = \min \left\{ r, \ \sqrt{ 
        \frac{ \lambda_{\max}(\bar{\Gamma}) \lambda_{\min}(\bar{\Gamma}) }
        { \lambda^2_{\max}(\bar{\Gamma}) - \epsilon_{\bar{\Gamma}} \lambda_{\min}(\bar{\Gamma}) }
    } \cdot r \right\} , 
\end{equation}
and the closed-loop system $x^{+} = A(x)x + B(x)\sat(Kx)$ 
starting from any $x(0) \in \mathcal{B}_{\bar{r}_0} \cap \mathcal{E}_{\bar{P},1}$ 
will exponentially converge to the origin.
\end{theorem}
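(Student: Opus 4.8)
The plan is to fuse three tools already in hand: the saturation analysis behind Theorem~\ref{thm:LMIsolution_satur}, the Petersen's-lemma reduction of Lemma~\ref{lema:petersen_data}, and the polytopic over-approximation of Lemma~\ref{lema:polytope_data}. First I would dispose of the subset condition \eqref{eq:subsetCondi_satur_data} exactly as in the proof of Theorem~\ref{thm:LMIsolution_satur}: pre- and post-multiply by $\blkdiag(\bar{\Gamma}^{-1},1)$ and apply a Schur complement to obtain $\bar{L}_{(i)}\bar{\Gamma}\bar{L}_{(i)}^{\top}\le\bar{u}_i^2$, i.e.\ $|\bar{L}_{(i)}\bar{\Gamma}^{1/2}|\le\bar{u}_i$, with $\bar{L}:=\bar{W}\bar{\Gamma}^{-1}$ and $\bar{P}:=\bar{\Gamma}^{-1}$; for $x\in\mathcal{E}_{\bar{P},1}$ one then gets $|\bar{L}_{(i)}x|=|\bar{L}_{(i)}\bar{\Gamma}^{1/2}\bar{P}^{1/2}x|\le\bar{u}_i$, so $\mathcal{E}_{\bar{P},1}\subseteq\mathcal{S}(\bar{L})$, and Lemma~\ref{lema:sectorCondi} yields the generalized sector inequality \eqref{eq:sectorCondi} on $\mathcal{E}_{\bar{P},1}$ for every diagonal $N\succ0$.

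Next I would treat the main LMI \eqref{eq:stableCondi_satur_data}. By Assumption~\ref{ass:DataFullRow} the matrix $\mathbf{A}$ is positive definite, so I eliminate the $(4,4)$ block $-\mathbf{A}$ via the Schur complement. Collecting the resulting rank-structured terms and recalling $\mathbf{Z}_c=-\mathbf{B}\mathbf{A}^{-1}$ and $\mathbf{Q}=\mathbf{B}\mathbf{A}^{-1}\mathbf{B}^{\top}-\mathbf{C}$ from \eqref{eq:setC:ZQ_Gxu}, the inequality becomes the saturation LMI \eqref{eq:stableCondi_satur} evaluated ``at the center'' $\mathbf{Z}_c$ (with $G\bmat{\Gamma\\Y}$ and $G\bmat{0\\S}$ replaced by $\mathbf{Z}_cQ\bmat{\bar\Gamma\\\bar Y}$ and $\mathbf{Z}_cQ\bmat{0\\\bar S}$), perturbed by a $\mathbf{Q}$-weighted term. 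Multiplying by $\tau^{-1}>0$ and substituting $\Gamma=\tau^{-1}\bar{\Gamma}$, $\epsilon_{\Gamma}=\tau^{-1}\epsilon_{\bar{\Gamma}}$, $Y=\tau^{-1}\bar{Y}$, $W=\tau^{-1}\bar{W}$, $S=\tau^{-1}\bar{S}$ (as in \eqref{eq:petersen_step_tau}), I invoke Petersen's lemma \cite[Fact~2]{AndreaPetersen2022}, exactly as in Lemma~\ref{lema:petersen_data}, to absorb the $\mathbf{Q}^{1/2}\Upsilon\mathbf{A}^{-1/2}$ uncertainty. By \eqref{eq:DataConsistentSetC_re} this shows that \eqref{eq:stableCondi_satur} with $G$ replaced by $ZQ$ holds for all $Z=[\hat{E}_A\ \hat{E}_B]\in\mathcal{C}$ and all $Q\in\bar{\mathcal{Q}}$.

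Since Assumption~\ref{ass:DisturbanceEnergy} guarantees $[E_A\ E_B]\in\mathcal{C}$, I specialize to the true coefficients; then, just as in Lemma~\ref{lema:polytope_data}, affineness of the LMI in $Q$ together with $\mathcal{Q}\subseteq\mathrm{conv}(\bar{\mathcal{Q}})$ in \eqref{eq:Q_subset_convQ} and the identity $[E_A\Xi_A(x)\ E_B\Xi_B(x)]=[A(x)\ B(x)]$ from \eqref{eq:nSYSrewrite} upgrades ``$\forall Q\in\bar{\mathcal{Q}}$'' to ``$\forall x\in\mathcal{B}_r$'', recovering \eqref{eq:stableCondi_satur} verbatim in the scaled variables $\Gamma,Y,S,W$. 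From here the argument is that of Theorem~\ref{thm:LMIsolution_satur}: a Schur complement and a congruence by $\blkdiag(P,I_m,I_n)$ with $P=\Gamma^{-1}$ give $H(x)\preceq0$ on $\mathcal{B}_r$ with $H(x)$ as in \eqref{eq:H(x)}; pre- and post-multiplying by $(x,\phi(Kx))$ and adding $-2\phi(Kx)^{\top}N(\sat(Kx)+Lx)\le0$ from the first step (note $L=W\Gamma^{-1}=\bar{W}\bar{\Gamma}^{-1}=\bar{L}$, $N=S^{-1}$) yields $V(x^+)-V(x)\le-\epsilon|x|^2$ for all $x\in\mathcal{B}_r\cap\mathcal{E}_{P,\tau}$ with $\epsilon=\epsilon_\Gamma/\lambda^2_{\max}(\Gamma)$. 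Because $\bar{P}=\tau^{-1}P$ one has $\mathcal{E}_{P,\tau}=\mathcal{E}_{\bar{P},1}$, so Lemma~\ref{lema:intermeLyaCondi} applies; finally $K=\bar{Y}\bar{\Gamma}^{-1}$ and the ratio in \eqref{eq:modelBased_r0_satur} are invariant under the $\tau$-scaling, exactly as observed in the proof of Theorem~\ref{thm:dataDrivenSolution}, so \eqref{eq:modelBased_r0_satur} rewrites as \eqref{eq:modelBased_r0_satur_thm_data} and the proof is complete.

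I expect the main obstacle to be the bookkeeping: simultaneously carrying the common multiplier $\tau$ through all of $\Gamma,Y,W,S$ while checking that (i) the Schur complement on $-\mathbf{A}$ genuinely reproduces the $\mathbf{Z}_c,\mathbf{Q},\mathbf{A}^{-1/2}$ structure demanded by Petersen's lemma despite the extra saturation blocks and the two distinct ``$G$-appearances'' $G\bmat{\Gamma\\Y}$, $G\bmat{0\\S}$, and (ii) the ellipsoid $\mathcal{E}_{\bar{P},1}$ used in the sector condition coincides with the ellipsoid $\mathcal{E}_{P,\tau}$ used in the Lyapunov decrease of Lemma~\ref{lema:intermeLyaCondi}. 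A converse direction (that \eqref{eq:LMI_SufficientLyaCondition_satur} plus $\mathbf{B}\mathbf{A}^{-1}\mathbf{B}^{\top}-\mathbf{C}\neq0$ recovers \eqref{eq:stableCondi_satur_data}) would follow from the ``only if'' half of Petersen's lemma as in Lemma~\ref{lema:petersen_data}, but it is not claimed in the statement, so I would not pursue it.
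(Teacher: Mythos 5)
Your proposal is correct and follows essentially the same route as the paper's proof: the same sector-condition treatment of \eqref{eq:subsetCondi_satur_data}, the same Schur complement over the $-\mathbf{A}$ block to expose the $\mathbf{Z}_c,\mathbf{Q},\mathbf{A}^{-1}$ structure, the same $\tau$-rescaling followed by Petersen's lemma and the polytopic relaxation, and the same reduction to $H(x)\preceq 0$, the sector inequality, and Lemma~\ref{lema:intermeLyaCondi} with $\mathcal{E}_{\bar{P},1}=\mathcal{E}_{P,\tau}$. The only slip is cosmetic: the final congruence must be $\blkdiag(P,N,I_n)$ (not $\blkdiag(P,I_m,I_n)$) so that the middle blocks become $-N(K+L)$ and $-2N$ as in \eqref{eq:H(x)}, which is exactly what the argument of Theorem~\ref{thm:LMIsolution_satur} that you invoke already does.
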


\begin{proof}
Following analogous arguments as in the proof of Theorem~\ref{thm:LMIsolution_satur},
inequalities \eqref{eq:subsetCondi_satur_data} imply 
$\mathcal{E}_{\bar{P},1} \subseteq \mathcal{S}(L)$ with 
$L = \bar{W} \bar{\Gamma}^{-1}$.
By Lemma~\ref{lema:sectorCondi}, we conclude that \eqref{eq:subsetCondi_satur_data} results in
\begin{equation}\label{eq:sectorCondi_thm_data}
    \phi(u)^{\top} N \big( \sat(u) + Lx \big) \leq 0 \quad \forall x \in \mathcal{E}_{\bar{P},1}
\end{equation}
for any diagonal positive definite matrix $N$. \\
By Schur complement and the definitions of $\mathbf{Z}_c$ and $\mathbf{Q}$ in \eqref{eq:setC:ZQ_Gxu}, 
the linear matrix inequality in \eqref{eq:stableCondi_satur_data} is equivalent to
\begin{align*}
    0 \succeq 
    & \bmat{
        -\bar{\Gamma} + \epsilon_{\bar{\Gamma}} I_n
        & \ast
        & \ast
        \\[0pt]
        - \bar{Y} - \bar{W} 
        & -2 \bar{S}
        & \ast
        \\[0pt]
        \mathbf{Z}_c Q \smat{\bar{\Gamma} \\ \bar{Y}}
        & \mathbf{Z}_c Q \smat{ 0 \\ \bar{S}}
        & -\bar{\Gamma}  
    }
    + 
    \bmat{
        0 & \ast & \ast \\
        0 & 0 & \ast \\
        0 & 0 & \mathbf{Q}
    }
    \\
    & +
    \bmat{
        \smat{ \bar{\Gamma} \\ \bar{Y}}^{\! \top} \! Q^{\top}
        \! \mathbf{A}^{-1}
        Q \smat{\bar{\Gamma} \\ \bar{Y}}
        & \ast
        & \ast
        \\[5pt]
        \smat{ 0 \\ \bar{S}}^{\! \top} \! Q^{\top}
        \! \mathbf{A}^{-1} 
        Q \smat{\bar{\Gamma} \\ \bar{Y}}
        & \smat{ 0 \\ \bar{S}}^{\! \top} \! Q^{\top}
        \! \mathbf{A}^{-1}
        Q \smat{ 0 \\ \bar{S}}
        & \ast
        \\[5pt]
        0 & 0 & 0
    }.
\end{align*}
Multiplying both sides by $\tau^{-1} > 0$ and letting 
\begin{equation}\label{eq:petersen_step_tau_data}
\begin{aligned}
    & 
    \Gamma = \tau^{-1} \bar{\Gamma}, \quad
    \epsilon_{\Gamma} = \tau^{-1} \epsilon_{\bar{\Gamma}}, \quad
    Y = \tau^{-1} \bar{Y}, 
    \\  
    & \hspace{10mm}
    W = \tau^{-1} \bar{W}, \quad
    S = \tau^{-1} \bar{S}, 
\end{aligned}
\end{equation} 
the above matrix inequality is equivalent to
\begin{align*}
    0 \succeq 
    & \bmat{
        -\Gamma + \epsilon_{\Gamma} I_n
        & \ast
        & \ast
        \\[0pt]
        - Y - W 
        & -2 S
        & \ast
        \\[0pt]
        \mathbf{Z}_c Q \smat{\Gamma \\ Y}
        & \mathbf{Z}_c Q \smat{0 \\ S}
        & - \Gamma
    }
    +
    \frac{1}{\tau}
    \bmat{
        0 & \ast & \ast \\
        0 & 0 & \ast \\
        0 & 0 & \mathbf{Q}
    }
    \\
    & +
    \tau
    \bmat{
        \smat{\Gamma \\ Y}^{\! \top} \! Q^{\top} 
        \! \mathbf{A}^{-1}
        Q \smat{\Gamma \\ Y}
        & \ast
        & \ast
        \\[5pt]
        \smat{ 0 \\ S}^{\! \top} \! Q^{\top} 
        \! \mathbf{A}^{-1}
        Q \smat{\Gamma \\ Y}
        & \smat{ 0 \\ S}^{\! \top} \! Q^{\top} 
        \! \mathbf{A}^{-1}
        Q \smat{ 0 \\ S}
        & \ast
        \\[5pt]
        0 & 0 & 0
    }.
\end{align*}
By Petersen's Lemma reported in~\cite[Fact~2]{AndreaPetersen2022}, 
the previous inequality implies
\begin{align*}
    0 \succeq 
    & \bmat{
        -\Gamma + \epsilon_{\Gamma} I_n
        & \ast
        & \ast
        \\[0pt]
        - Y - W 
        & -2 S
        & \ast
        \\[0pt]
        \mathbf{Z}_c Q \smat{\Gamma \\ Y}
        & \mathbf{Z}_c Q \smat{0 \\ S}
        & - \Gamma
    }
    \\
    & + 
    \bmat{
        0 & \ast & \ast
        \\
        0 & 0 & \ast 
        \\
        \mathbf{Q}^{1/2} \Upsilon \mathbf{A}^{-1/2} Q \smat{\Gamma \\ Y}
        & \mathbf{Q}^{1/2} \Upsilon \mathbf{A}^{-1/2} Q \smat{ 0 \\ S}
        & 0
    }
\end{align*}
for all $\Upsilon$ with $\Upsilon \Upsilon^\top \preceq  I_n$.
Under Assumption~\ref{ass:DataFullRow} and \eqref{eq:DataConsistentSetC_re}, 
the above condition is equivalent to
\begin{align*}
    \bmat{
        -\Gamma + \epsilon_{\Gamma} I_n
        & \ast
        & \ast
        \\[0pt]
        - Y - W 
        & -2 S
        & \ast
        \\[0pt]
        Z Q \smat{\Gamma \\ Y}
        & Z Q \smat{ 0 \\ S}
        & - \Gamma
    } \preceq 0
    \quad \forall Z \in \mathcal{C}.
\end{align*}
Note that Assumption~\ref{ass:DisturbanceEnergy} ensures that $[ E_A \ E_B ] \in \mathcal{C}$. 
So far, by the above conditions, we have established that \eqref{eq:stableCondi_satur_data} implies that
\begin{align*}
    \bmat{
        -\Gamma + \epsilon_{\Gamma} I_n
        & \ast
        & \ast
        \\[0pt]
        - Y - W 
        & -2 S
        & \ast
        \\[0pt]
        E_A Q_A \Gamma + E_B Q_B Y  
        & E_B Q_B S
        & - \Gamma
    } \preceq 0
\end{align*}
for all $\blkdiag(Q_A, Q_B) \in \bar{\mathcal{Q}}$,
which further implies, by \eqref{eq:Q_subset_convQ} and \eqref{eq:nSYSrewrite}, that
\begin{align*}
    \bmat{
        -\Gamma + \epsilon_{\Gamma} I_n
        & \ast
        & \ast
        \\[0pt]
        - Y - W 
        & -2 S
        & \ast
        \\[0pt]
        A(x) \Gamma + B(x) Y  
        & B(x) S
        & - \Gamma
    } \preceq 0
     \quad \forall x \in \mathcal{B}_r.
\end{align*}
Let
\begin{equation}
\begin{aligned}
    & N = S^{-1}, \quad 
    Y = K \Gamma, \quad
    W = L \Gamma, \quad
    \\
    & \hspace{4mm} P = \Gamma^{-1}, \quad
    \epsilon = { \epsilon_{\Gamma} } / { \lambda^{2}_{\max}(\Gamma) }.
\end{aligned}
\end{equation}
Pre- and post-multiplying both sides of the above inequality by $\blkdiag(P, N, I_n)$ results in
\begin{align*}
    \bmat{
        -P + \epsilon I_n
        & \ast
        & \ast
        \\[0pt]
        - N( K + L ) 
        & -2 N
        & \ast
        \\[0pt]
        A(x)  + B(x) K  
        & B(x)
        & - P^{-1}
    } \preceq 0
     \quad \forall x \in \mathcal{B}_r.
\end{align*}
By Schur complement, the above inequality is equivalent to 
$H(x) \preceq 0$ for all $x \in \mathcal{B}_r$, 
where $H(x)$ has the same expression as in \eqref{eq:H(x)}.
This fact implies, for $V(x) := x^{\top} P x$, that
\begin{align*}
    & \hspace{11mm} \bmat{x \\ \phi(Kx)}^{\top} H(x) \bmat{x \\ \phi(Kx)} \leq 0
    \quad \forall x \in \mathcal{B}_r
    \\
    & \overset{\eqref{eq:nSYS_satur_closed},\eqref{eq:sectorCondi}}{\iff} \
    V(x^+) - V(x) 
    \\
    & \hspace{12mm} - 2 \phi(Kx)^{\top} N \big( \sat(Kx) + Lx \big) \leq -\epsilon |x|^2
    \ \ \forall x \in \mathcal{B}_r .
\end{align*}
Because of \eqref{eq:subsetCondi_satur_data} and \eqref{eq:sectorCondi_thm_data}, we obtain
\begin{align*}
    & V(x^+) - V(x) \leq - \epsilon |x|^2 \quad
    \forall x \in \mathcal{B}_r \cap \mathcal{E}_{\bar{P},1}
    \\
    & \Updownarrow  P = \tau \bar{P} \text{ and } \mathcal{E}_{\bar{P},1} = \mathcal{E}_{P,\tau}
    \\
    & V(x^+) - V(x) \leq - \epsilon |x|^2 \quad
    \forall x \in \mathcal{B}_r \cap \mathcal{E}_{P,\tau} ,
\end{align*}
which is consistent with \eqref{eq:LyaCondition_satur}.
Then by Lemma~\ref{lema:intermeLyaCondi}, 
and rewriting \eqref{eq:modelBased_r0_satur} to \eqref{eq:modelBased_r0_satur_thm_data}
via $\epsilon = { \epsilon_{\Gamma} } / { \lambda^{2}_{\max}(\Gamma) }$,
$\lambda_{\max}(P) = \lambda^{-1}_{\min}(\Gamma)$, 
$\lambda_{\min}(P) = \lambda^{-1}_{\max}(\Gamma)$,
and \eqref{eq:petersen_step_tau_data},
we complete the proof.
\end{proof}

\begin{example}
Following Example~\ref{example:simplest_data}, 
the experimental data $X_0$, $X_1$, $U_0$ and the energy bound $\Theta$ remain unchanged.
Take $r = 0.92$ for $\mathcal{B}_r$. 
By \eqref{eq:min_max_gji}, we have the set $\bar{\mathcal{Q}}$ as in \eqref{eq:bar_set_Q}.
For different level of the saturation $\bar{u} \in \{2.0, \, 1.0, \, 0.5, \, 0.25\}$,
solving \eqref{eq:LMI_SufficientLyaCondition_satur_data} in Theorem~\ref{thm:LMIsolution_satur_data} via CVX yields 
$K = [-2.5369 \ -2.9559]$, $[-2.5369 \ -2.9559]$, $[-2.5369 \ -2.9559]$ and $[-2.6885 \ -2.2310]$, respectively.
Note that the program \eqref{eq:LMI_SufficientLyaCondition_satur_data} 
in Theorem~\ref{thm:LMIsolution_satur_data} is only a feasibility problem, 
hence we set minimizing 
$[\lambda_{\max}(\bar{\Gamma}) - \lambda_{\min}(\bar{\Gamma}) - 2\text{trace}(\bar{\Gamma})]$
as the objective when solving the LMIs \eqref{eq:LMI_SufficientLyaCondition_satur_data}.
The estimate of ROA, $\mathcal{B}_{\bar{r}_0} \cap \mathcal{E}_{\bar{P},1}$, 
for different values of $\bar u$ is shown in Figure~\ref{fig:data_satur_u}.
\end{example}

\begin{figure}[ht]
    \centering
    \subfigure{\includegraphics[width=0.48\columnwidth]{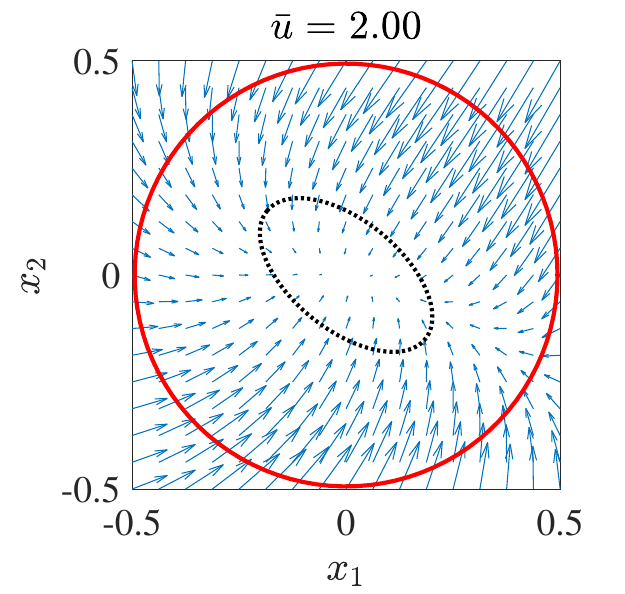}} 
    \subfigure{\includegraphics[width=0.48\columnwidth]{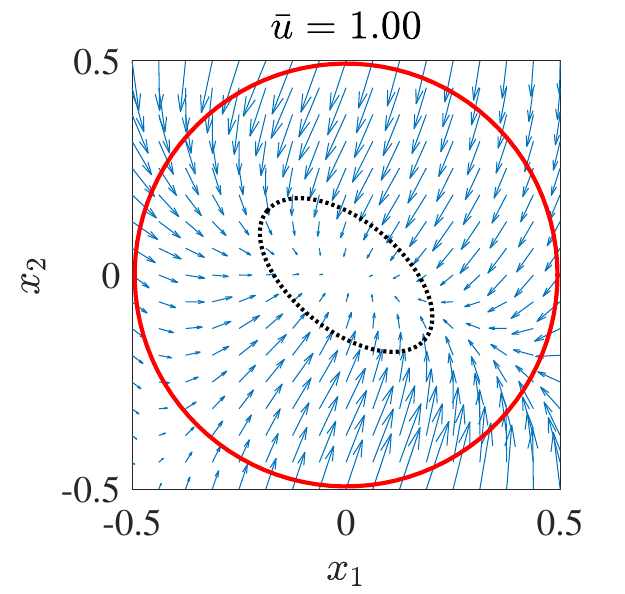}} \\[-6pt]
    \subfigure{\includegraphics[width=0.48\columnwidth]{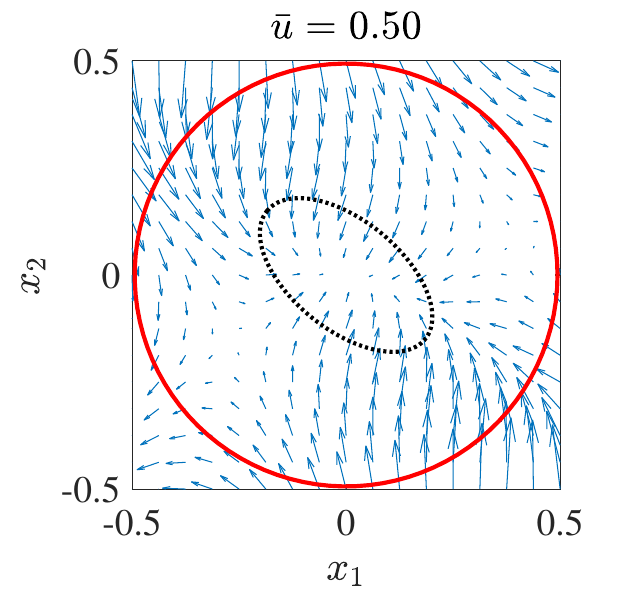}} 
    \subfigure{\includegraphics[width=0.48\columnwidth]{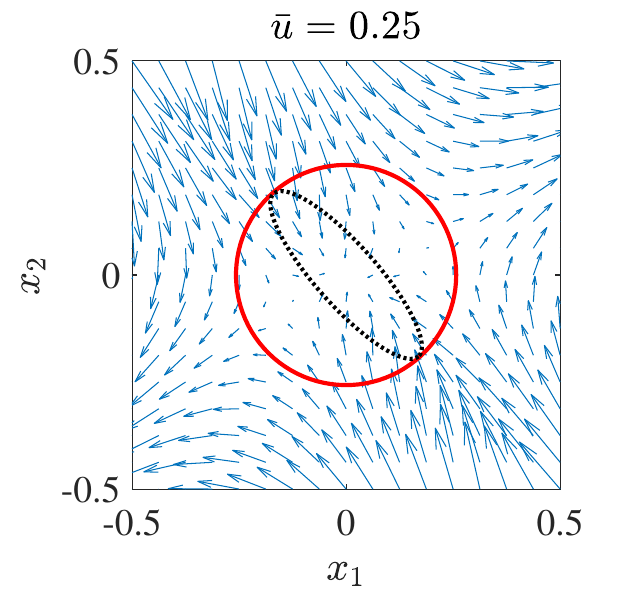}} 
    \vspace{-3mm}
    \caption{Estimate of ROA, $\mathcal{B}_{\bar{r}_0} \cap \mathcal{E}_{\bar{P},1}$, for different saturation levels.
    The red solid ball is $\mathcal{B}_{\bar{r}_0}$; 
    the black dashed ellipsoid is $\mathcal{E}_{\bar{P},1}$;
    the background is the phase portrait regarding the closed-loop system $x^{+} = A(x)x + B(x) \sat(Kx)$.}
    \label{fig:data_satur_u}
\end{figure}

\subsection{Discussion on Region of Attraction}\label{sec:roa}

While the ROA is directly estimated as $\mathcal{B}_{r_0}$ in Theorem~\ref{thm:LMIsolution}, this is not the only method.
In the model-based case, the closed-loop system $x^{+} = A(x)x + B(x)Kx$ is known, and the Lyapunov function is given by $V(x) = x^{\top} \Gamma^{-1} x$, where $\Gamma$ is computed from \eqref{eq:LMI_SufficientLyaCondition}. Then $V(x^{+}) - V(x) = v(x)$ with
\begin{align*}
    v(x)
    := & \big[ \ast \big]^{\top} \Gamma^{-1} \cdot \big[ A(x)x + B(x)Kx \big] 
    - x^{\top} \Gamma^{-1} x ,
\end{align*}
and we can \emph{numerically} determine the set
\begin{align*}
    \mathcal{V} := \{ x : v(x) < 0 \}.
\end{align*}
Any sublevel set 
$\mathcal{E}_{\Gamma^{-1}, \gamma} := \{ x \in \real^{n} : x^{\top} \Gamma^{-1} x \leq \gamma \}$
contained in $\mathcal{V} \cup \{0\}$ is an estimate of ROA for the closed-loop system.
Moreover, the union of the set $\mathcal{B}_{r_0}$ and the largest sublevel set $\mathcal{E}_{\Gamma^{-1}, \gamma}$ contained in $\mathcal{V} \cup \{0\}$, i.e., $\mathcal{B}_{r_0} \cup \mathcal{E}_{\Gamma^{-1}, \gamma}$, is also an ROA.

When the system is subject to input saturation, the same procedure applies.
The closed-loop system is $x^{+} = A(x)x + B(x) \sat(Kx)$, and the Lyapunov function is given by $V(x) = x^{\top} \Gamma^{-1} x$, where $\Gamma$ is computed from \eqref{eq:LMI_SufficientLyaCondition_satur}. Then, $V(x^{+}) - V(x) = v_{\sat}(x)$ with
\begin{align*}
    v_{\sat}(x) := \big[ \ast \big]^{\top} \Gamma^{-1} \cdot \big[ A(x)x + B(x)\sat(Kx) \big]
    - x^{\top} \Gamma^{-1} x ,
\end{align*}
and we can \emph{numerically} determine the set
\begin{align*}
    \mathcal{V}_{\sat} := \{ x : v_{\sat}(x) < 0 \}.
\end{align*}
Any sublevel set $\mathcal{E}_{\Gamma^{-1}, \gamma}$
contained in $\mathcal{V}_{\sat} \cup \{0\}$ is an estimate of ROA for the closed-loop system.
The union of the set $\mathcal{B}_{r_0} \cap \mathcal{E}_{\Gamma^{-1},1}$, estimated by Theorem~\ref{thm:LMIsolution_satur}, and the largest sublevel set $\mathcal{E}_{\Gamma^{-1}, \gamma}$ contained in $\mathcal{V}_{\sat} \cup \{0\}$, i.e., $\{ \mathcal{B}_{r_0} \cap \mathcal{E}_{\Gamma^{-1},1} \} \cup \mathcal{E}_{\Gamma^{-1}, \gamma}$, is also an ROA.

In the data-driven case, although $v(x)$ is not accessible and the set $\mathcal{V}$ cannot be characterized due to the lack of model knowledge, we can still upper bound $v(x)$ with a quantity that is computable from data alone. 
By Assumption~\ref{ass:function_qi} and \eqref{eq:nSYSrewrite}, the closed-loop system is $x^{+} = E_A \Xi_A(x)x + E_B \Xi_B(x) K x$.
Assumption~\ref{ass:DisturbanceEnergy} ensures $[ E_A \ E_B ] \in \mathcal{C}$.
Assumption~\ref{ass:DataFullRow} allows us to express the set $\mathcal{C}$ in the form of \eqref{eq:DataConsistentSetC_re}, which means 
\begin{align*}
    [ E_A \ E_B ] = \mathbf{Z}_c + \mathbf{Q}^{1/2} \Upsilon \mathbf{A}^{-1/2} 
    \text{ for a } \Upsilon \text{ with } \| \Upsilon \| \leq 1,
\end{align*}
where $\mathbf{Z}_c$, $\mathbf{Q}$ and $\mathbf{A}$ are defined by data. 
The Lyapunov function is given by $V(x) = x^{\top} \Gamma^{-1} x$. 
Bearing in mind that $\Gamma^{-1} = \tau \bar{\Gamma}^{-1}$, $\tau > 0$ as in \eqref{eq:petersen_step_tau}, and $\bar{\Gamma}$ calculated from \eqref{eq:petersenLMIGxu}, we have
\begin{align*}
    & V(x^{+}) - V(x) 
    \\
    & = \big[ \ast \big]^{\top} \Gamma^{-1} \cdot [ E_A \ E_B ] \bmat{ \Xi_A(x)x \\ \Xi_B(x) K x } 
    - x^{\top} \Gamma^{-1} x
    \\ 
    & \leq \tau \Big(  
       \underbrace{ - x^{\top} \bar{\Gamma}^{-1} x + v_1(x) + v_2(x) + v_3(x)}_{=: v_{\rm{data}}(x)}
    \Big)
\end{align*}
where
\begin{align*}
    & v_1(x) := \bmat{ \Xi_A(x)x \\ \Xi_B(x) K x }^{\top} \mathbf{Z}_c^{\top} \bar{\Gamma}^{-1} \mathbf{Z}_c \bmat{ \Xi_A(x)x \\ \Xi_B(x) K x } ,
    \\
    & v_2(x) := 2 \left\lvert \mathbf{Q}^{\frac{1}{2}} \bar{\Gamma}^{-1} \mathbf{Z}_c \bmat{ \Xi_A(x)x \\ \Xi_B(x) K x } \right\rvert \left\lvert \mathbf{A}^{-\frac{1}{2}} \bmat{ \Xi_A(x)x \\ \Xi_B(x) K x } \right\rvert  ,
    \\
    & v_3(x) := \left\lvert \mathbf{Q}^{\frac{1}{2}} \bar{\Gamma}^{-1} \mathbf{Q}^{\frac{1}{2}} \right\rvert \left\lvert \mathbf{A}^{-\frac{1}{2}} \bmat{ \Xi_A(x)x \\ \Xi_B(x) K x } \right\rvert^2  ,
\end{align*}
which are all computable from data alone.
Next, we can \emph{numerically} determine the set
\begin{align*}
    \mathcal{V}_{\rm{data}} := \{ x : v_{\rm{data}}(x) < 0 \}.
\end{align*}
Any sublevel set $\mathcal{E}_{\bar{\Gamma}^{-1}, \gamma}$ contained in $\mathcal{V}_{\rm{data}} \cup \{0\}$ is an estimate of ROA for the closed-loop system.
Moreover, the union of the set $\mathcal{B}_{\bar{r}_0}$ and the largest sublevel set $\mathcal{E}_{\bar{\Gamma}^{-1}, \gamma}$ contained in $\mathcal{V}_{\rm{data}} \cup \{0\}$, i.e., $\mathcal{B}_{\bar{r}_0} \cup \mathcal{E}_{\bar{\Gamma}^{-1}, \gamma}$, is also an ROA.

When the system is subject to input saturation, the same procedure applies in the data-driven case, and we omit the details due to space limitations.

\section{Comparisons with Existing Methods}\label{sec:compare}

We compare our proposed method with existing methods in this section.

\subsubsection*{State-Dependent Riccati Equation}
As summarized in the survey \cite{Cimen2008SDRE}, 
a model-based controller for \eqref{eq:nSYS} in continuous-time case is in the form
$ u(x) = -R(x)^{-1} B(x)^{\top} P(x) x $,
where $P(x)$ is the positive definite solution to the algebraic state-dependent Riccati equation (SDRE):
$ P(x)A(x) + A(x)^{\top}P(x) - P(x)B(x)R(x)^{-1}B(x)^{\top}P(x) + H(x) = 0 $,
and $R(x) \succ 0$, $H(x) \succeq 0$ are design parameters. 
On the one hand, it is difficult to achieve or prove global asymptotic stability unless the closed-loop model possesses a special structure \cite[\S 4.2]{Cimen2008SDRE}.
On the other hand, an analytical solution to the SDRE cannot be obtained in general, and a practical approach is to solve the SDRE online at a relatively high rate \cite[\S 7.1]{Cimen2008SDRE}.
In the data-driven context, \cite{Dai2021SDRE} solves the SDRE at each control instant via data-dependent LMIs. Solving LMIs online incurs substantial computational cost, and the user has to balance computation time and control frequency. The recursive feasibility of LMIs is another critical issue, which is partially addressed in~\cite{Xiaoyan2025Online} in the case of noise-free data.
In contrast, our approach is applicable to noisy data and computes the controller offline. The resulting controller is then implemented with a certified estimate of ROA under input saturation, along with guaranteed robustness against disturbances.

\subsubsection*{Sum-of-Squares}
In polynomial control systems, sum-of-squares (SOS) programming has been extensively used \cite{Didier2005Polynomials}. By employing SOS, \cite{AndreaPetersen2022, Guo2022SOS, huang2025SOS, Sznaier2021SOS} design data-driven controllers that guarantee global asymptotic stability for polynomial systems.
For more general classes of nonlinear systems, their dynamics can be approximated by polynomials \cite{Guo2023Taylor, Martin2024SOS, Rapisarda202poly}, enabling the synthesis of data-driven controllers via SOS. 
However, such polynomial approximations are typically local \cite[\S 4.1]{Martin2023survey}, and accounting for approximation errors and data noise introduces conservatism.
Leveraging the Koopman operator, \cite{Robin2025Koopman1} derives a data-based finite-dimensional bilinear surrogate model with truncation error. Following \cite{Robin2025Koopman1}, \cite{Robin2025Koopman2} formulates an SOS program for controller design.
In general, the truncation (or approximation) error is difficult to characterize, since it depends on the system dynamics, the selected lifted coordinates, and the data samples. Under suitable assumptions, \cite{Robin2025Koopman1} gives bounds on the approximation error. As previously noted, the approximation error, along with data noise, inevitably introduces additional conservatism. 
In contrast, our proposed approach differs from the aforementioned literature in three aspects.
First, our data-based system representation builds on a function library (cf. Assumption~\ref{ass:function_qi}), which is common for various systems such as electrical and mechanical ones. Assumption~\ref{ass:function_qi} allows us to avoid analyze the approximation errors if the data is rich enough and free of noise. Nevertheless, we recognize that our method also exhibits some conservatism, which mainly arises from the convex polytope, namely \eqref{eq:G_subset_convG} and \eqref{eq:Q_subset_convQ}.
Second, SOS programming inherently faces challenges in scalability, computational complexity, and numerical stability, especially when compared to the LMIs employed in our method.
Third, our approach can analyze the robustness of the closed-loop system against disturbances and incorporate the input saturation, which are not considered in the aforementioned works.

\subsubsection*{Feedback Linearization and Nonlinearity Cancellation}
Via a suitable coordinate transformation, classes of systems can be feedback linearized, meaning that all nonlinearities are canceled by a feedback controller \cite{isidori1995nonlinear,Nonlinear2023Kellett}. To this end, the critical step is to find an appropriate coordinate transformation.
In \cite{Lucas2021Linearizable} and \cite{Mohammad2023Linearizable}, the authors assume prior knowledge of the state transformation coordinates to design data-driven feedback linearizable controllers.
\cite{Claudio2024Linearizable} relaxes this assumption and learns the transformation directly from data.
However, not all systems can be feedback linearized, as exemplified in \eqref{eq:example_sys}. In such cases, \cite{Claudio2023Cancel} and \cite{Abolfazl2025Nonlinear} propose approaches to approximately cancel the nonlinearities and minimize their effects to achieve better performance.
Here, we compare Theorem~\ref{thm:dataDrivenSolution} with the methods in \cite{Claudio2023Cancel} numerically.

\begin{example}
In this numerical comparison, we use the same dataset as in Example~\ref{example:simplest_data} when applying the methods from \cite{Claudio2023Cancel}. As the dynamics of \eqref{eq:example_sys} are more general than \cite[Eq.~(1)]{Claudio2023Cancel}, we resort to \cite[Corollary~1]{Claudio2023Cancel}. However, this result assumes noise-free data, so we adapt \cite[Lemma~3]{Claudio2023Cancel} to make \cite[Corollary~1]{Claudio2023Cancel} applicable to noisy settings.
For \cite[Eq.~(34)]{Claudio2023Cancel}, we set 
$\mathcal{Z}(x,u) = (x,\ u,\ \sin x_1 - x_1,\ x_1^2 x_2,\ |x_2| u,\ e^{x_1} u - u)$.
Take $\Omega = I_3$ in \cite[Eq.~(48)]{Claudio2023Cancel}.
Solving the data-based LMIs in \cite{Claudio2023Cancel} yields the controller
$u^{+} = -2.5135 x_1 -1.4326 x_2 + 0.2220 u$.
The resulting Lyapunov function is $V(\chi) = \chi^{\top} P_1^{-1} \chi$ with $\chi := (x, u)$ and 
$P_1^{-1} = 
\smat{
    0.0261  &  0.0040  &  0.0048 \\
    0.0040  &  0.0154  &  0.0026 \\
    0.0048  &  0.0026  &  0.0020
}$.
The ROA is estimated using the adapted version of \cite[Proposition~2]{Claudio2023Cancel}, and is depicted in Figure~\ref{fig:data_comparisom}. 
Specifically, the gray area in Figure~\ref{fig:data_comparisom} represents the set $\mathcal{L}$ from \cite[Proposition~2]{Claudio2023Cancel}, where $V(\chi^{+}) - V(\chi)$ is negative. 
The black dotted ellipsoid is the Lyapunov sublevel set $\mathcal{R}_{\gamma} := \{ \chi : V(\chi) \leq \gamma \}$ contained in $\mathcal{L}\cup\{0\}$, hence, it provides an estimate of ROA. 
Both sets $\mathcal{L}$ and $\mathcal{R}_{\gamma}$ are projected onto the plane $\{ \chi : \chi_3 = 0 \}$.
The red solid circle in Figure~\ref{fig:data_comparisom} represents the ROA estimated by Theorem~\ref{thm:dataDrivenSolution}, which is larger than that obtained from \cite{Claudio2023Cancel}.
We note that the calculation of $\mathcal{L}$ in \cite[Proposition~2]{Claudio2023Cancel} is based on an upper bound of $V(\chi^{+}) - V(\chi)$, which introduces additional conservatism. This may explain why our approach yields a larger estimate of ROA.
\end{example}

\begin{figure}[ht]
    \centering
    \includegraphics[width=0.75\linewidth]{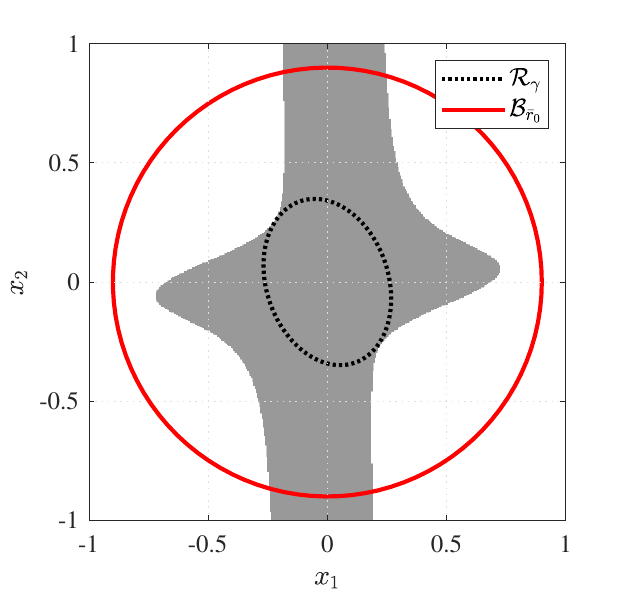}     
    \vspace{-3mm}
    \caption{Comparison results. The black dotted ellipsoid $\mathcal{R}_{\gamma}$ is the ROA estimated by \cite{Claudio2023Cancel}; The red sold circle $\mathcal{B}_{\bar{r}_0}$ is the ROA estimated by Theorem~\ref{thm:dataDrivenSolution}.}
    \label{fig:data_comparisom}
\end{figure}

\section{Experiment}\label{sec:simulation}

\subsection{Numerical Experiment}\label{sec:sim_num}

This simulation is conducted to verify Theorem~\ref{thm:dataDrivenSolution}, the result on data-driven controller design.
Consider the quadrotor aerial vehicle with attitude dynamics described by 
\begin{equation*}
\begin{cases}
    \dot{\eta} = T(\phi, \theta) \omega
    \\
    J \dot{\omega} + \omega \times (J \omega) = M
\end{cases}    
\end{equation*}
where $J = \blkdiag(J_{11},J_{22},J_{33}) \in \real^{3\times3}$ is the inertia matrix with respect to the body-fixed frame; $\omega \in \real^{3}$ is the angular velocity in the body-fixed frame; $M \in \real^{3}$ is the total moment vector in the body-fixed frame; $\times$ is the vector cross product;
$\eta = (\phi, \theta, \psi) \in \mathbb{R}^3$ is the vector of Euler angles (roll, pitch, yaw) that describe the attitude of the quadrotor with respect to the inertial frame; the transformation matrix $T(\phi, \theta)$ is given by \cite[Eq.~(5.67)]{wie1998space}.
Take $J = \blkdiag(0.01668, 0.01772, 0.02966)$. Note that these parameter values are used solely for experimental data generation, not for the data-driven controller synthesis.

We discretize the continuous-time system using the forward Euler method with a sampling time $h = 0.002$. The resulting discrete-time system can be written in the form of \eqref{eq:nSYS}, 
where $x = (\eta, \omega)$, and 
\begin{align*}
    & A(x) = \smat{ 
        1 & 0 & 0 & h & h \sin x_1 \tan x_2 & h \cos x_1 \tan x_2 
        \\
        0 & 1 & 0 & 0 & h \cos x_1 & - h \sin x_1
        \\
        0 & 0 & 1 & 0 & h \frac{\sin x_1 }{\cos x_2 } & h \frac{\cos x_1 }{\cos x_2 }
        \\
        0 & 0 & 0 & 1 & 0 & h \frac{J_{22}-J_{33}}{J_{11}} x_5
        \\
        0 & 0 & 0 & h \frac{J_{33}-J_{11}}{J_{22}} x_6 & 1 & 0  
        \\
        0 & 0 & 0 & 0 & h \frac{J_{11}-J_{22}}{J_{33}} x_4 & 1       
    } ,
    \\
    & B(x) = \smat{
        0 & 0 & 0 \\
        0 & 0 & 0 \\
        0 & 0 & 0 \\
        h / J_{11} & 0 & 0 \\
        0 & h / J_{22} & 0 \\
        0 & 0 & h / J_{33}
    }.
\end{align*}
For Assumption~\ref{ass:function_qi}, we know vector-valued continuous function library:
\begin{align*}
    & 
    \xi_{A_1}(x) =  1 , \ \ 
    \xi_{A_2}(x) =  1 , \ \
    \xi_{A_3}(x) =  1 , \ \
    \xi_{A_4}(x) = \left( 1, \ x_6 \right) ,
    \\
    &
    \xi_{A_5}(x) = \left( 1, \ \sin x_1 \tan x_2, \ \cos x_1, \ \tfrac{\sin x_1}{\cos x_2}, \ x_4 \right) ,
    \\
    &
    \xi_{A_6}(x) = \left( 1, \ \cos x_1 \tan x_2, \ \sin x_1, \ \tfrac{\cos x_1}{\cos x_2}, \ x_5 \right) ,
    \\
    & 
    \xi_{B_1}(x) = 1 , \ \
    \xi_{B_2}(x) = 1 , \ \
    \xi_{B_3}(x) = 1 .
\end{align*}

As in \eqref{eq:nSYSrewrite_distur}, the experimental inputs $M_1 = \sin(50 \pi t + \pi/10)$, $M_2 = \sin(37 \pi t + \pi/3)$, and $M_3 = \sin(21 \pi t + \pi/2)$ are applied to collect data. A total of 20 experiments are performed, each lasting 1 second.
The experimental noise $d$ is randomly selected within the energy bound $\Theta = 3.5 \times 10^{-5} I_6$ as in Assumption~\ref{ass:DisturbanceEnergy}.
Store the data samples into $X_0$, $X_1$, and $U_0$ as in \eqref{eq:dataMatrix}.
Take $r = 0.4$ for $\mathcal{B}_r$. 
By \eqref{eq:min_max_qji}, we have the set $\bar{\mathcal{Q}}$ as in \eqref{eq:bar_set_Q}.
Solving \eqref{eq:petersenLMIGxu} in Theorem~\ref{thm:dataDrivenSolution} via CVX yields
\[  
    K = \smat{
   -2.8982&   -0.0671&   -0.0363&   -8.0286&    0.0068&    0.0012\\
    0.0483&   -2.9979&    0.0935&    0.0041&   -8.7807&   -0.0053\\
    0.0088&   -0.0988&   -4.8016&    0.0001&   -0.0135&  -14.5194\\
    }.
\]
According to \eqref{eq:LMI_r0_data_bar}, 
the estimate of ROA, $\mathcal{B}_{\bar{r}_0}$, has a radius of $\bar{r}_0 = 0.28061$.
As illustrated in Figure~\ref{fig:quadrotor_closed_sim}, the state response of the closed-loop system starting from initial value $x(0) = (0.1, 0.1, 0.1, -0.1, -0.1, -0.1)$ converges to the origin.

\begin{figure}[t]
    \centering
    \includegraphics[width=0.80\linewidth]{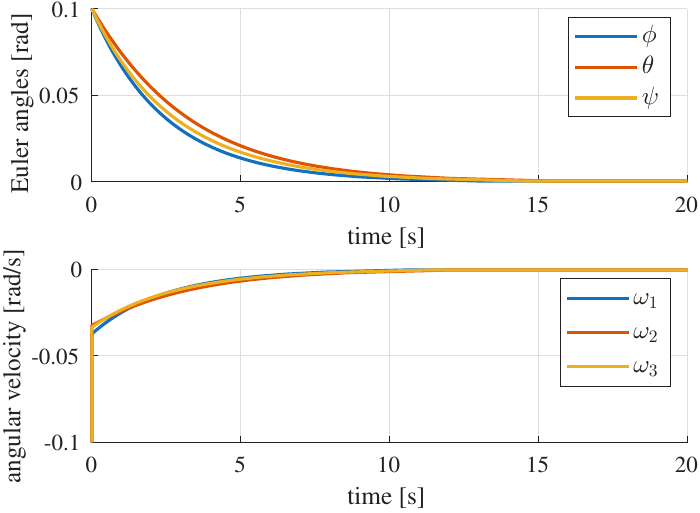}  
    \vspace{-3mm}
    \caption{State response of the closed-loop system in numerical simulation.}
    \label{fig:quadrotor_closed_sim}
\end{figure}

For the closed-loop system $x^{+} = A(x)x + B(x)Kx + w$, we set each component of $w \in \real^{6}$ as a random variable uniformly distributed in the interval $[-0.0002,0.0002]$. 
The simulation result is depicted in Figure~\ref{fig:quadrotor_closed_dis_sim}, which is consistent with \eqref{eq:R2D_result_data} in Theorem~\ref{thm:dataDrivenRobust2Dist}, demonstrating the robustness of the controller against disturbances.

\begin{figure}[t]
    \centering
    \includegraphics[width=0.80\linewidth]{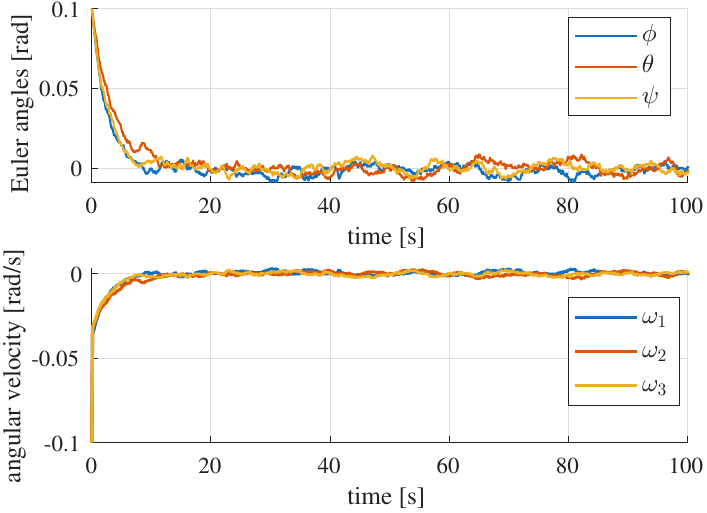}  
    \vspace{-3mm}
    \caption{State response of the closed-loop system under disturbance.}
    \label{fig:quadrotor_closed_dis_sim}
\end{figure}

\subsection{Physical Experiment}

In this subsection, the data-driven controller is validated through a physical experiment under ROS and PX4 flight control framework \cite{XTDrone}.
The quadrotor aerial vehicle is mounted on a 3-degree-of-freedom rotational platform, as shown in Figure~\ref{fig:exp_setup}; thus, it follows the same attitude dynamics as in Section~\ref{sec:sim_num}.
Under the PX4 framework, the control input $u_i$ for $i = 1,2,3$ is proportional to the torque, and the relationship is given by
\begin{align*}
    \bmat{M_1 \\ M_2 \\ M_3} = 
    \bmat{c_1 & 0 & 0 \\ 0 & c_2 & 0 \\ 0 & 0 & c_3}
    \bmat{u_1 \\ u_2 \\ u_3}.
\end{align*}
The values of $J$, $c_1$, $c_2$ and $c_3$ are unknown to us.

\begin{figure}[t]
    \centering
    \includegraphics[width=0.85\linewidth]{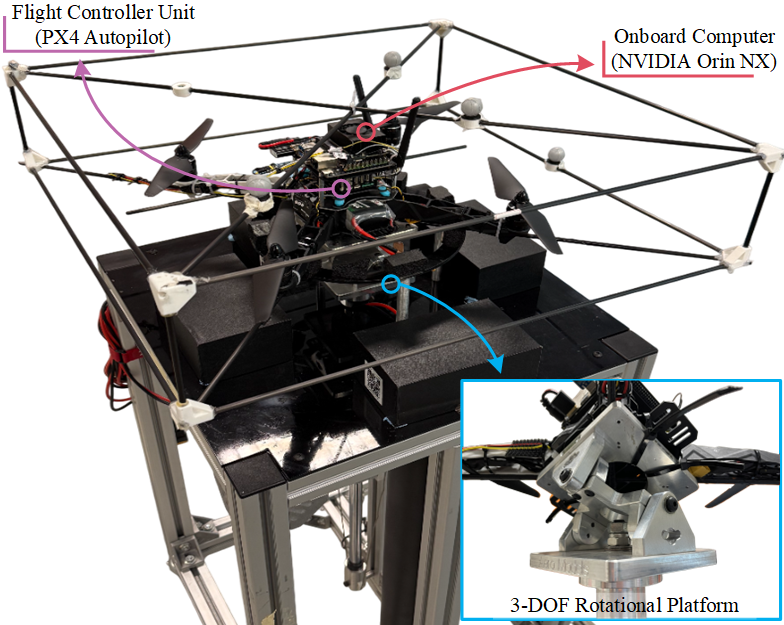}  
    \vspace{-3mm}
    \caption{Experimental setup. The quadrotor aerial vehicle is mounted on a 3-degree-of-freedom (3-DOF) rotational platform.}
    \label{fig:exp_setup}
\end{figure}

We discretize the continuous-time system using the forward Euler method with a sampling time $h = 0.002$. The resulting discrete-time system, along with the known vector-valued continuous function library, is the same as in Section~\ref{sec:sim_num}, except for $B(x)$:
\begin{align*}
    B(x) = \smat{
        0 & 0 & 0 \\
        0 & 0 & 0 \\
        0 & 0 & 0 \\
        h c_1 / J_{11} & 0 & 0 \\
        0 & h c_2 / J_{22} & 0 \\
        0 & 0 & h c_3 / J_{33}
    }.
\end{align*}

\begin{figure}[t]
    \centering
    \includegraphics[width=\linewidth]{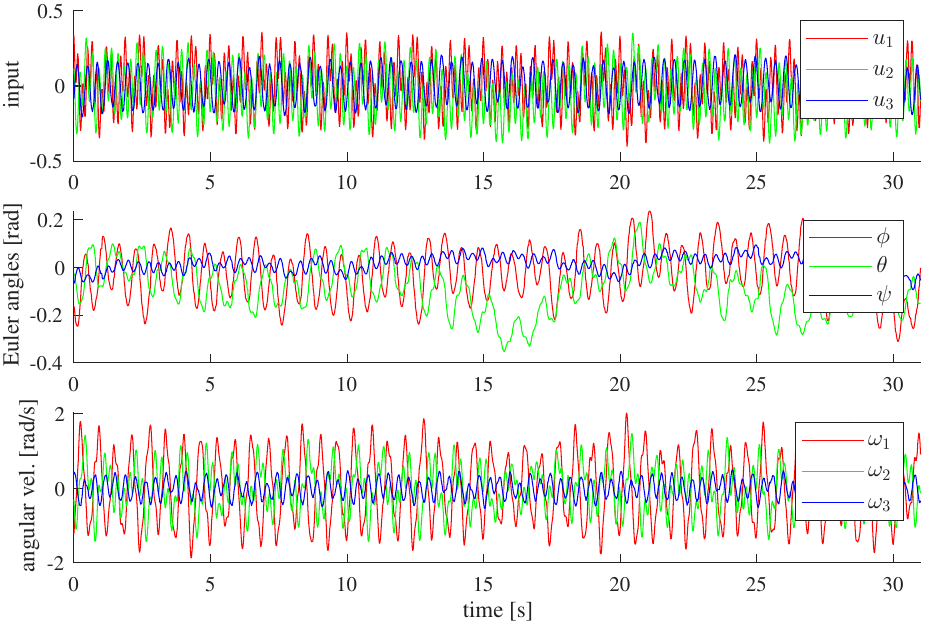}  
    \vspace{-6mm}
    \caption{Experimental data. 
    First subfigure is the experimental input $u_d$ as in \eqref{eq:nSYSrewrite_distur}, which is generated by a controller with degraded performance.
    The second and third subfigures show the corresponding state responses, namely the Euler angles and angular velocities, respectively.  
    }
    \label{fig:exp_data}
\end{figure}

The collected experimental data is depicted in Figure~\ref{fig:exp_data}, where the input $u_d$ as in \eqref{eq:nSYSrewrite_distur} is generated by a controller with degraded performance.
Store the data samples into $X_0$, $X_1$ and $U_0 $ as in \eqref{eq:dataMatrix}.
The process noise $D_0$ as in \eqref{eq:dataMatrix_D0} includes platform friction, sensor noise, motor nonlinearities, and aerodynamic disturbances, and is bounded by $\Theta = 0.01I_6$ as in \eqref{eq:DisturbanceEnergy}.
Take $r = 0.1$ for $\mathcal{B}_r$. 
By \eqref{eq:min_max_qji}, we have the set $\bar{\mathcal{Q}}$ as in \eqref{eq:bar_set_Q}.
Solving \eqref{eq:petersenLMIGxu} in Theorem~\ref{thm:dataDrivenSolution} via CVX yields
\[  
    K = \smat{
   -0.5722&    0.0029&   -0.1327&   -0.1687&   -0.0068&    0.1252\\
   -0.0531&    0.8365&    0.5700&    0.0129&    0.4564&   -0.1740\\
    0.0210&    0.1503&    1.4681&    0.0050&   -0.0164&    1.3181
    }.
\]
According to \eqref{eq:LMI_r0_data_bar}, 
an estimate of ROA, $\mathcal{B}_{\bar{r}_0}$, has a radius of $\bar{r}_0 = 0.0032264$.
The state response of the closed-loop system is illustrated in Figure~\ref{fig:quadrotor_state_response}. 
We note that various imperfections exist in the physical implementation, such as friction, noise, motor nonlinearities, and aerodynamic disturbances.
Consequently, the closed-loop states remain in the neighborhood of the origin, which aligns with \eqref{eq:R2D_result_data} in Theorem~\ref{thm:dataDrivenRobust2Dist}, indicating the robustness of the data-driven controller against disturbances. 

Additional demonstrations, which are not included here due to space limitations, are available at our project website \cite{github:AxxBxu} for interested readers.

\begin{figure}[t]
    \centering
    \subfigure{\includegraphics[width=0.80\linewidth]{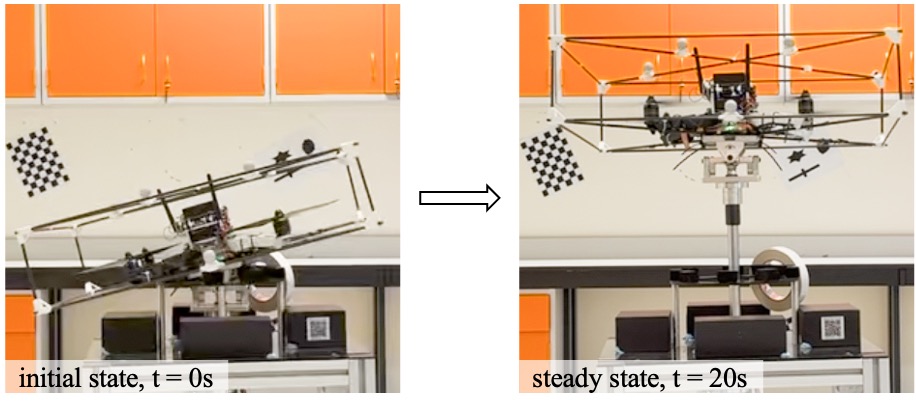}}
    \subfigure{\includegraphics[width=\linewidth]{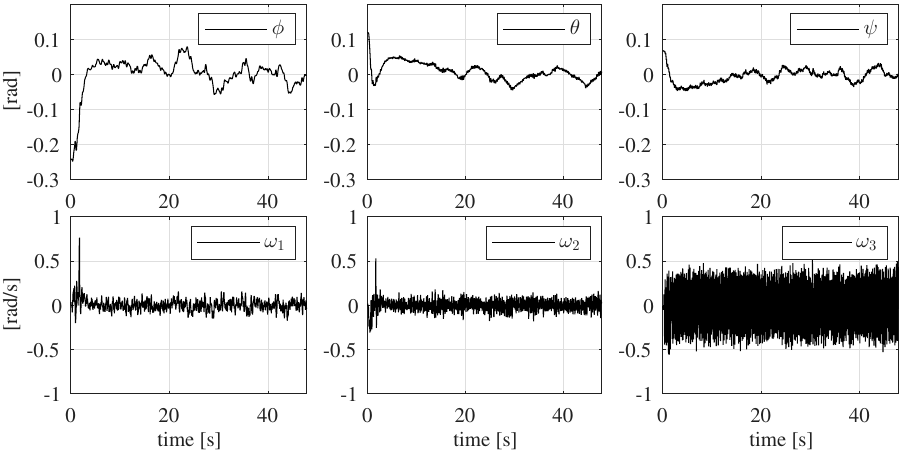}} 
    \caption{State response of the closed-loop system.}
    \label{fig:quadrotor_state_response}
\end{figure}

\section{Conclusion}

In this paper, we have synthesized a controller for a nonlinear system directly from noisy data with rigorous end-to-end guarantees.
Specifically, the state-dependent representation of a nonlinear system was utilized to derive model-based conditions for controller design.
These conditions were subsequently adapted to the data-driven case.
The resulting direct data-driven controller, with a guaranteed region of attraction, can stabilize all data-consistent nonlinear systems induced by data noise. 
Moreover, the closed-loop properties, namely robustness against disturbances and region of attraction under input saturation, were rigorously characterized based solely on data.
This work focused on the state feedback stabilization problem. 
Future research includes nonlinear optimal control, output feedback control, and output regulation.

\bibliographystyle{IEEEtran}

\bibliography{mybib}

\end{document}